\newcommand{\cmark}{\textcolor{green}{\ding{51}}}%
\newcommand{\xmark}{\textcolor{red}{\ding{55}}}%
\newcommand{\bvec}[1]{\mathbf{#1}}
\newcommand{\vr}{\bvec{r}}
\newcommand{\vtheta}{\boldsymbol \theta}
\newcommand{\polylog}{\operatorname{poly}\log}
\DeclareMathOperator*{\argmax}{arg\,max}
\DeclareMathOperator*{\argmin}{arg\,min}
\newcommand{\mc}[1]{\mathcal{#1}}
\newcommand{\wt}[1]{\widetilde{#1}}
\newcommand{\innerp}[2]{\left\langle #1 \vert #2 \right\rangle}
\newcommand{\abs}[1]{\left\lvert#1\right\rvert}
\newcommand{\rd}{\,\mathrm{d}}
\newcommand{\Or}{\mathcal{O}}
\newcommand{\CC}{\mathbb{C}}
\newcommand{\ZZ}{\mathbb{Z}}
\newtheorem{thm}{\protect\theoremname}
\newtheorem{lem}[thm]{\protect\lemmaname}
\newtheorem{rem}[thm]{\protect\remarkname}
\newtheorem{prop}[thm]{\protect\propositionname}
\providecommand{\definitionname}{Definition}
\providecommand{\assumptionname}{Assumption}
\providecommand{\corollaryname}{Corollary}
\providecommand{\lemmaname}{Lemma}
\providecommand{\propositionname}{Proposition}
\providecommand{\remarkname}{Remark}
\providecommand{\theoremname}{Theorem}
\newenvironment{breakablealgorithm}
  {
   \begin{center}
     \refstepcounter{algorithm}
     \hrule height.8pt depth0pt \kern2pt
     \renewcommand{\caption}[2][\relax]{
       {\raggedright\textbf{\fname@algorithm~\thealgorithm} ##2\par}%
       \ifx\relax##1\relax 
         \addcontentsline{loa}{algorithm}{\protect\numberline{\thealgorithm}##2}%
       \else 
         \addcontentsline{loa}{algorithm}{\protect\numberline{\thealgorithm}##1}%
       \fi
       \kern2pt\hrule\kern2pt
     }
  }{
     \kern2pt\hrule\relax
   \end{center}
  }
\tikzset{%
  highlight/.style={rectangle,rounded corners,fill=blue!15,draw,fill opacity=0.3,thick,inner sep=0pt}
}
\begin{document}

\title{Simultaneous estimation of multiple eigenvalues with short-depth quantum circuit on early fault-tolerant quantum computers}
\newcommand{\DeptMath}{Department of Mathematics, University of California, Berkeley, CA 94720, USA}
\newcommand{\LBLMath}{Applied Mathematics and Computational Research Division, Lawrence Berkeley National Laboratory, Berkeley, CA 94720, USA}
\newcommand{\BCQM}{Berkeley Center for Quantum Information and Computation, Berkeley, CA 94720, USA}
\newcommand{\CIQC}{Challenge Institute of Quantum Computation, University
of California, Berkeley, CA 94720, USA}

\author{Zhiyan Ding}
\email{zding.m@math.berkeley.edu}
\affiliation{\DeptMath}
\author{Lin Lin}
\email{linlin@math.berkeley.edu}
\affiliation{\DeptMath}
\affiliation{\LBLMath}
\affiliation{\CIQC}
\maketitle

\begin{abstract}
We introduce a multi-modal, multi-level quantum complex exponential least squares (MM-QCELS) method to simultaneously estimate multiple  eigenvalues of a quantum Hamiltonian on early fault-tolerant quantum computers. Our theoretical analysis demonstrates that the algorithm exhibits Heisenberg-limited scaling in terms of circuit depth and total cost. Notably, the proposed quantum circuit utilizes just one ancilla qubit, and with appropriate initial state conditions, it achieves significantly shorter circuit depths compared to circuits based on quantum phase estimation (QPE). Numerical results suggest that compared to QPE, the circuit depth can be reduced by around two orders of magnitude under several settings for estimating ground-state and excited-state energies of certain quantum systems.
\end{abstract}

\section{Introduction}\label{sec:intro}

The estimation of ground-state energies and excited-state energies of a Hamiltonian is a fundamental problem in quantum physics with numerous practical applications, such as in the design of new materials, drug discovery, and optimization problems. While the ground-state is often the state of interest for many quantum systems, excited-state energies also provide crucial information for understanding the electronic and optical properties of materials. Classical computers are limited in their ability to accurately estimate these energies for large-scale systems, and quantum computers have the potential to provide a significant speedup in solving such problems. Therefore, developing efficient and accurate methods for estimating ground-state and excited-state energies on quantum computers has become a major area of research in quantum information science.

When estimating multiple eigenvalues on quantum computers, there are  two different strategies to consider. The first method involves preparing a variety of initial states that approximate different target eigenstates, and then estimating each eigenvalue one by one. The second method is to prepare a \textit{single} initial state that has nontrivial overlap with all  eigenstates of interest and estimate the eigenvalues simultaneously. The effectiveness of each approach depends on the assumptions and qualities of the initial state. This paper concerns the second approach. 
Given a quantum Hamiltonian $H\in\mathbb{C}^{M\times M}$, we assume that we can prepare an initial quantum state $\ket{\psi}$ with $K$ \textit{dominant} modes. Specifically, let $\left\{(\lambda_m,\ket{\psi_m})\right\}^M_{m=1}$ denote the eigenvalue and eigenvector pairs of $H$, and define $p_m=\left|\innerp{\psi_m}{\psi}\right|^2$ as the overlap between the initial state and the $m$-th eigenvector.
We assume that we can choose a set $\mathcal{D}\subset \{1,2,\cdots,M\}$ satisfying $\abs{\mc{D}}=K$, and  $p_m=\Omega(R^{(K)})$ for any $m\in\mathcal{D}$. Here $R^{(K)}=\sum_{m'\in\mathcal{D}^c}p_{m'}$ is called the residual overlap, and $\mathcal{D}^c=\{1,2,\cdots,M\}\setminus\mathcal{D}$. The eigenvalues $\{\lambda_m\}_{m\in \mc{D}}$ are called the dominant eigenvalues of $H$ with respect to the initial state $\ket{\psi}$ (or dominant eigenvalues for short), and the associated eigenvectors are called the dominant eigenvectors (or dominant modes). We assume $\{\lambda_m\}_{m\in\mathcal{D}}\subset[-\pi,\pi]$ for simplicity. Using an oracle access to the Hamiltonian evolution operators $e^{-i t H}$ for any $t\in\mathbb{R}$, we introduce an efficient quantum algorithm to estimate these dominant eigenvalues. 
We quantify the efficiency of a quantum algorithm by means of the maximal runtime denoted by $T_{\max}$,
and the total runtime $T_{\mathrm{total}}$. Assuming an algorithm needs to run a set of  Hamiltonian evolution operators $\{\exp(-it_nH)\}^N_{n=1}$, then the maximal runtime is $T_{\max}=\max_{1\leq n\leq N}|t_n|$ and the total runtime is $T_{\mathrm{total}}=\sum^N_{n=1}|t_n|$. Here, $T_{\max}$ and $T_{\mathrm{total}}$ approximately measure the circuit depth and the total cost of the algorithm, respectively, in a way that is oblivious to the details in implementing the Hamiltonian evolution operator $e^{-i t H}$.

Our algorithm satisfies the following properties:
\begin{itemize}
    \item[(1)] Allow a nonzero residual overlap: $R^{(K)}=\sum_{m'\in\mathcal{D}^c}p_{m'}>0$.
    \item[(2)] Maintain  Heisenberg-limited scaling~\cite{Gio2011},\cite{Zwi2010},\cite{Zwi2012}:  To estimate all dominant eigenvalues to precision $\epsilon$ with probability $1-\eta$, the total cost is $\Or(\epsilon^{-1}\polylog(\epsilon^{-1}\eta^{-1}))$;
    \item[(3)] Use one ancilla qubit.
    \item[(4)] Reduce the circuit depth: the maximal runtime can be (much) lower than $\pi/\epsilon$, especially when the ratio of residual overlap to the minimum overlap in the dominant set $R^{(K)}/(\min_{m\in\mathcal{D}}p_m)$ approaches zero.
    \item[(5)] Use the information of the spectral gap to further reduce maximal runtime: in the presence of a spectral gap $\Delta$, when $\epsilon\ll \Delta$, the maximal runtime can be independent of $\epsilon$ as $\widetilde{\mathcal{O}}(\Delta^{-1})$, and in this case, the total runtime is $\widetilde{\mathcal{O}}(\epsilon^{-2})$
(see detail in Remark \ref{re:add}).
\end{itemize}

Our algorithm is designed to perform only classical postprocessing on the quantum data that is generated by the Hadamard test circuit, which uses one ancilla qubit. It is particularly well-suited for early fault-tolerant quantum devices that may be limited in the number of ancilla qubits and  maximal coherence times.

The structure of this paper is organized as follows: In \cref{sec:main_idea}, we present the main idea of our method and provide a brief summary of related works. Next, in \cref{sec:main_method}, we describe the main algorithm and provide the complexity results. The proof is included in the Appendix. Finally, we showcase the efficiency of our algorithm through numerical examples in \cref{sec:ns}.

\section{Main idea and related work}\label{sec:main_idea}

In order to illustrate the main idea of the algorithm, we assume that we can use a quantum circuit (see \cref{sec:mqc}) to accurately estimate
\begin{equation}\label{eqn:signal}
\left\langle\psi\right|\exp(-it H)\ket{\psi}=\sum^{M}_{m=1}p_m\exp(-i \lambda_m t)\,
\end{equation}
for any $t\in\mathbb{R}$, where $t$ is drawn from a probability distribution with a symmetric density $a(t)$, i.e., $a(t)\geq 0$, $a(t)=a(-t)$, and $\|a(t)\|_{L^1}=\int_{-\infty}^{\infty} a(t) \mathrm{d} t =1$. In our implementation, to estimate \eqref{eqn:signal}, we first use a classical computer to randomly sample $t$ from distribution $a(t)$. Then, we execute the Hadamard test circuit (\cref{fig:qc}) on a quantum computer multiple times and average the output of measurement to estimate \eqref{eqn:signal}. We emphasize that the random selection of $t$ plays an important role in improving the efficiency of our algorithm (see \cref{sec:related}).

Define the Fourier transform of the probability density \begin{equation}
F(x)=\int^\infty_{-\infty}a(t)\exp(ixt)\rd t\,,
\end{equation}
which is a real-valued even function. As detailed in the following explanation, our approach requires carefully selecting an appropriate distribution $a(t)$ to ensure that the resulting filter function $F(x)$ reaches its maximum at $x=0$ and decay rapidly as $|x|$ increases. There are several choices of $a(t)$ such that $F(x)$ concentrates around $x=0$. In this paper, we choose $a(t)$ as the density function of a truncated Gaussian function:\begin{equation}\label{eqn:a_T}
    a(t)=\frac{1}{A_\gamma \sqrt{2\pi}T}\exp\left(-\frac{t^2}{2T^2}\right)\textbf{1}_{[-\gamma T,\gamma T]}(t), \quad \mbox{for}\quad T,\gamma>0\,,
\end{equation} 
where the normalization constant $A_\gamma=\int^{\gamma}_{-\gamma}\frac{1}{\sqrt{2\pi}}\exp\left(-\frac{t^2}{2}\right)dt$. The selection of $a(t)$ is guided by the observation that the Fourier transform of a Gaussian function remains a Gaussian function. More specifically, when $\gamma=\infty$, we have $F(x)=\exp\left(-\frac{T^2x^2}{2}\right)$, which attains its maximal value at $x=0$ and exponentially decays to zero with respect to $T|x|$. In addition, the use of the truncated Gaussian ensures that the maximal runtime never exceeds $\gamma T$. We may also use the notation $a_T(t)=a(t), F_T(x)=F(x)$ to emphasize the dependence on $T$.

From this we can define an \textit{ideal} loss function as 
\begin{equation}\label{eqn:loss_multi_modal_perfect}
\mathcal{L}_{K}\left(\{r_k\}^{K}_{k=1},\{\theta_k\}^{K}_{k=1}\right)=\int^\infty_{-\infty}a(t)\left|\sum^{M}_{m=1}p_m\exp(-i \lambda_m t)-\sum^{K}_{k=1}r_k\exp(-i\theta_k t)\right|^2\rd t\,.
\end{equation}
Then we can we estimate the dominant eigenvalues $\{\lambda_m\}_{m\in\mathcal{D}}$ by solving the optimization problem
\begin{equation}\label{eqn:op_perfect}
\left(\{r^*_k\}^{K}_{k=1},\{\theta^*_k\}^{K}_{k=1}\right)=\argmin_{r_k\in\CC,\theta_k\in\mathbb{R}}\mathcal{L}_{K}\left(\{r_k\}^{K}_{k=1},\{\theta_k\}^{K}_{k=1}\right)\,.
\end{equation}

Intuitively, if the residual overlap $R^{(K)}=\sum_{m'\in\mathcal{D}^c}p_{m'}$ is small enough, to minimize the loss function, $\{(r^*_k,\theta^*_k)\}^{K}_{k=1}$ should be close to $\{(p_m,\lambda_m)\}_{m\in\mathcal{D}}$ so that the optimizer $\sum^{K}_{k=1}r^*_k\exp(-i\theta^*_k t)$ can eliminate the dominant term $\sum_{m\in\mathcal{D}}p_m\exp(-i\lambda_m t)$. For example, consider the case $K=1$, we can obtain a closed form expression for the $r_1$ variable, and the optimization problem in \cref{eqn:op_perfect} can be rewritten as
\begin{equation}
r^*_1=\sum^M_{m=1}p_m\int^\infty_{-\infty}a(t)\exp(i(\theta^*_1-\lambda_m)t)\rd t,
\end{equation}
and
\begin{equation}
\begin{aligned}
\theta^*_1&=\argmax_{\theta\in\mathbb{R}}\left|\sum^M_{m=1}p_m\int^\infty_{-\infty}a(t)\exp(i(\theta-\lambda_m)t)\rd t\right|^2=\argmax_{\theta\in\mathbb{R}}\left|\sum^M_{m=1}p_mF(\theta-\lambda_m)\right|\,.
\end{aligned}
\end{equation}
Since $F(x)$ is a function that concentrates around $x=0$, we expect that $\theta^*_1\approx \lambda_{m^*}$, where $m^*=\argmax_mp_m$. 

In practice, we can only generate an approximate value of $\left\langle\psi\right|\exp(-it H)\ket{\psi}$ up to statistical errors for a finite number of time steps $t$. As a result, the loss function we use in our optimization problem is only an approximation of the ideal loss function presented in \cref{eqn:loss_multi_modal_perfect}. The main objective of this work is to generalize this idea to the estimation of multiple eigenvalues, and to rigorously implement this idea and control the complexity.

Using the concentration property of $F(x)$, we can demonstrate that the solution of the optimization problem is within $\delta/T$ of the dominant eigenvalues $\lambda_m$. Therefore by choosing $T_{\max}=\delta/\epsilon$, the parameter $\delta$ directly affects the circuit depth, and can be selected to be arbitrarily small if $R^{(K)}$ is small enough. Additionally, the solution of the optimization problem presented in \cref{eqn:op_perfect} is robust to the measurement noise, which means that we do not need to generate a large number of data points and samples to construct the approximated optimization problem. This ensures that the algorithm can achieve the Heisenberg limit, i.e., $T_{\mathrm{total}}=\Or(1/\epsilon)$ (see \cref{sec:informal_result}).

\subsection{Related work}\label{sec:related}

If the initial quantum state $\ket{\psi}$ is a precise eigenvector of $H$ with a single dominant eigenvalue, such as $\lambda_1$, the Hadamard test algorithm is the simplest algorithm for estimating this eigenvalue with $\epsilon$-accuracy. Given any real number $T\neq 0$, we can run the Hadmard test circuits (\cref{fig:qc}) several times to generate an estimate $Z_T\approx \left\langle\psi_1\right|\exp(-iT H)\ket{\psi_1}=\exp(-iT\lambda_1)$. We then define the approximation $\lambda^*_1=-\frac{\mathrm{arg} Z_T}{T}$. Because we assume $
\lambda_1\in[-\pi,\pi]$, it suffices to choose $T=1$ and $\lambda^*_1$ can be  arbitrarily close to $\lambda_1$ by increasing the number of measurements. This observation implies that the maximal running time $T
_{\max}=1$ suffices to achieve arbitrary precision when employing the Hadamard test algorithm. Although the Hadamard test algorithm only utilizes a short circuit depth, it has several limitations: (1) the initial state $\ket{\psi}$ must be an exact eigenstate, (2) the total runtime $T_{\mathrm{total}}$ does not satisfy the Heisenberg-limited scaling and is $\Omega(\epsilon^{-2})$, and (3) it cannot estimate multiple eigenvalues. The first two limitations can be addressed by using quantum phase estimation (QPE) and its variants \cite{KitaevShenVyalyi2002,PhysRevA.75.012328,NagajWocjanZhang2009,PhysRevLett.103.220502,BerryHiggins2009,HigginsBerryEtAl2007,GriffithsNiu1996}. The textbook version of QPE~\cite{NielsenChuang2000} uses multiple ancilla
qubits to store the phase and relies on inverse quantum Fourier transform (QFT). However, it can be difficult to employ the textbook version of QPE for multiple eigenvalue phase estimation due to the difficulty in handling spurious eigenvalues. In addition, although QPE can start from an imperfect eigenstate and saturates the Heisenberg-limited scaling, it requires the maximal runtime $T_{\max}$ at least $\pi/\epsilon$ to achieve $\epsilon$ accuracy~\cite[Section 5.2.1]{NielsenChuang2000}.

Ref.~\cite{Dutkiewicz2022heisenberglimited} recently proposed a multiple eigenvalue phase estimation method that satisfies the Heisenberg limit scaling without any spectral gap assumptions.  However, the theoretical analysis assumes that all non-dominant modes vanish~\cite[Definition 3.1 and Theorem 4.5]{Dutkiewicz2022heisenberglimited},  i.e., $p_{m'}=0$ for any $m'\in\mathcal{D}^c$. Additionally, it has not yet been demonstrated that the method satisfies the short-depth property (4) described at the beginning of the paper.

An alternative way to understand the phase estimation problem is to view it as a sparse Fourier transform problem in classical signal processing. In recent years, the sparse Fourier transform problem has been extensively studied in both the discrete~\cite{6576276,https://doi.org/10.1002/cpa.21455,7178651,DUARTE2013111,10.1007/978-3-642-32512-0_6,10.1145/2897518.2897650} and continuous~\cite{price_2015,10.1145/3406325.3451078,doi:10.1137/1.9781611977554.ch176} settings. In particular, Ref.~\cite{price_2015} concerns the recovery of the Fourier frequency of a continuous signal in the low noise setting. Our Theorem \ref{thm:gaussian} shares similarities with \cite[Theorem 1.1]{price_2015}. However, in our approach, each data point is generated by running the Hadamard test \emph{once}, resulting in significantly higher noise level than that  in \cite{price_2015}.  Our optimization based method may also be easier to implement in practice. Very recently, Ref.~\cite{ni2023lowdepth_2} introduces a robust multiple-phase estimation (RMPE) algorithm that can be considered as an extension of the multi-level robust phase estimation method (RPE) proposed in an earlier work \cite{ni2023lowdepth}. By leveraging a multi-level signal processing technique \cite{li2023note}, the RMPE algorithm can estimate multiple eigenvalues with Heisenberg-limited scaling without spectral gap assumptions. Under an additional spectral gap assumption, and with a different technique called ESPRIT~\cite{9000636} originally developed for super-resolution of signals and capable of efficiently estimating multiple eigenvalues in polynomial time~\cite{Stroeks_2022}, the resulting algorithm can satisfy the short-circuit depth property when the residual overlap $R^{(K)}/(\min_{m\in\mathcal{D}}p_m)$ is small.

Quantum subspace diagonalization (QSD) methods, quantum Krylov methods, and
matrix pencil methods~\cite{Cortes2021QuantumKS,Huggins_2020,PRXQuantum.3.020323,PhysRevA.95.042308,Motta_2020,Parrish2019QuantumFD,PRXQuantum.2.010333,doi:10.1021/acs.jctc.9b01125,OBrienTarasinskiTerhal2019} estimates the eigenvalues by solving certain projected  eigenvalue problems or singular value problems, and have been used to estimate ground-state and excited-state energies in a number of scenarios. 
Classical perturbation theories suggest that  such projected problems can be highly ill-conditioned and the solutions are sensitive to noise. However, it has been observed that these methods can perform better than the pessimistic theoretical predictions. Recently, Ref.~\cite{doi:10.1137/21M145954X} provided a theoretical analysis explaining this phenomenon in the context of quantum subspace diagonalization methods used for computing the smallest eigenvalue. Nevertheless, the error of these methods generally increase with respect to the number of data points, which is related to the dimension of the projected matrix problem. In contrast, the error of our algorithm  \textit{decreases} with respect to the number of data points, making it much more robust to measurement noise. It is worth noting that the sharp complexity analysis of these methods (especially, with respect to the dimension of the projected matrix problem) remains open.

The loss function \eqref{eqn:loss_multi_modal_perfect} utilized in this work is inspired by a recently developed subroutine called quantum complex exponential least squares (QCELS) \cite{DingLin2023}. QCELS employs the same quantum circuit as depicted in \cref{fig:qc} to generate data and uses an optimization-based approach for estimating a single dominant eigenvalue. 
Although our method is also based on solving an optimization problem, it is important for us to emphasize that directly extending QCELS to estimate multiple eigenvalues \textbf{does not} yield satisfactory results. We detail the differences in the following:
\begin{itemize}
    \item QCELS samples the time steps $t$ on a uniform grid as $t_n=\frac{n}{N}T$ for $0\leq n\leq N-1$. In order to satisfy the Heisenberg-limited scaling, Ref. \cite{DingLin2023} proposed a multi-level QCELS algorithm that gradually increases the step size $\tau=T/N$ as one refines the estimate to the eigenvalue.
However, such an approach can result in aliasing problems for multiple eigenvalue estimation since it may not be possible to distinguish $\exp(-i\lambda_{m_1}t_n)$ and $\exp(-i\lambda_{m_2}t_n)$ in the loss function if $(\lambda_{m_1}-\lambda_{m_2})T/N=2\pi q$ for some $m_1\neq m_2$ and $q\in \mathbb{Z}$. To overcome this difficulty, we combine the random sampling strategy~\cite{LinTong2022,SubasiSommaOrsucci2019} with QCELS and sample $t_n$'s randomly from a probability density $a(t)$. 
\item For estimating a single dominant eigenvalue, our method also achieves improved theoretical results. In Ref. \cite{DingLin2023}, QCELS requires $p_1>0.71$ and the circuit depth satisfies $T_{\max}=\delta/\epsilon$, where $\delta=\Theta(\sqrt{1-p_1})$. By applying our method to the single eigenvalue estimation problem, our analysis demonstrates that the method works for any $p_1>0.5$, and the constant $\delta$ can be improved to $\delta=\Theta(1-p_1)$. This improvement tightens the bound for the circuit depth. For clarity, the disparities in the constant dependence can be observed in \cref{fig:dt}.
\begin{figure}[H]
\centering
\begin{center}
\includegraphics[width=0.6\textwidth]{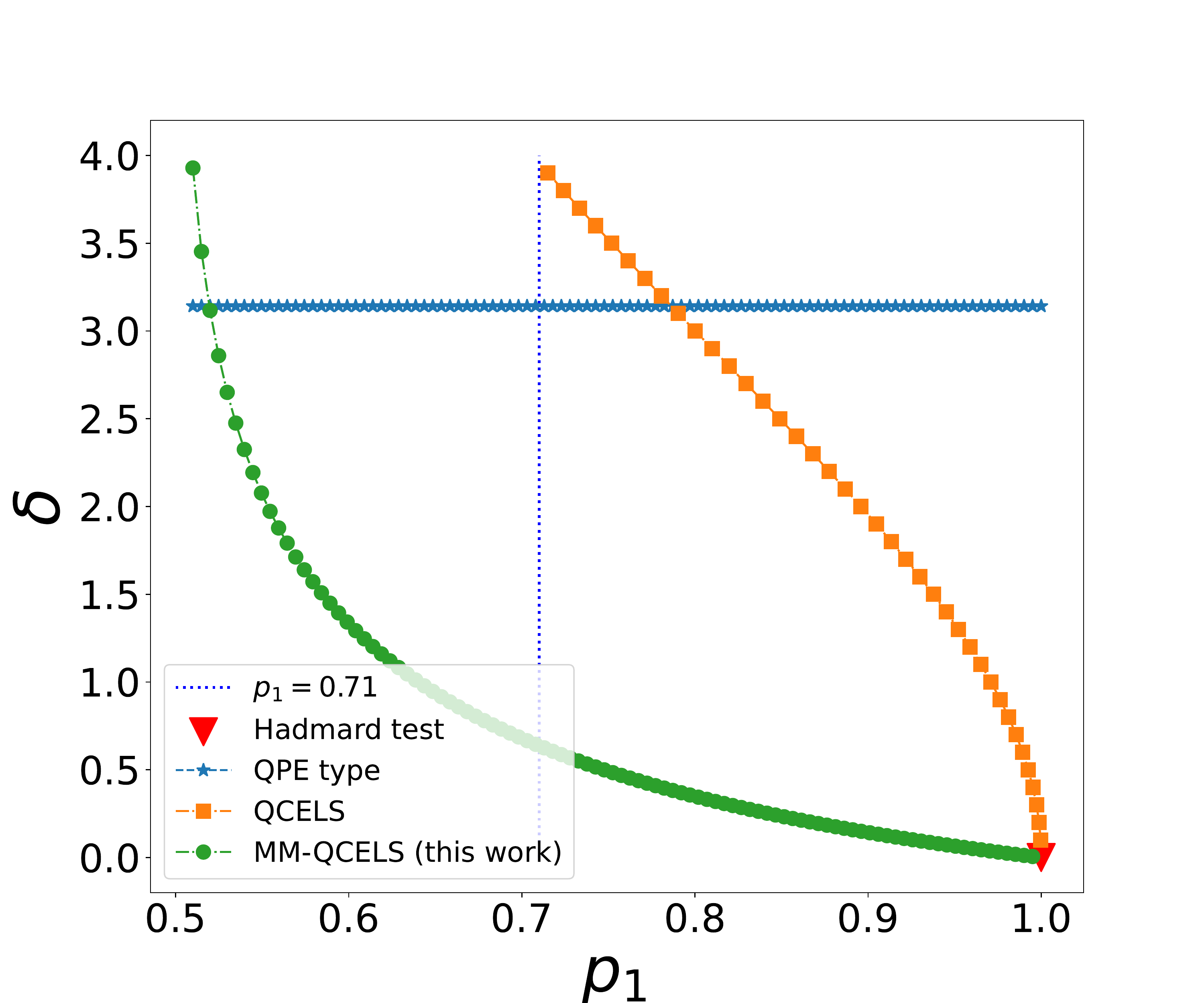}
\end{center}
\caption{A comparison
of the theoretical upper bound of $\delta=T_{\max}\epsilon$ for QCELS, QPE-type algorithm, and MM-QCELS for ground state energy estimation. The Hadamard test is only applicable when $p_1=1$ and the theoretical result in \cite{DingLin2023} requires $p_1>0.71$ for QCELS.}
\label{fig:dt}
\end{figure}
\item 
As previously noted, QCELS~\cite{DingLin2023} is formulated as a multi-level algorithm to attain Heisenberg-limited scaling and mitigate the aliasing problem. In our work, we also develop MM-QCELS as a multi-level method to help us control the random noise throughout the entire optimization domain. However, unlike QCELS~\cite{DingLin2023}, our approach can be simplified to solve the optimization problem with suitable parameters only once to approximate the dominant eigenvalues, resulting in a one-level algorithm. We put detailed discussion in the first comment of Remark \ref{re:add}.
\item In the original QCELS \cite{DingLin2023}, it was demonstrated that the maximum runtime can be further decreased to $T_{\max}=\wt{\Theta}(1/(\lambda_2-\lambda_1))$ by initially employing a ground state filter~\cite{LinTong2022} to enhance $p_1$. Our approach extends this characteristic to the estimation of multiple eigenvalues without requiring the use of a ground-state filter. This significantly simplifies the post-processing procedure. See the second comment of Remark \ref{re:add}.
\end{itemize}

We include a summary of the comparison among various methods for estimating multiple eigenvalues in Table \ref{table:1}.
\\
\begin{table}[h!]
\centering
\begin{tabular}{c|c c c c c} 
 \hline
 \hline Algorithms & \multicolumn{5}{c}{Requirement}\\  
 & (1) Nonzero& (2) Heisenberg& (3)Single & (4) Short& (5) Use\\  
 & Residual & limit& ancilla& depth& gap\\
 \hline
 QSD type~(\cite{Cortes2021QuantumKS}-\cite{doi:10.1137/21M145954X})&\cmark&?&?&?&?\\
 \hline
 QEEA~\cite{Somma2019}&\cmark&\xmark&\cmark&?&?\\
 \hline
 ESPRIT~\cite{Stroeks_2022}&?&?&\cmark&?&?\\
 \hline
 Ref.~\cite{Dutkiewicz2022heisenberglimited}  & ? & \cmark &\cmark & ?&\xmark\\
 \hline
 Ref.~\cite[Theorem III.5]{ni2023lowdepth_2}  & \cmark & \cmark &\cmark & \xmark& \xmark\\
 \hline
  Ref.~\cite[Theorem IV.2]{ni2023lowdepth_2} & \cmark & \cmark &\cmark & \cmark& ?\\
 \hline
 MM-QCELS (this work)& \cmark & \cmark &\cmark & \cmark& \cmark\\
 \hline
 \hline
\end{tabular}
\caption{Comparison of various methods for estimating multiple eigenvalues. 
}
\label{table:1}
\end{table}

\section{Main method and informal complexity result}\label{sec:main_method}
This section begins by presenting a quantum circuit for data generation. Subsequently, we use these data points to formulate an approximated optimization problem and propose the main algorithm. Finally, we provide an intuitive complexity analysis for our algorithm, followed by its rigorous statement.

\subsection{Generating  data from quantum circuit}\label{sec:mqc}

The quantum circuit used in this paper is the same as that used in the Hadamard test. In \cref{fig:qc}, we may
\begin{itemize}
    \item[(1)] Set $W=I$, measure the ancilla qubit and define a random variable $X_n$ such that $X_n=1$ if the outcome is $0$ and $X_n=-1$ if the outcome is $1$. Then 
\begin{equation}\label{eqn:X}
\mathbb{E}(X_n)=\mathrm{Re}\left(\left\langle\psi\right|\exp(-i t H)\ket{\psi}\right)\,.
\end{equation}

\item[(2)] Set $W=S^\dagger$ ($S$ is the phase gate), measure the ancilla qubit and define a random variable $Y_n$ such that $Y_n=1$ if the outcome is $0$ and $Y_n=-1$ if the outcome is $1$. Then 
\begin{equation}\label{eqn:Y}
\mathbb{E}(Y_n)=\mathrm{Im}\left(\left\langle\psi\right|\exp(-i t H)\ket{\psi}\right)\,.
\end{equation}
\end{itemize}

We assume an oracle access to the Hamiltonian evolution operator $e^{-i t H}$ for any $t\in\mathbb{R}$. This assumption is not overly restrictive. For instance, if we can approximate $e^{-i \tau H}$ for some desired $\tau>0$ using the Trotter method of a certain order (the choice of $\tau$ and the order of the Trotter splitting is problem dependent and should balance between efficiency and accuracy), we may express $e^{-i t H}=\left(e^{-i \tau H}\right)^p e^{-i \tau' H}$, where $t=p\tau+\tau'$, $p\in\ZZ$, and $\abs{\tau'}<\tau$. The cost for implementing $e^{-i \tau' H}$ should be no larger than that of implementing $e^{-i \tau H}$. The Trotter error can be systematically controlled and this is an orthogonal issue to the type of error considered in this paper (see, e.g., \cite[Appendix D]{LinTong2022} for how to factor such errors into the analysis). Therefore, throughout the paper, we assume that the implementation of $e^{-i t H}$ for any $t\in\mathbb{R}$ is exact.

Given a set of time points $\{t_n\}^N_{n=1}$ drawn from the probability density $a(t)$, we use the quantum circuit in \cref{fig:qc} to prepare the following data set:
\begin{equation}\label{eqn:dataset}
    \mathcal{D}_{H}=\left\{\left(t_n,Z_n\right)\right\}^{N}_{n=1}\,.
\end{equation}
By running the quantum circuit (\cref{fig:qc}) for each $t_n$ \textit{once}, we define the output
\begin{equation}\label{eqn:Z_n}
Z_n=X_{n}+iY_{n}.
\end{equation}
Here $X_{n},Y_{n}$ are independently generated by the quantum circuit (\cref{fig:qc}) with different $W$ and satisfy \eqref{eqn:X}, \eqref{eqn:Y} respectively. Hence, we have
\begin{equation}\label{eqn:Zn_expect}
\mathbb{E}(Z_n)=\left\langle\psi\right|\exp(-i t_n H)\ket{\psi},\quad |Z_n|\leq 2\,.
\end{equation}
Therefore $Z_n$ is an unbiased estimation $\left\langle\psi\right|\exp(-i t_n H)\ket{\psi}=\sum^{M}_{m=1}p_m\exp(-i \lambda_m t_n)$.
We also note that if we use the above method to prepare the data set in \cref{eqn:dataset}, the maximal simulation time is $T_{\max}=\max_{1\leq n\leq N}|t_n|$ and the total simulation time is $\sum^N_{n=1}|t_n|$. 

Given $t_n$, we can sample the circuit $N_s$ times such that $\abs{Z_n-\left\langle\psi\right|\exp(-i t_n H)\ket{\psi}}=\Or(1/\sqrt{N_s})$. Most algorithms, including the QCELS algorithm in Ref.~\cite{DingLin2023}, require this repetition step. In this aspect, our algorithm is innovative in that it achieves convergence with just a single sample per $t_n$, i.e., $N_s=1$. For simplicity, we assume $N_s=1$ throughout the paper. Increasing $N_s$ reduces the statistical noise in $Z_n$ and can further decrease the error in the eigenvalue estimate. This also increases $T_{\mathrm{total}}$ by a factor of $N_s$ without affecting $T_{\max}$.

\begin{figure}[H]
\centering
\begin{center}
\includegraphics[width=0.6\textwidth]{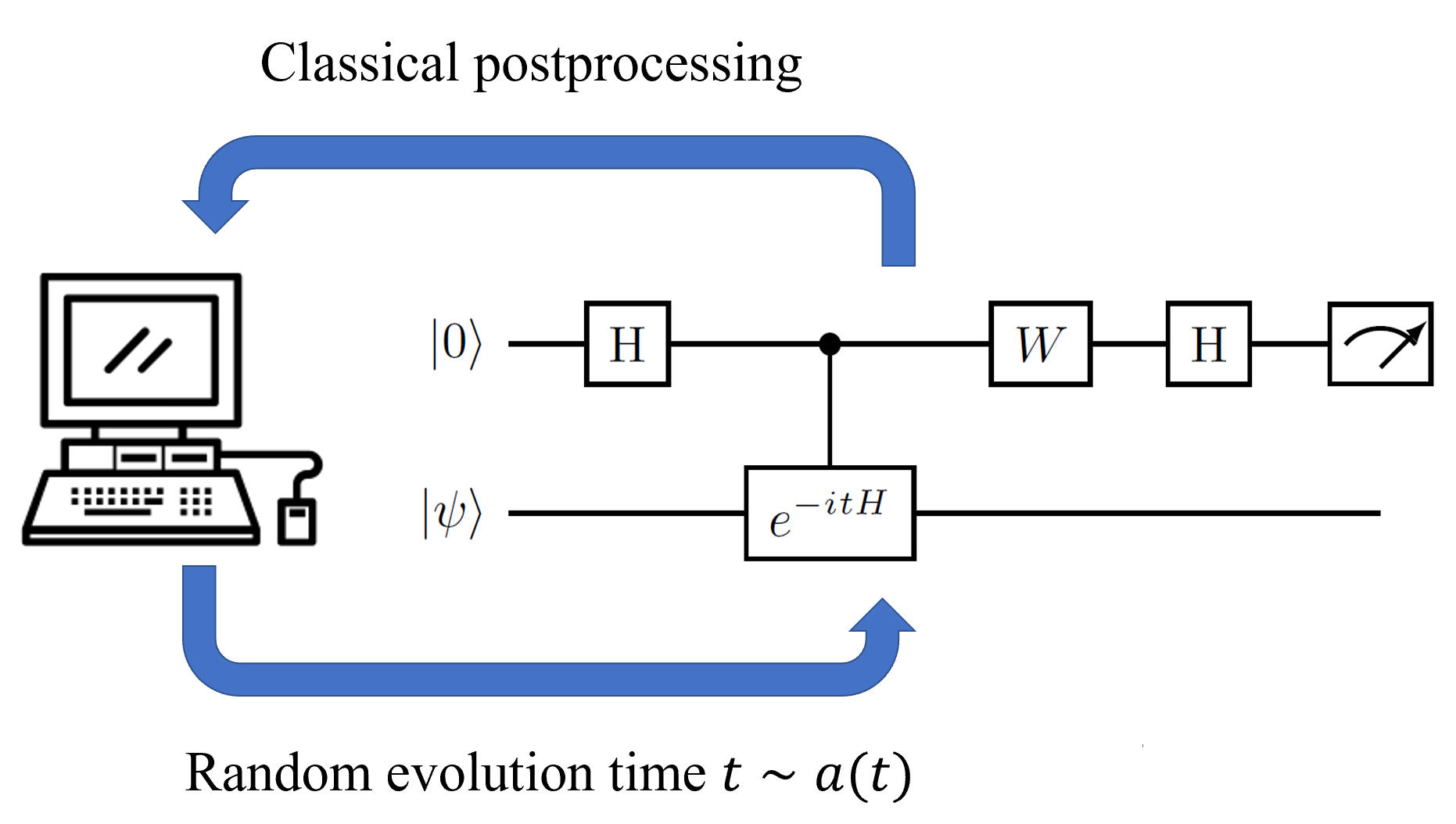}
\end{center}
\caption{Choosing $W=I$ or $W=S^\dagger$ ($S$ is the phase gate), the Hadamard test circuit allows us to estimate the real or the imaginary part of $\braket{\psi|\exp(-itH)|\psi}$ on quantum computers. $\mathrm{H}$ is the Hadamard gate. The classical computer provides the evolution time $t$ according to a probability density $a(t)$, and performs postprocessing on the quantum data for eigenvalue estimation. }
\label{fig:qc}
\end{figure}

Algorithm \ref{alg:data} describes the data generating process.

\begin{breakablealgorithm}
      \caption{Data generator}
  \label{alg:data}
  \begin{algorithmic}[1]
  \State \textbf{Preparation:} Number of data pairs: $N$; probability density: $a(t)$; 
 \State $n\gets 1$;
  \While{$n\leq N$}
  \State Generate a random variable $t_n$ with the probability density $a(t)$.
  \State Run the quantum circuit (Figure \ref{fig:qc}) with $t=t_n$ and $W=I$ to obtain $X_{n}$.
  \State Run the quantum circuit (Figure \ref{fig:qc}) with $t=t_n$ and $W=S^\dagger$ to obtain $Y_{n}$.
  \State $Z_{n}\gets X_{n}+i Y_{n}$.
  \State $n\gets n+1$
  \EndWhile
    \State \textbf{Output:} $\left\{(t_n,Z_{n})\right\}^N_{n=1}$
    \end{algorithmic}
\end{breakablealgorithm}

\subsection{Main algorithm}
After generating the data set, we define the numerical loss function (referred to as the loss function for short) as
\begin{equation}\label{eqn:loss_multi_modal}
L_{K}\left(\{r_k\}^{K}_{k=1},\{\theta_k\}^{K}_{k=1}\right)=\frac{1}{N}\sum^N_{n=1}\left|Z_n-\sum^{K}_{k=1}r_k\exp(-i\theta_k t_n)\right|^2
\end{equation}
and the optimization problem
\begin{equation}\label{eqn:op}
\left(\{r^*_k\}^{K}_{k=1},\{\theta^*_k\}^{K}_{k=1}\right)=\argmin_{r_k\in\CC,\theta_k\in\mathbb{R}}L_{K}\left(\{r_k\}^{K}_{k=1},\{\theta_k\}^{K}_{k=1}\right)\,.
\end{equation}
Compared to the ideal loss function described in  \cref{eqn:loss_multi_modal_perfect}, the current loss function is considerably noisier. Define the expectation error $E_n=Z_n-\sum^{M}_{m=1}p_m\exp(-i \lambda_m t_n)$. We note that $|E_n|$ is bounded by 3 but not small since $Z_n$ is generated by running the quantum circuit only once. On the other hand, the expectation of $E_n$ is zero and $\{E_n\}^N_{n=1}$ are independent. This implies $\left|\frac{1}{N}\sum^N_{n=1}E_n\right|=\Or\left(1/\sqrt{N}\right)$. Define the vector $\vr=(r_1,r_2,\dots,r_K)$ and $\vtheta=(\theta_1,\theta_2,\dots,\theta_K)$, and plug this into \eqref{eqn:loss_multi_modal}, we have
\begin{equation}\label{eqn:first_informal}
\begin{aligned}
&\argmin_{\vr,\vtheta}L_{K}\left(\vr,\vtheta\right)\\
=&\argmin_{\vr,\vtheta}\frac{1}{N}\sum^N_{n=1}\left|Z_n-\sum^{K}_{k=1}r_k\exp(-i\theta_k t_n)\right|^2\\
=&\argmin_{\vr,\vtheta}\frac{1}{N}\sum^N_{n=1}\left|\sum^{M}_{m=1}p_m\exp(-i \lambda_m t_n)-\sum^{K}_{k=1}r_k\exp(-i\theta_k t_n)\right|^2\\
&-\frac{2}{N}\sum^N_{n=1}\mathrm{Re}\left\langle E_n,\sum^{K}_{k=1}r_k\exp(-i\theta_k t_n)\right\rangle\\
\approx &\argmin_{\vr,\vtheta}\int^\infty_{-\infty}a(t)\left|\sum^{M}_{m=1}p_m\exp(-i \lambda_m t)-\sum^{K}_{k=1}r_k\exp(-i\theta_k t)\right|^2\rd t\\
=&\argmin_{\vr,\vtheta}\mathcal{L}_{K}\left(\vr,\vtheta\right)\,,
\end{aligned}
\end{equation}
where $\left\langle a,b\right\rangle=\overline{a} \cdot b$ for any $a,b\in\mathbb{C}$. In the second equality, we omit  terms that are independent of $(\vr,\vtheta)$. In the approximation step, we use $\frac{2}{N}\sum^N_{n=1}\mathrm{Re}\left\langle E_n,\sum^{K}_{k=1}r_k\exp(-i\theta_k t_n)\right\rangle\approx 0$ when $N\gg 1$. While the loss function $L_K$ may not converge to the ideal loss function as $N$ approaches infinity due to the statistical noise in $Z_n$, we find that the optimization problem in \cref{eqn:op} can still yield a solution that approaches that in \cref{eqn:op_perfect} with the ideal loss.

After formulating the loss function \eqref{eqn:loss_multi_modal} and the optimization problem \eqref{eqn:op}, we are ready to introduce our main algorithm. Inspired by the multi-level QCELS~\cite{DingLin2023}, the algorithm contains two steps:  
\begin{itemize}
    \item \textbf{Step 1:} Choose a relative small $T_0$. Set $T=T_{0}$, $a(t)=a_{T_0}(t)$ and solve the optimization problem \eqref{eqn:op} to give a rough estimation for each dominant eigenvalues $\{\lambda_m\}_{m\in\mathcal{D}}$.  
    
    \item \textbf{Step 2:} Gradually increase $T$ so that: 
    \begin{enumerate}
    
    \item The solution with $T_{j}$ gives a good initial guess for the optimization problem with $T_{j+1}$.

    \item The total running time still satisfies the Heisenberg-limit.
    \end{enumerate}
\end{itemize}
The detailed algorithm is summarized in Algorithm \ref{alg:main}, which is referred to as the multi-modal, multi-level quantum complex exponential least squares (MM-QCELS) algorithm. We note that the optimization problems \eqref{eqn:op_step_2} in Algorithm \ref{alg:main} have the technical constraint $\|\vr\|_1\leq 1$. This constraint is natural since each $r_k$ should approximate some $p_{m_k}$, and the sum of the absolute value thus should exceed $1$. This constraint is added to simplify the complexity analysis for a uniform upper bound of the expectation error (see Lemma \ref{lem:expectation_error} in Appendix \ref{sec:bound_expectation_error}  for detail). In practice, we find that the performance of the algorithm is independent of this constraint (see the numerical tests in Section \ref{sec:ns}).

\begin{breakablealgorithm}
      \caption{Multi-modal, multi-level quantum complex exponential least squares (MM-QCELS)}
  \label{alg:main}
  \begin{algorithmic}[1]
  \State \textbf{Preparation:} Number of data pairs: $\{N_j\}^l_{j=0}$; number of iterations: $l$; sequence of time steps: $\{T_j\}^{l}_{j=0}$; sequence of probability distributions $\{a_{T_j}(t)\}^l_{j=0}$; number of dominant eigenvalues: $K$
  \State \textbf{Running:}
  \State $k\gets 1$; \Comment{Step 1 starts}
  \While{$k\leq K$}
  \State $\lambda_{\min,k}\gets-\pi$; $\lambda_{\max,k}\gets\pi$; \Comment{$[\lambda_{\min,k},\lambda_{\max,k}]$ is the estimation interval.}
  \State $k\gets k+1$
  \EndWhile
  \State $j\gets 0$; 
  \While{$j\leq l$}

  \State Generate a data set $\left\{\left(t_n,Z_{n}\right)\right\}^{N_j}_{n=1}$ using Algorithm \ref{alg:data} with $a_{T_j}(t)$.  
  \State Define the loss function $L_{K}(r,\theta)$ in \cref{eqn:loss_multi_modal}.  
  \State Minimizing loss function: 
  \begin{equation}\label{eqn:op_step_2}
  \left(\vr^*,\vtheta^*\right)\gets\argmin_{\|\vr\|_1\leq 1,\theta_k\in[\lambda_{\min,k},\lambda_{\max,k}]}L_{K}\left(\vr,\vtheta\right)\,.
  \end{equation}
  \State $\lambda_{\min,k}\gets\theta^*_{k}-\frac{\pi}{T_j}$;  $\lambda_{\max,k}\gets\theta^*_{k}+\frac{\pi}{T_j}$ \Comment{Shrink the search interval.}
  \State $j\gets j+1$
  \EndWhile \Comment{Step 2 ends}
  \State \textbf{Output:} $\{(\vr^*,\vtheta^*)\}^{K}_{k=1}$
    \end{algorithmic}
\end{breakablealgorithm}

\subsection{Complexity analysis of Algorithm \ref{alg:main}}\label{sec:informal_result}

In this section, we study the complexity of Algorithm \ref{alg:main}. First, the cost of the algorithm depends on a number of parameters used throughout the analysis, including

\begin{enumerate}
\item The minimal dominant overlap: 
\begin{equation}\label{eqn:pK_min}
p^{(K)}_{\min}=\min_{m\in\mathcal{D}}p_m\,;
\end{equation}
\item The minimal dominant spectral gap: 
\begin{equation}\label{eqn:DeltaK_lambda}
\Delta^{(K)}_\lambda=\min_{m,m'\in\mathcal{D},m\neq m'}\left|\lambda_{m'}-\lambda_m\right|\,;
\end{equation}
\item The residual overlap: \begin{equation}\label{eqn:RK}
R^{(K)}=\sum_{m'\in\mathcal{D}^c}p_{m'}\,.
\end{equation}
\end{enumerate}

Now, we are ready to introduce the  complexity result of Algorithm \ref{alg:main}, which is summarized in the following theorem.  

\begin{thm}\label{thm:gaussian}
Given the failure probability $0<\eta<1/2$, error $\epsilon>0$, and any $\zeta>0$. Assume $p^{(K)}_{\min}>3R^{(K)}$ and define $q=\Theta(R^{(K)}/p^{(K)}_{\min})$. In Algorithm \ref{alg:main}, we choose the following parameters:
    \begin{itemize}
    \item Set $\gamma =\Theta\left(\log\left(1/\min\left\{p^{(K)}_{\min}q,\left(p^{(K)}_{\min}\right)\left(p^{(K)}_{\min}-3R^{(K)}\right)\right\}\right)\right)$.
    \item For Step 1, set 
    \begin{equation}\label{eqn:condition_gaussian}  T_0=\widetilde{\Theta}\left(\left(\Delta^{(K)}_{\lambda}\right)^{-1}\log(q^{-1})\right),\quad N_0=\widetilde{\Theta}\left(T^2_0\left(p^{(K)}_{\min}\left(p^{(K)}_{\min}-3R^{(K)}\right)\right)^{-2}\polylog(\eta^{-1})\right)\,.
    \end{equation}
    \item For Step 2, set $l=\max\{\left\lceil\log_2\left(q/(\epsilon T_0)\right)\right\rceil,1\}$, $T_j=2^jT_0$, and 
    \begin{equation}\label{eqn:N_1}
N_j=\widetilde{\Theta}\left(\left(p^{(K)}_{\min}\right)^{-4}q^{-2-\zeta}\polylog(\log(\zeta^{-1})\eta^{-1})\right)
    \end{equation} for $1\leq j\leq l$.
    \end{itemize}
    This gives
    \begin{equation}
      T_{\max}=\frac{\delta}{\epsilon},\quad T_{\mathrm{total}}=\wt{\Theta}\left(\frac{1}{\left(p^{(K)}_{\min}\right)^4\delta^{1+\zeta}\epsilon}\polylog(\log(\zeta^{-1})\delta^{-1}\eta^{-1})\right)\,.
\end{equation}
where $\delta=\widetilde{\Theta}\left(q\log(q^{-1})\right)$. In the above equations, the constant in $\widetilde{\Theta}$ depends at most polynomially on $\log\left((p^{(K)}_{\min}(p^{(K)}_{\min}-3R^{(K)}))^{-1}K\right)$.
Then with probability $1-\eta$ ,  for any $m\in\mathcal{D}$, there exists a unique $1\leq k_m\leq K$ such that
    \begin{equation}\label{eqn:final_close_gaussian}
        |\lambda_{m}-\theta^*_{k_m}|\leq \epsilon\,.
    \end{equation}
Here, $\zeta$  can  be chosen arbitrarily close to $0$ and the constant in $\widetilde{\Theta}$ depends on $\zeta$.
    \end{thm}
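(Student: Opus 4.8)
The plan is to proceed level by level, controlling at each level $j$ both the approximation of the noisy loss $L_K$ by the ideal loss $\mathcal{L}_K$ and the way the minimizer of the ideal loss sits near the true dominant pairs $\{(p_m,\lambda_m)\}_{m\in\mathcal{D}}$. First I would establish a clean statement about the ideal optimization problem \eqref{eqn:op_perfect}: using the concentration of the filter $F_T(x)=\exp(-T^2x^2/2)$ (up to the truncation error governed by $\gamma$, which is why $\gamma$ is chosen logarithmically large), show that if the search intervals $[\lambda_{\min,k},\lambda_{\max,k}]$ each isolate exactly one dominant eigenvalue with width $O(1/T)$, then any minimizer $(\vr^*,\vtheta^*)$ satisfies $|\theta^*_{k_m}-\lambda_m|\lesssim \delta/T$ with $\delta=\widetilde{\Theta}(q\log q^{-1})$ and $q=\Theta(R^{(K)}/p^{(K)}_{\min})$. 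The mechanism is that $\sum_{k}r^*_k\exp(-i\theta^*_k t)$ must cancel the dominant part of the signal to within the residual $R^{(K)}$ in the $a(t)$-weighted $L^2$ norm, and a perturbation argument (essentially the conditioning of a Vandermonde-type system of exponentials with well-separated frequencies, which is where $\Delta^{(K)}_\lambda$ and the choice $T_0=\widetilde{\Theta}((\Delta^{(K)}_\lambda)^{-1}\log q^{-1})$ enter) converts the residual bound into the frequency bound. This is essentially the multi-modal generalization of the $K=1$ closed-form computation sketched in the excerpt, and I expect it to be the most delicate analytic step; the paper presumably isolates it as a lemma in the appendix.

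Second, I would handle the stochastic error. Write $L_K(\vr,\vtheta) = \widehat{\mathcal{L}}_K(\vr,\vtheta) - \tfrac{2}{N}\sum_n \mathrm{Re}\langle E_n, \sum_k r_k e^{-i\theta_k t_n}\rangle + (\text{const})$, where $\widehat{\mathcal{L}}_K$ is the empirical version of $\mathcal{L}_K$. I need a uniform-in-$(\vr,\vtheta)$ bound on both the deviation of $\widehat{\mathcal{L}}_K$ from $\mathcal{L}_K$ and the noise cross term; the constraint $\|\vr\|_1\le 1$ makes the function class $\{t\mapsto\sum_k r_k e^{-i\theta_k t}\}$ uniformly bounded and Lipschitz in the parameters, so a covering-number / union-bound argument over an $\varepsilon$-net of the (compact, low-dimensional) parameter domain, combined with Hoeffding/Bernstein for each fixed parameter, gives a bound of order $\widetilde{O}(\sqrt{\log(\eta^{-1})/N})$ with high probability. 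This is Lemma~\ref{lem:expectation_error} referenced in the excerpt. Combining: with probability $1-\eta$, $|L_K - \mathcal{L}_K - \text{const}| \le \kappa$ uniformly, so the minimizer of $L_K$ is a $2\kappa$-approximate minimizer of $\mathcal{L}_K$; feeding $2\kappa$ back into the stability lemma from the first step gives $|\theta^*_{k_m}-\lambda_m| \lesssim (\delta + \text{poly}(\kappa))/T$ at level $j$, and choosing $N_j$ as in \eqref{eqn:condition_gaussian}–\eqref{eqn:N_1} makes the $\text{poly}(\kappa)$ contribution negligible relative to $\delta$.

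Third, I would run the induction over levels. At level $j$ the estimate has error $\le \delta/T_j = \delta/(2^j T_0)$; since the interval update in Algorithm~\ref{alg:main} sets the new half-width to $\pi/T_j$, and $\delta \ll \pi$ (guaranteed because $q$ is small, using $p^{(K)}_{\min}>3R^{(K)}$), the true $\lambda_m$ stays inside $[\lambda_{\min,k_m},\lambda_{\max,k_m}]$ and moreover that interval still isolates a single dominant eigenvalue at the next level, so the hypotheses of the stability lemma are preserved — this is the role of Step 1 with $T_0\sim(\Delta^{(K)}_\lambda)^{-1}$, to make the very first intervals already separate the dominant frequencies. After $l=\lceil\log_2(q/(\epsilon T_0))\rceil$ levels we reach $T_l = \widetilde{\Theta}(q/\epsilon)$, hence error $\le \delta/T_l \le \epsilon$, which is \eqref{eqn:final_close_gaussian}. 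Uniqueness of $k_m$ follows from the separation of the final intervals. Finally I would collect the cost: $T_{\max} = \gamma T_l = \widetilde{\Theta}(q/\epsilon) = \delta/\epsilon$ (absorbing logs into $\delta$), and $T_{\mathrm{total}} = \sum_j N_j \cdot \gamma T_j \le \widetilde{O}(N_l \cdot \gamma T_l)$ since the $T_j$ double, which gives the stated $\widetilde{\Theta}\big((p^{(K)}_{\min})^{-4}\delta^{-1-\zeta}\epsilon^{-1}\,\mathrm{polylog}\big)$ after substituting \eqref{eqn:N_1}; the $\zeta$ appears precisely because $N_j$ is taken slightly larger than $q^{-2}$ to make a geometric series in the $j$-dependent error terms summable with a $j$-independent bound. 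The main obstacle, as noted, is the deterministic stability lemma — quantitatively relating "residual overlap small in weighted $L^2$" to "recovered frequencies close to true eigenvalues" uniformly over the admissible parameter region — and getting the sharp constant $\delta = \widetilde\Theta(q\log q^{-1})$ rather than, say, $\sqrt q$, which is what distinguishes this analysis from the earlier QCELS bound.
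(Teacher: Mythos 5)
Your overall architecture (a deterministic stability lemma for the ideal/low-noise loss, a uniform concentration bound over the compact parameter region via a net plus Hoeffding, and an induction over the doubling levels $T_j=2^jT_0$ with the cost accounting $T_{\max}=\gamma T_l$, $T_{\mathrm{total}}\lesssim N_l\gamma T_l$) matches the paper's proof. But there is a genuine gap in your second step, and it is exactly the point the paper singles out as the technical crux. If you bound the noise term uniformly by $\kappa=\widetilde{O}(\sqrt{\log(\eta^{-1})/N})$ and then feed ``the minimizer of $L_K$ is a $2\kappa$-approximate minimizer of $\mathcal{L}_K$'' into the stability lemma, the stability analysis (compare the loss at $(\vr^*,\vtheta^*)$ with the loss at $(\{p_m\},\{\lambda_m\})$ and expand) yields $T|\theta^*_{k_m}-\lambda_m|\lesssim \sqrt{\kappa}/p^{(K)}_{\min}+q$. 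To make this $O(q)$ you need $\kappa=O\bigl((p^{(K)}_{\min}q)^2\bigr)$, i.e.\ $N=\Omega\bigl((p^{(K)}_{\min})^{-4}q^{-4}\bigr)$ — not the $N_j=\widetilde{\Theta}\bigl((p^{(K)}_{\min})^{-4}q^{-2-\zeta}\bigr)$ stated in the theorem. Conversely, with the stated $N_j$ your $\kappa$ is only of order $(p^{(K)}_{\min})^2q^{1+\zeta/2}$, which gives frequency accuracy $\sim q^{1/2}/T$, i.e.\ precisely the $\sqrt{q}$-type constant you say you want to beat; after $l=\lceil\log_2(q/(\epsilon T_0))\rceil$ levels, where $T_l=\Theta(q/\epsilon)$, accuracy $q/T_l$ is exactly what is needed to conclude $|\theta^*_{k_m}-\lambda_m|\le\epsilon$, so the argument as written does not close. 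Your attribution of $\zeta$ to summability of a geometric series over the levels $j$ is also not where it comes from: $N_j$ is the same order at every level and the total cost is dominated by the last level since $T_j$ doubles.

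The missing idea is a variance-reduction/cancellation argument for the noise, combined with a bootstrap. The paper does not bound $E(\vr,\vtheta)$ and $E(\{p_m\},\{\lambda_m\})$ separately but bounds their \emph{difference} uniformly over parameters within distance $\rho$ of the truth (second half of Lemma~\ref{lem:expectation_error}); because the noise functional is $\gamma T$-Lipschitz in $\vtheta$ and linear in $\vr$, restricting to $|\theta_k-\lambda_{m_k}|\le \rho/T$, $|r_k-p_{m_k}|\le\rho p_{m_k}$ gains a factor $\rho$, so accuracy $q/T$ only requires $N=\widetilde{\Omega}(\rho^2 q^{-4}(p^{(K)}_{\min})^{-4})$ with $\rho$ the \emph{previous} accuracy. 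Since that closeness is not known a priori, the paper iterates (Lemma~\ref{lem:improve_dependence_2}): with a finite sequence $q_s=q^{(2-(1/2)^s)/(2-(1/2)^S)}$ one passes from accuracy $q_s$ to $q_{s+1}$ using the same data, and the worst sample requirement $q_{s+1}^{-4}q_s^{2}=q^{-4/(2-(1/2)^S)}$ tends to $q^{-2}$ as $S\to\infty$; choosing $S$ with $\zeta=(1/2)^{S-1}$ is exactly where the exponent $2+\zeta$ and the $\log(\zeta^{-1})$ factors in the theorem originate. You would need to add this difference-based concentration bound and the bootstrap to make your step 2 deliver the stated $N_j$, $\delta=\widetilde{\Theta}(q\log q^{-1})$, and $T_{\mathrm{total}}$ scaling; the rest of your outline (Step 1 with $q=1$ to seed intervals of width $O(1/T_1)$, preservation of the isolation hypothesis across levels, union bound over the $l+1$ levels, and the final cost computation) is consistent with the paper.
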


We put the proof of the above theorem in the Appendix. To provide a clear exposition of the core concept, we first present the intuitive idea of the proof in Appendix \ref{sec:pf_intuitive}. The key step in our proof is to demonstrate that solving the ideal optimization problem~\eqref{eqn:op_perfect} could yield a precise approximation to dominant eigenvalues with a
short maximal running time. 
The rigorous proof of the theorem is then given in \cref{sec:proof_of_thm}. In particular, we rigorously bound the approximation error in \eqref{eqn:first_informal}, and optimizing the numerical loss function~\eqref{eqn:loss_multi_modal} gives us an accurate approximation of dominant eigenvalues with low cost.

\begin{rem}\label{re:add} The results of \cref{thm:gaussian} can be extended as follows:
\begin{enumerate}
    \item In the above theorem, we could relax the condition that $T_0=\widetilde{\Theta}((\Delta^K_{\lambda})^{-1}\log(q^{-1})$. To be more precise, if we are given a lower bound $\Delta_{\mathrm{low}}$ for $\Delta^K_{\lambda}$, we could set $T_0=\widetilde{\Theta}((\Delta_{\mathrm{low}})^{-1}\log(q^{-1}))$, and the result of the theorem still holds.
    
    \item  In \cref{thm:gaussian} and Algorithm \ref{alg:main}, we utilize a sequence of $T_n$ to approximate the dominant eigenvalues. This is due to a technical consideration in our theoretical analysis. Specifically, in order to ensure the feasibility of optimization problem \eqref{eqn:op_step_2}, our proof requires that the random discrepancy between \eqref{eqn:op_step_2} and the ideal optimization problem \eqref{eqn:op_perfect} remains uniformly small when $\theta_k\in [\lambda_{\min,k},\lambda_{\max,k}]$ (see Appendix \ref{sec:proof_of_thm} \cref{lem:single_op_problem} for detail). Achieving this requires a uniform bound for infinitely many continuous random variables, which necessitates constraining $\lambda_{\max,k}-\lambda_{\min,k}$ to be sufficiently small, thus guaranteeing the manageability of random noise with a reasonable number of samples. Detailed information can be found in \cref{sec:bound_expectation_error}.

In practical calculations, Algorithm \ref{alg:main} may be simplified into a one-level algorithm by directly choosing sufficiently large values for $T_{\max}$ and $N$. Moreover, if we optimize our function only over a finite number of grid intervals such that $\theta_k\in {-\pi+k\epsilon,0\leq k<\frac{2\pi}{\epsilon}}$, it is theoretically possible to show that by minimizing \eqref{eqn:op_step_2} once with $T_{\max}=\widetilde{\Theta}\left(q/\epsilon\right)$ and $N=\widetilde{\Theta}(q^{-2-o(1)})$, we can achieve $\epsilon$-accuracy for all dominant eigenvalues. This approach also offers benefits such as short circuit depth and Heisenberg-limited scaling.
\item When a spectral gap exists between the dominant eigenvalues and the non-dominant ones (represented as $\Delta_{\lambda}$), it may be feasible to further reduce the maximum runtime to $T_{\max}=\wt{\Theta}(1/\min\{\Delta_{\lambda},\Delta^{(K)}_{\lambda}\})$. The reason is given in the first point of \cref{rem:after_intuition}.
\end{enumerate}
\end{rem}

\section{Numerical results}\label{sec:ns}
In this section, we numerically demonstrate the efficiency of our method using two different models. In \cref{sec:Ising}, we compare the performance of Algorithm \ref{alg:main} with QPE (textbook version~\cite{NielsenChuang2000}) for a transverse-field Ising model. In \cref{sec:Hubbard}, we compare the performance of Algorithm \ref{alg:main} with QPE for a Hubbard model. In both cases, we assume there are two dominant eigenvalues ($\lambda_1,\lambda_2$), meaning $K=2$. We share the code on Github (\url{https://github.com/zhiyanding/MM-QCELS}).


In our numerical experiments, we normalize the spectrum of the original Hamiltonian $H$ so that the eigenvalues belong to $[-\pi/4,\pi/4]$. Given a Hamiltonian $H$, we use the normalized Hamiltonian in the experiment:
\begin{equation}\label{eqn:normalize_H}
\widetilde{H}=\frac{\pi H}{4\|H\|_2}\,.
\end{equation}
Recall that the textbook version of QPE~\cite{NielsenChuang2000} relies on a quantum circuit that involves the inverse Quantum Fourier Transform (QFT). This circuit serves to encode the information of approximate eigenvalues using the ancilla register. By measuring the ancilla qubits, we could acquire an output $\theta$ that closely approximates $\lambda_k$, one of the eigenvalues of Hamiltonian $H$. To find an approximation to the smallest eigenvalue, we repeat the quantum circuit for $N$ times and defines the approximation $\theta^*_1=\min_{1\leq n\leq N} \theta_n$, where $\theta_n$ is the output of $n$-th repetition. The analysis of this method can be found in e.g., ~\cite[Section I.A]{LinTong2022}. However, QPE can also produce spurious eigenvalues which lead to failures, and it is not straightforward to generalize the procedure above for estimating multiple eigenvalues.

Consequently, in our experiment, we first utilize QPE to estimate the smallest eigenvalue $\lambda_1$ and measure the error accordingly. We then use Algorithm \ref{alg:main} to estimate the two dominant eigenvalues and measure the error by (assuming $\theta^*_1\leq \theta^*_2$)
\begin{equation}\label{eqn:max_error}
\mathrm{error}=\max\{|\theta^*_1-\lambda_1|,|\theta^*_2-\lambda_2|\}.
\end{equation}
For simplicity, in this section, Algorithm \ref{alg:main} is implemented without the constraint $\|\vr\|_1<1$ in \eqref{eqn:op_step_2}. While QPE's error is gauged based on a single eigenvalue, the error of Algorithm \ref{alg:main} is evaluated using the maximum error across two eigenvalues. This testing design intrinsically gives QPE a head start. Even with this bias, we demonstrate that Algorithm \ref{alg:main}  can outperform QPE.

\subsection{Ising model}\label{sec:Ising}
Consider the one-dimensional transverse field Ising model (TFIM) model defined on $L$ sites with periodic boundary conditions:
\begin{equation}\label{eqn:H_Ising}
H=-\left(\sum^{L-1}_{i=1} Z_{i}Z_{i+1}+Z_{L}Z_1\right) -g\sum^L_{i=1} X_i
\end{equation}
where $g$ is the coupling coefficient, $Z_i,X_i$ are Pauli operators for the $i$-th site and the dimension of $H$ is $2^L$.
We choose $L=8,g=4$. In the test, we set $p_1=p_2=0.4$.  In Algorithm \ref{alg:main}, the parameters are set to be $K=2,T_0=2(\lambda_2-\lambda_1)^{-1},N_0=3\times 10^3,N_{j\geq 1}=2\times 10^3$ and $\gamma=1$. An illustrative example, Fig \ref{fig:TFIM_loss} demonstrates the landscape of the loss function in \cref{eqn:loss_multi_modal}. As $T$ increases, the landscape of the loss function becomes increasingly complex. However, the value of the loss function around the true eigenvalues decreases significantly, which also leads to a reduction in run-to-run variation of the optimizer. As a result, the optimizer concentrates more tightly around the true eigenvalues.

We apply Algorithm \ref{alg:main} and QPE to estimate the dominant eigenvalues ($\lambda_1,\lambda_2$) of the normalized Hamiltonian $\widetilde{H}$ according to \cref{eqn:normalize_H}.  We then run Algorithm \ref{alg:main} (with $K=2,T_0=2(\lambda_2-\lambda_1)^{-1},N_0=3\times 10^3,N_{j\geq 1}=2\times 10^3$ and $\gamma=1$) to compute the error of both $\lambda_1$ and $\lambda_2$. We also run QPE $10$ times \textit{only} to estimate $\lambda_1$. The comparison of the results is shown in \cref{fig:Ising}. We find that the errors of both QPE and Algorithm \ref{alg:main} are proportional to the inverse of the maximal evolution time ($T_{\max}$). But the constant factor $\delta=T\epsilon$ of Algorithm \ref{alg:main} is much smaller than that of QPE. \cref{fig:Ising} shows that Algorithm \ref{alg:main} reduces the maximal evolution time by two orders of magnitude. The error of QPE is observed to scale as $6\pi/T$. Moreover, the total evolution time ($T_{\mathrm{total}})$ of Algorithm \ref{alg:main} is also slightly smaller than that of QPE. 

\begin{figure}[htbp]
     \subfloat{
         \centering
         \includegraphics[width=0.48\textwidth,height=0.3\textheight]{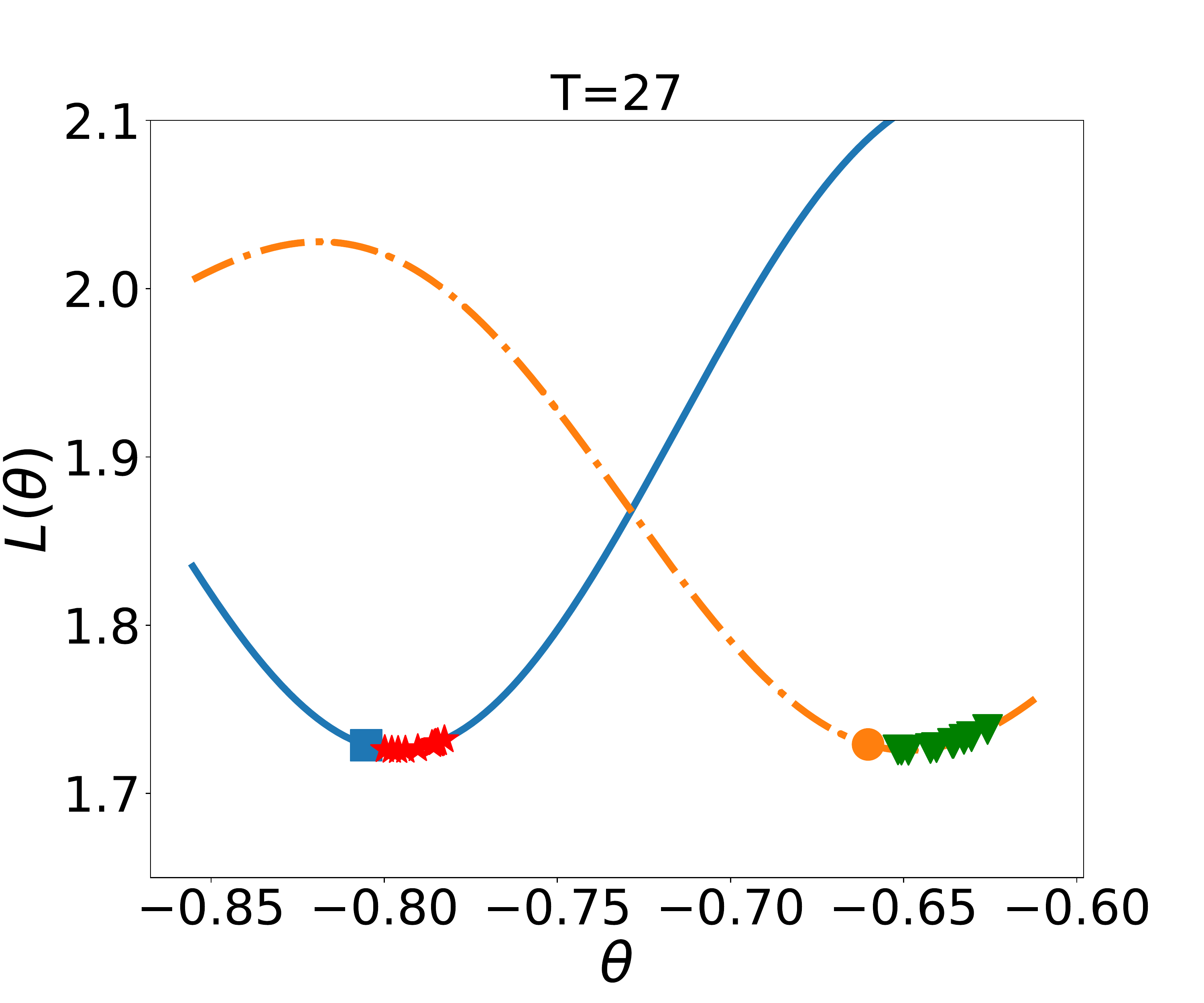}
     }
     \subfloat{
         \centering
         \includegraphics[width=0.48\textwidth,height=0.3\textheight]{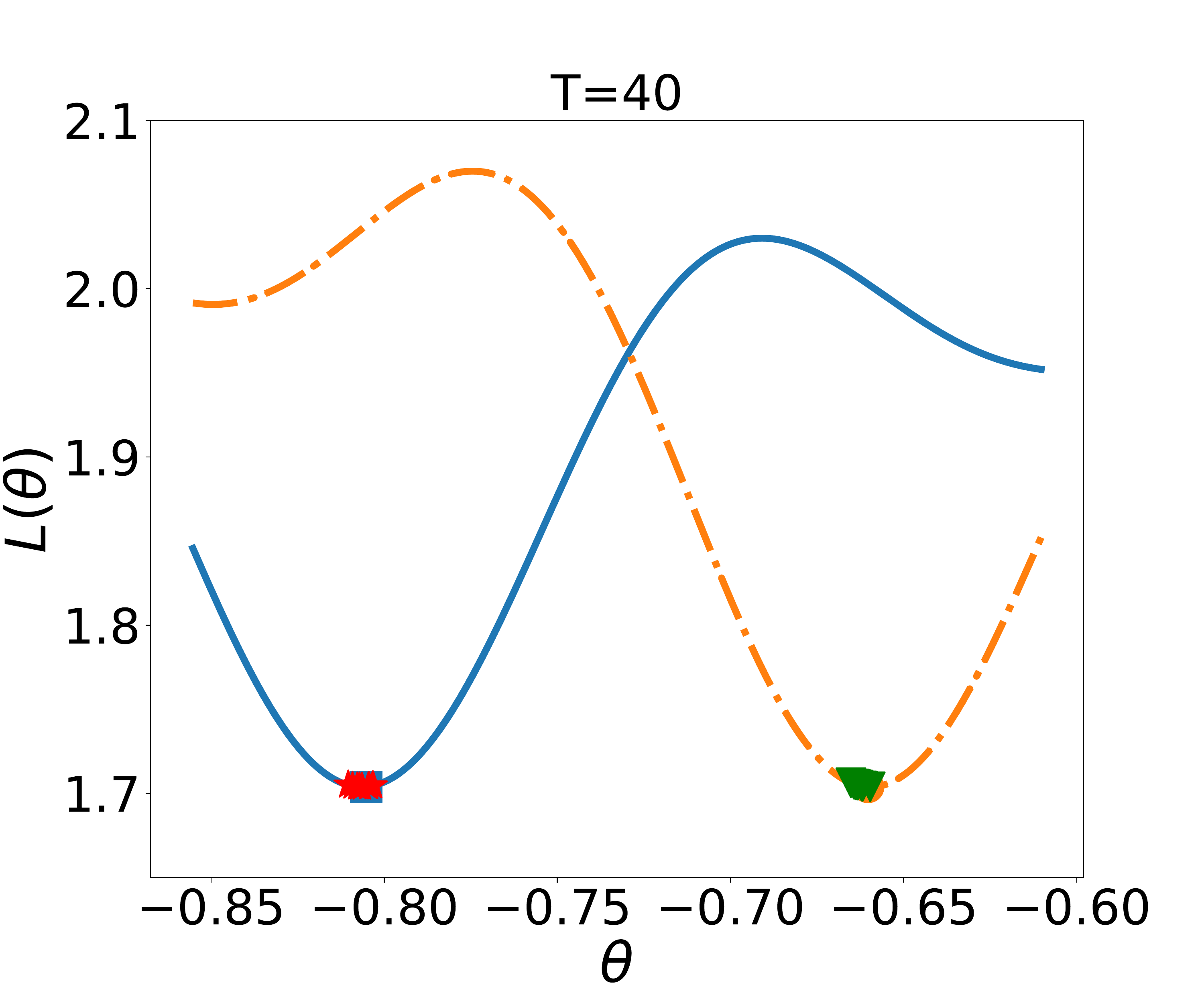}
     }
     \\
     \subfloat{
         \centering
         \includegraphics[width=0.48\textwidth,height=0.3\textheight]{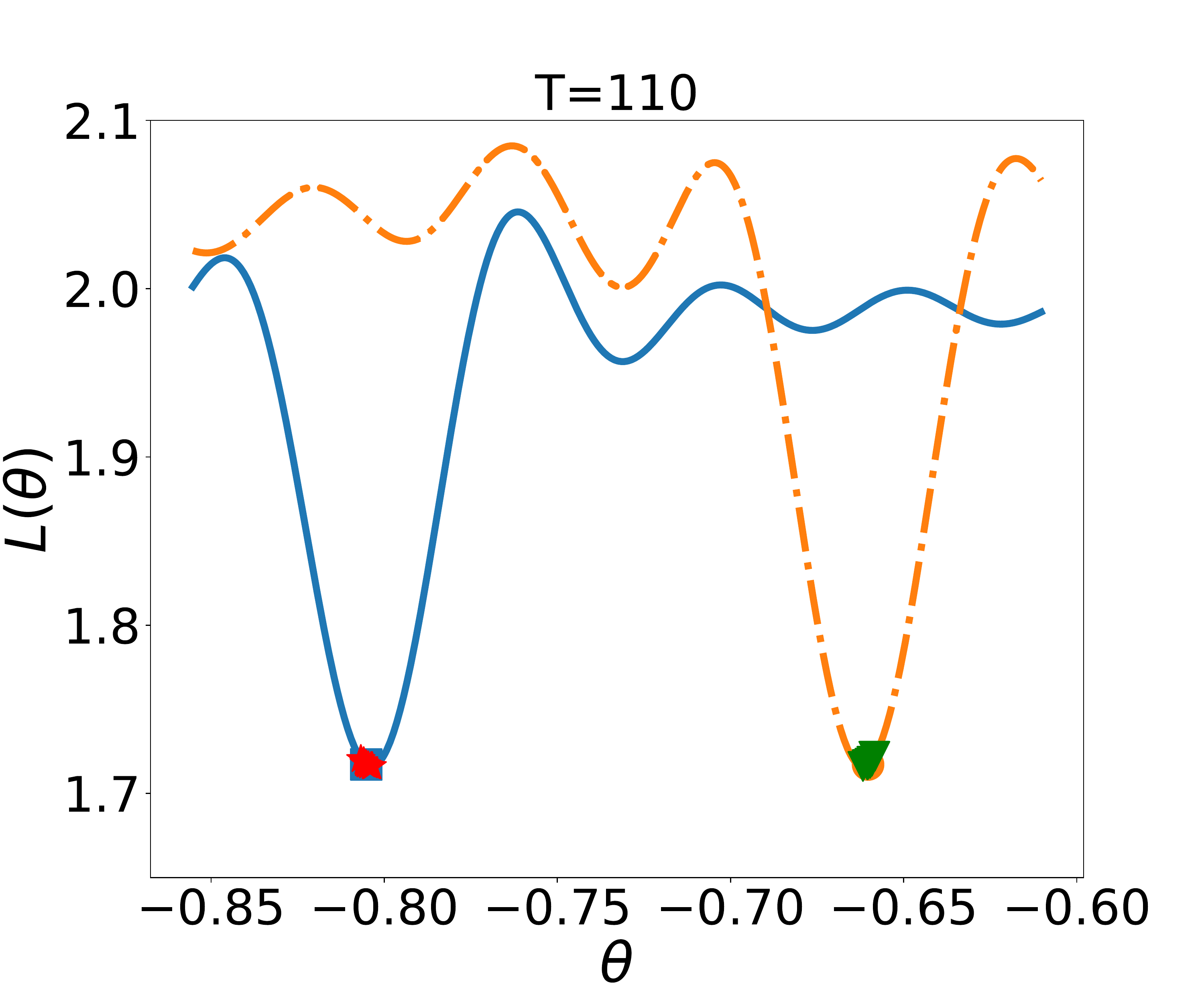}
     }
     \subfloat{
         \centering
         \includegraphics[width=0.48\textwidth,height=0.3\textheight]{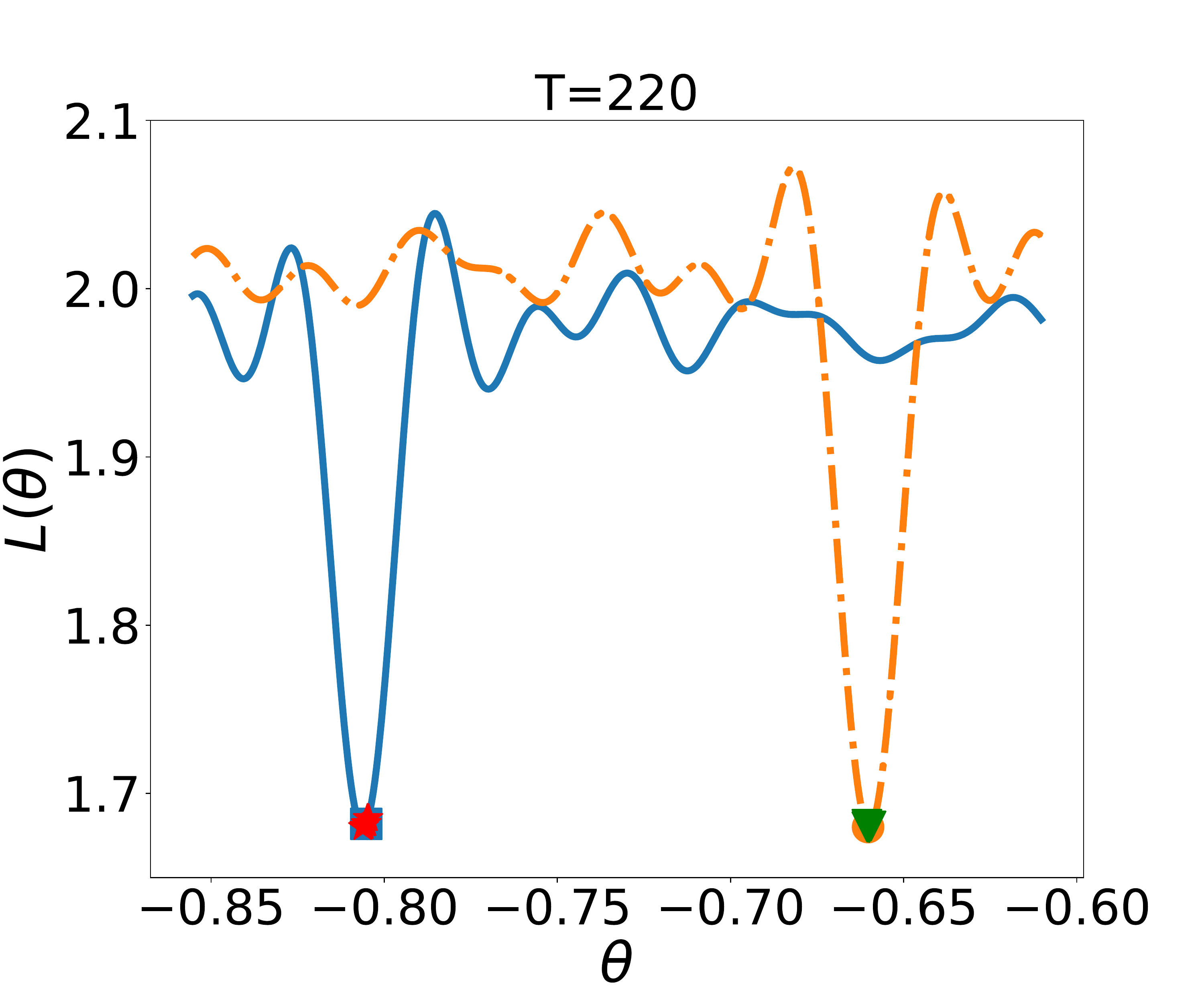}
     }
     \caption{
     \label{fig:TFIM_loss} The landscape of the loss function $L$ \eqref{eqn:loss_multi_modal} from the TFIM model at time $T_l$ with $p_1=0.4$, $p_2=0.4$, $K=2$, $T_0=2(\lambda_2-\lambda_1)^{-1}$, $N_0=3\times 10^3$, $N_{j\geq 1}=2\times 10^3$, $\gamma=1$. We run the experiment ten times and plot the positions of ten minimizers $\vtheta^*$ using star ($\theta^*_1$) and triangle ($\theta^*_2$) points. The landscape of the loss function is calculated using the data from the last experiment. The blue solid line stands for $L_K(\vr^*,\theta_1,\theta_2^*)$ (the variable is $\theta_1$), and the blue square is the true eigenvalue $\lambda_1$. The orange dash--dotted stands for $L_{K}(\vr^*,\theta_1^*,\theta_2)$ (the variable is $\theta_2$), and the orange filled circle is the true eigenvalue $\lambda_2$. When $T$ is large, the minimizer of the loss function concentrates around the dominant eigenvalues $\lambda_1$ and $\lambda_2$.}
\end{figure}

\begin{figure}[htbp]
     \subfloat{
         \centering
         \includegraphics[width=0.48\textwidth]{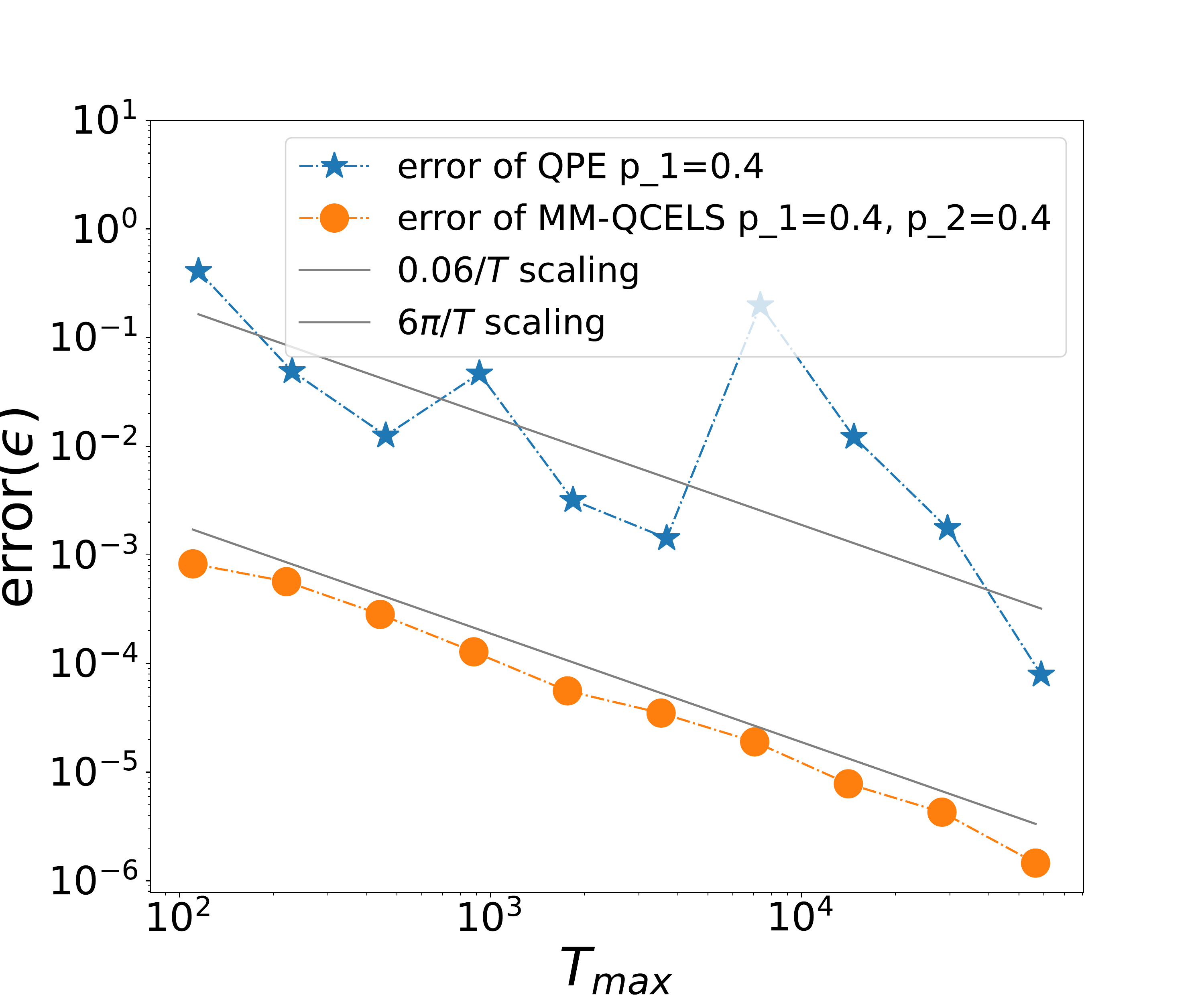}
     }
     \subfloat{
         \centering
         \includegraphics[width=0.48\textwidth]{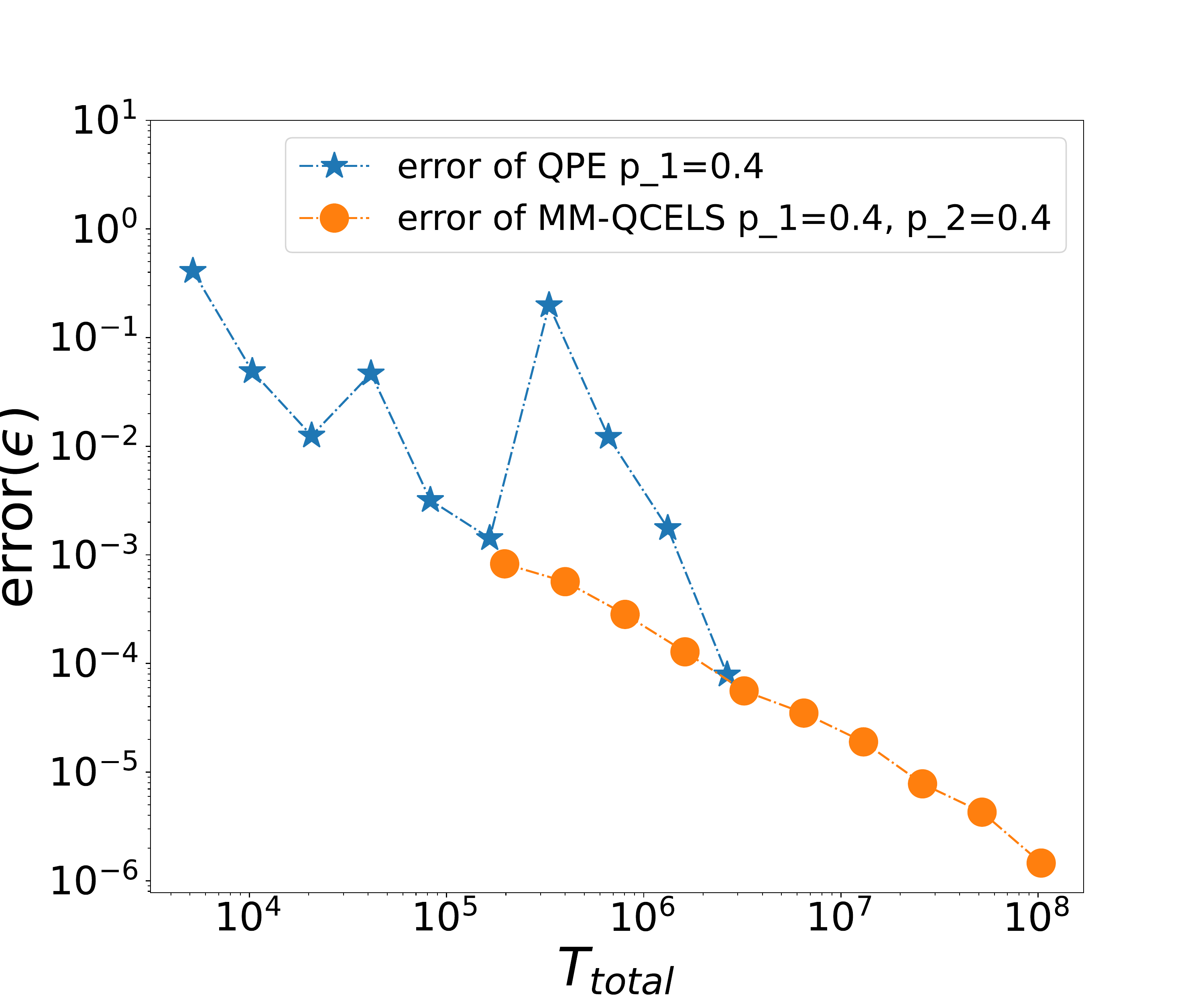}
     }
     \caption{
     \label{fig:Ising} QPE vs Algorithm \ref{alg:main}  in TFIM model with 8 sites. $p_1=p_2=0.4$. Left: Depth ($T_{\max}$); Right: Cost ($T_{\mathrm{total}}$). For Algorithm \ref{alg:main}, we choose $K=2,T_0=2(\lambda_2-\lambda_1)^{-1},N_0=3\times 10^3,N_{j\geq 1}=2\times 10^3,\gamma=1$. $l,T_j$ are chosen according to \cref{thm:gaussian}. Both methods have the error scales linearly in $1/T_{\max}$.  The constant factor $\delta=T\epsilon$ of Algorithm \ref{alg:main} is much smaller than that of QPE.}
\end{figure}

According to \cref{thm:gaussian}, accurate estimation of the dominant eigenvalues with a short circuit depth using MM-QCELS depends on two critical factors: the appropriate selection of the parameter $K$ and the fulfillment of the condition $R^{(K)}/p^{(K)}_{\min}\ll 1$. We would like to emphasize that these criteria are necessary for addressing worst-case scenarios. However, in practical implementations, even if a slightly larger value of $K$ is chosen and the ratio $R^{(K)}/p^{(K)}_{\min}$ approximates 1, Algorithm \ref{alg:main} is still possible to produce a precise approximation of the dominant eigenvalues with short circuit depth. In Appendix \ref{sec:extra_numerical}, we test these two cases and demonstrate the robustness of MM-QCELS to the choice of parameters.

\subsection{Hubbard model}\label{sec:Hubbard}
Consider the one-dimensional Hubbard model defined on $L$ spinful sites with open boundary conditions
\[
H=-t\sum^{L-1}_{j=1}\sum_{\sigma\in\{\uparrow,\downarrow\}}c^\dagger_{j,\sigma}c_{j+1,\sigma}+U\sum^L_{j=1}\left(n_{j,\uparrow}-\frac{1}{2}\right)\left(n_{j,\downarrow}-\frac{1}{2}\right).
\]
Here $c_{j,\sigma}(c^\dagger_{j,\sigma})$ denotes the fermionic annihilation (creation) operator on the site $j$ with spin $\sigma$. $\left\langle\cdot,\cdot\right\rangle$ denotes sites that are adjacent to each other. $n_{j,\sigma}=c^\dagger_{j,\sigma}c_{j,\sigma}$ is the number operator. 

We choose $L=4,8$, $t=1$, $U=10$. To implement Algorithm \ref{alg:main} and QPE, we again normalize $H$ according to \cref{eqn:normalize_H} and choose overlap $p_1=0.4,p_2=0.4$.  We run Algorithm \ref{alg:main} (with $K=2,T_0=10(\lambda_2-\lambda_1)^{-1},N_0=4\times 10^4,N_{j\geq 1}=2\times 10^3,\gamma=1$), and QPE $10$ times to compare the errors (using \cref{eqn:max_error} for Algorithm \ref{alg:main} and only estimating the error of  $\lambda_1$ for QPE). The results are shown in \cref{fig:Hubbard_4} (4 sites) and \cref{fig:Hubbard_8} (8 sites). In both cases, it can be seen that the maximal evolution time of Algorithm \ref{alg:main} is almost two orders of magnitude smaller than that of QPE. The total cost of the two methods are comparable.
%

\begin{figure}[htbp]
     \centering
     \subfloat{
         \centering
         \includegraphics[width=0.48\textwidth]{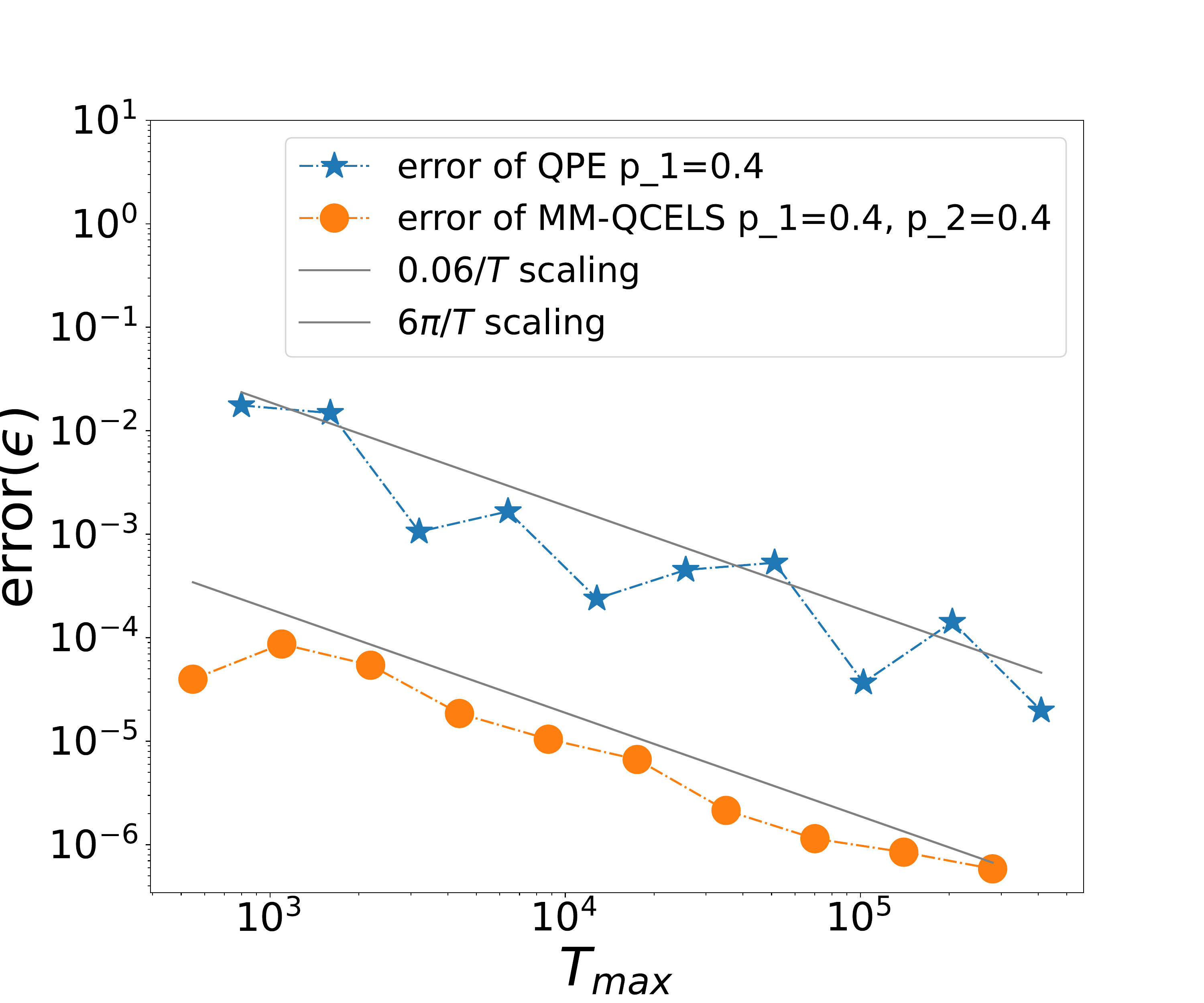}
     }
     \hfill
     \subfloat{
         \centering
         \includegraphics[width=0.48\textwidth]{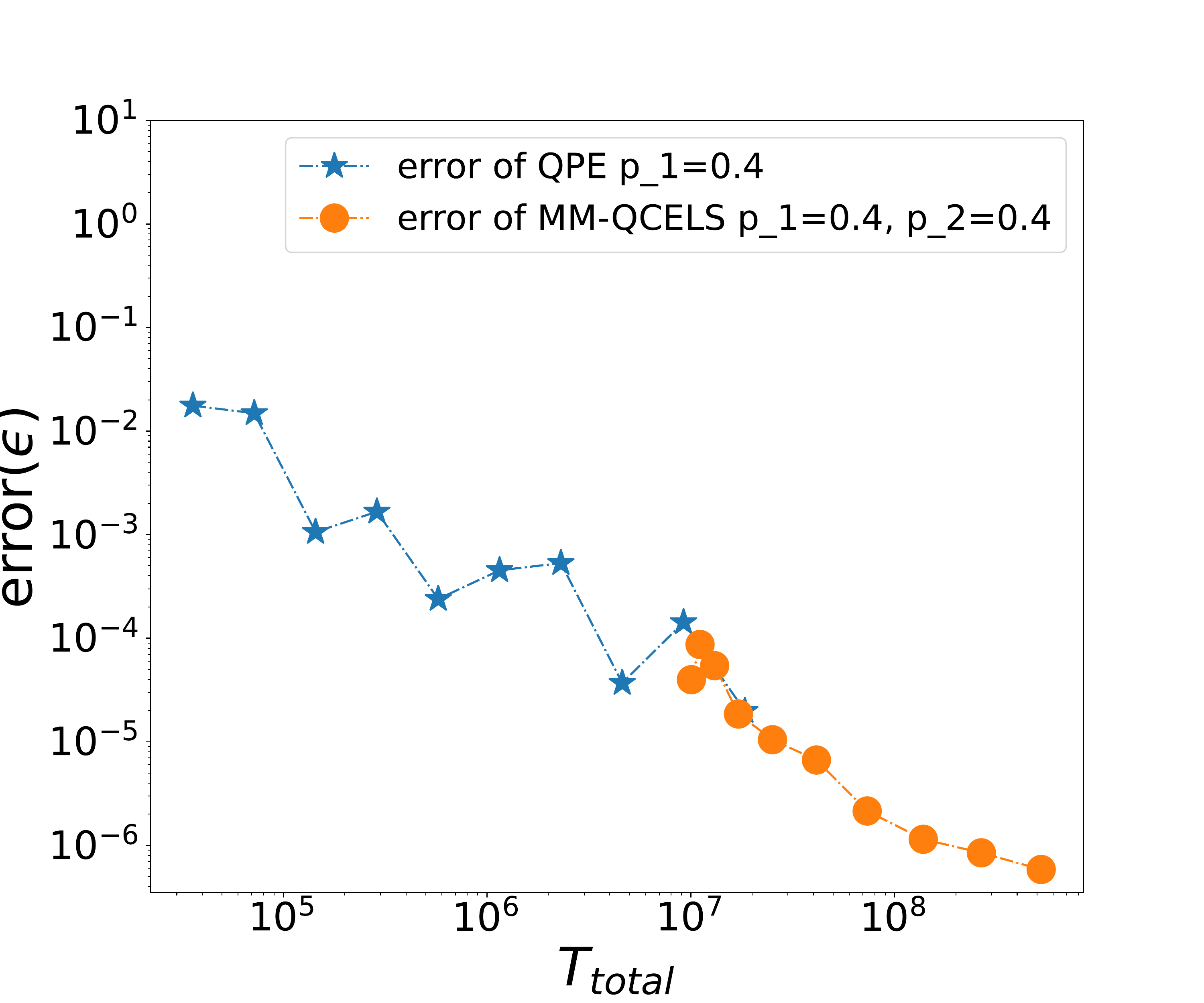}
     }
     \caption{
     \label{fig:Hubbard_4} QPE vs Algorithm \ref{alg:main} in Hubbard model with 4 sites. $p_1=p_2=0.4$. Left: Depth ($T_{\max}$); Right: Cost ($T_{\mathrm{total}}$). For Algorithm \ref{alg:main}, we choose $K=2,T_0=10(\lambda_2-\lambda_1)^{-1},N_0=4\times 10^4,N_{j\geq 1}=2\times 10^3$. $l,T_j$ are chosen according to \cref{thm:gaussian}. Compared with QPE, to achieve the same accuracy, Algorithm \ref{alg:main}  requires a much smaller circuit depth.}
\end{figure}
\begin{figure}[htbp]
     \centering
     \subfloat{
         \centering
         \includegraphics[width=0.48\textwidth]{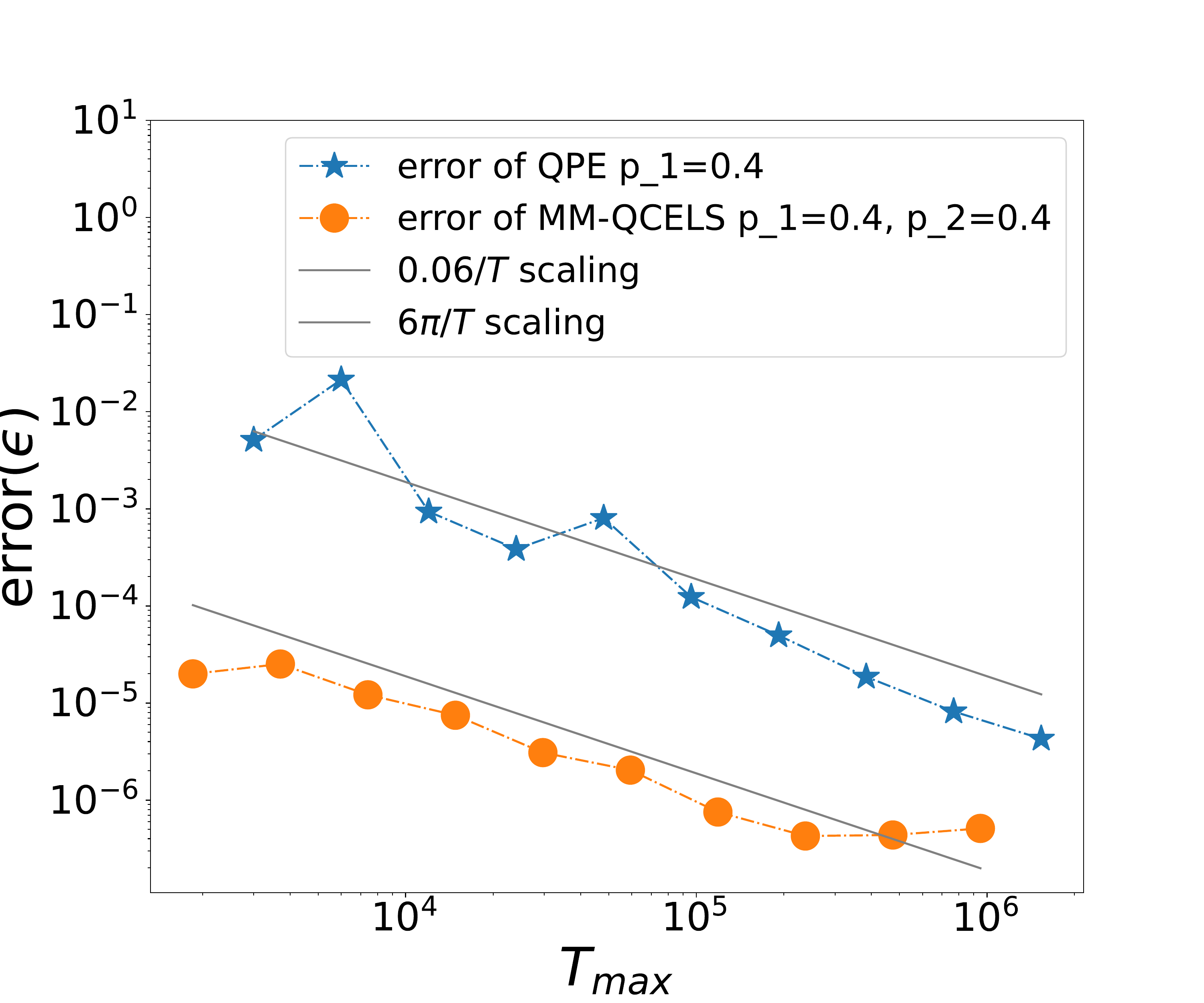}
     }
     \hfill
     \subfloat{
         \centering
         \includegraphics[width=0.48\textwidth]{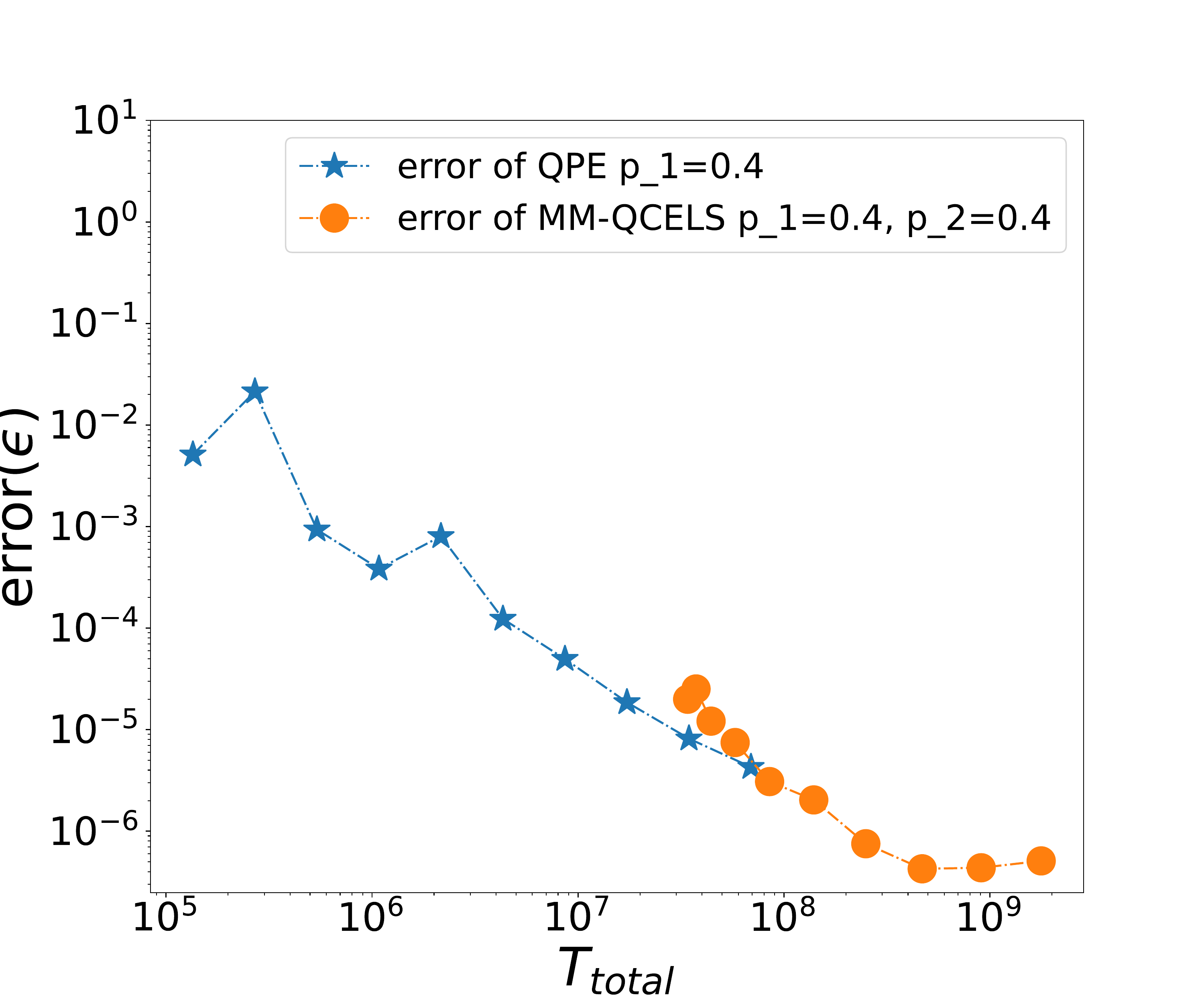}
     }
     \caption{
     \label{fig:Hubbard_8}  QPE vs Algorithm \ref{alg:main} in Hubbard model with 8 sites. $p_1=p_2=0.4$. Left: Depth ($T_{\max}$); Right: Cost ($T_{\mathrm{total}}$). For Algorithm \ref{alg:main}, we choose $K=2,T_0=10(\lambda_2-\lambda_1)^{-1},N_0=4\times 10^4,N_{j\geq 1}=2\times 10^3,\gamma=1$. $l,T_j$ are chosen according to  \cref{thm:gaussian} with $\epsilon'=0$. Compared with QPE, to achieve the same accuracy, Algorithm \ref{alg:main}   requires a much smaller circuit depth.}
\end{figure}

\FloatBarrier
\section{Discussion}\label{sec:discussion}

In this paper, we have developed a new algorithm to simultaneously estimate multiple eigenvalues using a simple circuit. This optimization-based approach, called multi-modal, multi-level quantum complex exponential least squares (MM-QCELS) saturates the Heisenberg-limited scaling, and achieves a relatively short maximal running time when the residual overlap (denoted by $R^{(K)}$) is small. With a proper initial state, this algorithm can be used to efficiently estimate ground-state and excited-state energies of a quantum many-body Hamiltonian on early fault-tolerant quantum computers. 

There are a number of directions to extend this work and to strengthen the analysis. 

\begin{enumerate}

\item If the initial state has significant high energy contributions, it can be  implicitly filtered using the circuit in \cref{fig:qc}~\cite{LinTong2022}, or explicitly filtered using quantum eigenvalue transformation of unitary matrices (QETU) \cite{dong2022ground} to satisfy the condition of small residual overlap. Similar to the discussion in Ref.~\cite{DingLin2023}, this does not necessarily require a large spectral gap  between the dominant eigenvalues and the non-dominant ones, and a \textit{relative overlap} condition (which is a property of the initial state rather than the Hamiltonian) could suffice.

\item Our complexity analysis depends the minimal dominant spectral gap $\Delta^{(K)}_{\lambda}$. This is a necessary condition, since the simulation time should be long enough in order to separate the eigenvalues. If two or more eigenvalues are nearly degenerate (i.e., the distance is less than $\epsilon$), we may combine them and view these nearly degenerate eigenvalues as a single eigenvalue. The MM-QCELS method can still be applied without changes. However, compared to the result in \cref{thm:gaussian}, there may not be a one-to-one correspondence between the estimated eigenvalues $\theta_k$ and the dominant eigenvalues $\lambda_{m_k}$.


\item Due to the complex energy landscape of the loss function, if we use a grid search to find the global minima, the cost of solving the optimization problem in \cref{eqn:op} on a \textit{classical computer} may scale exponentially in $K$ in the worst case. On the other hand, the signal processing method in \cite{price_2015,ni2023lowdepth_2} can scale polynomially in $K$, but the implementation can be much more complicated than ours. Therefore it is desirable to improve our algorithm (e.g., using a robust initialization strategy) to reduce the cost on the classical post-processing step for large $K$.

\item While Algorithm \ref{alg:main} is formulated as a multi-level optimization problem, as discussed in Remark \ref{re:add}, solving \eqref{eqn:op} with sufficiently large values of $T$ and $N$ once may be enough to approximate the dominant eigenvalues. This approach differs from the signal processing-based method discussed in~\cite{ni2023lowdepth_2}, which requires multi-level signal processing.

\end{enumerate}

\vspace{1em}
\textbf{Acknowledgment}

This material is based upon work supported by the U.S. Department of Energy, Office of Science, National Quantum Information Science Research Centers, Quantum Systems Accelerator (L.L.). Additional support is acknowledged from
the NSF Quantum Leap Challenge Institute (QLCI)
program under Grant number OMA-2016245 (Z.D.), and a Google Quantum Research
Award (L. L.). L.L. is a Simons Investigator. The authors thank Haoya Li, Hongkang Ni, Lexing Ying, Ruizhe Zhang for helpful discussions.

\bibliographystyle{abbrvurl}
\bibliography{ref}
\clearpage
\appendix

The appendix is organized as follows:
\\
\begin{enumerate}[leftmargin=.5cm,itemindent=.5cm,labelwidth=\itemindent,labelsep=0cm,align=left]
    \item[Appendix \ref{sec:extra_numerical}.] We give two extra numerical tests to demonstrate the robustness of MM-QCELS in relation to parameter selection.
    \item[Appendix \ref{sec:pf_intuitive}.] We give an intuitive proof for \cref{thm:gaussian} in this section.
    \item[Appendix \ref{sec:proof_of_thm}.] We give a rigorous proof for \cref{thm:gaussian} in this section.
    \item[Appendix \ref{sec:pf_prop}.] We give the proof of  \cref{prop:single_op_problem}, which is a key in the proof of \cref{thm:gaussian}. 
    \item[Appendix \ref{sec:bound_expectation_error}.] We provide some other technical estimations  used for our proof.
\end{enumerate}

\section{Additional numerical experiments}\label{sec:extra_numerical}
In this section, we numerically demonstrate the robustness of \cref{alg:main} using the one-dimensional transverse field Ising model (TFIM) model with $L=8$, $g=4$, as defined in \cref{sec:Ising}. We also normalize the spectrum of the original Hamiltonian $H$ using \eqref{eqn:normalize_H}.

In our first test, we set $p_1=0.7$, $p_2=0.2$, and $p_k=1/2540$ for $3\leq k\leq 256$ (so that $\sum_k p_k=1$). Therefore according to \cref{thm:gaussian}, we should choose $K=1$ or $K=2$. We apply Algorithm \ref{alg:main} (with $K=2,3,4$\footnote{The output of the algorithm comprises $K$ pairs of approximated eigenvalues and corresponding weights denoted as ${(r_i,\theta^\star_i)}^K_{i=1}$. From this set, we identify the two approximated eigenvalues with the greatest absolute weights, considering them to be the approximated dominant eigenvalues.},$T_0=10(\lambda_2-\lambda_1)^{-1},N_0=3\times 10^3,N_{j\geq 1}=2\times 10^3$ and $\gamma=1$) to estimate $(\lambda_1,\lambda_2)$ and compute the maximal error~\eqref{eqn:max_error}. We also run QPE $10$ times \textit{only} to estimate $\lambda_1$. The comparison of the results is shown in \cref{fig:Ising_2}. Surprisingly, although $R^{(K)}/p^{(K)}_{\min}\gg 1$ when we choose $K=3,4$, meaning that it does not satisfy the condition of \cref{thm:gaussian}, Algorithm \ref{alg:main} still estimates $(\lambda_1.\lambda_2)$ accurately with short circuit depth $(T_{\max})$. Moreover, the total evolution time ($T_{\mathrm{total}})$ of Algorithm \ref{alg:main} is similar to that of QPE. One intuitive explanation for this phenomenon can be derived from the proof of \cref{thm:gaussian} in Appendix \ref{sec:pf_intuitive}. When $K>2$, aiming to reach the global minimum leads us to anticipate the existence of a unique pair $(k_1, k_2) \in \{1, 2, \cdots, K\}^{\otimes 2}$ with the property that $\theta^*_{k_1} \approx \lambda_1$ and $\theta^*_{k_2} \approx \lambda_2$. Then, analogous to the derivation of \eqref{eqn:depth_improve}, we can show $|\theta^*_{k_1}-\lambda_{1}|= \mathcal{O}\left(\frac{R^{(K)}}{p_{1}T}\right)$ and $T|\theta^*_{k_2}-\lambda_{2}|= \mathcal{O}\left(\frac{R^{(K)}}{p_{2}T}\right)$. These equalities imply that, even if a wrong choice of $K$ is chosen, as long as $R^{(K)}/p_{1} \ll 1$ and $R^{(K)}/p_{2} \ll 1$ hold true, an accurate estimation of the dominant eigenvalues can still be obtained with relatively short circuit depth.

The second test in this section focuses on studying the effect of small $p^{(K)}_{\min}$ (or $R^{(K)}/p^{(K)}_{\min}\approx 1$). We construct the initial state with $p_1=0.21,p_2=0.6$ and set $K=2$. In this setting, we have $R^{(K)}/p^{(K)}_{\min}=0.19/0.21\approx 1$. We apply Algorithm \ref{alg:main} (with $K=2$,$T_0=10(\lambda_2-\lambda_1)^{-1},N_0=3\times 10^3,N_{j\geq 1}=2\times 10^3$ and $\gamma=1$) to estimate $(\lambda_1,\lambda_2)$ and compute the maximal error~\eqref{eqn:max_error}. We also run QPE $10$ times to estimate $\lambda_1$ \textit{only}. The results is summarized in \cref{fig:Ising_3}. Interestingly, although \cref{thm:gaussian} requires $R^{(K)}/p^{(K)}_{\min}\ll 1$ to achieve short circuit depth, MM-QCELS still demonstrates superior performance in circuit depth compared to QPE in this case. Intuitively, this phenomenon finds its explanation through reasoning analogous to the first point raised in Remark \ref{rem:after_intuition}. To be more specific, for any given $\delta>0$ and a small enough $\epsilon$, when $T\geq \delta/\epsilon\gg \Delta^{-1}_\lambda$, where $\Delta_\lambda$ is the spectral gap between the dominant eigenvalues and the non-dominant ones, we can establish that $|\theta^*_k-\lambda_{m_k}|\leq \epsilon$. Here, $\theta^*_k$ is the solution to the ideal loss function~\eqref{eqn:loss_multi_modal_perfect}. Furthermore, we can select the number of data points as $N=\widetilde{\Theta}(\delta^{-2-o(1)})$ to effectively control the random measurement error. This approach ultimately enables us to attain Heisenberg-limited scaling and a short circuit depth.

\begin{figure}[htbp]
     \subfloat{
         \centering
         \includegraphics[width=0.48\textwidth]{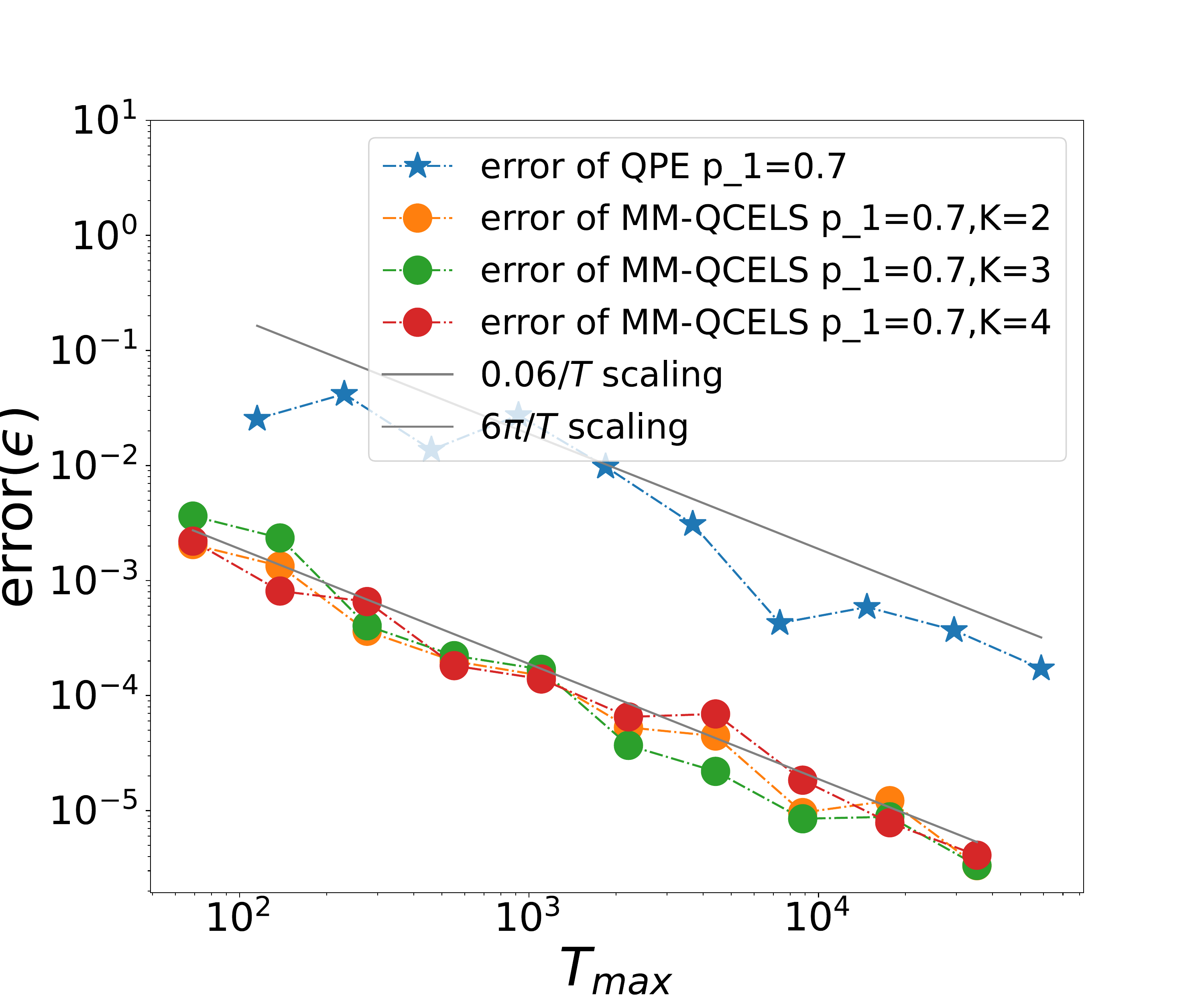}
     }
     \subfloat{
         \centering
         \includegraphics[width=0.48\textwidth]{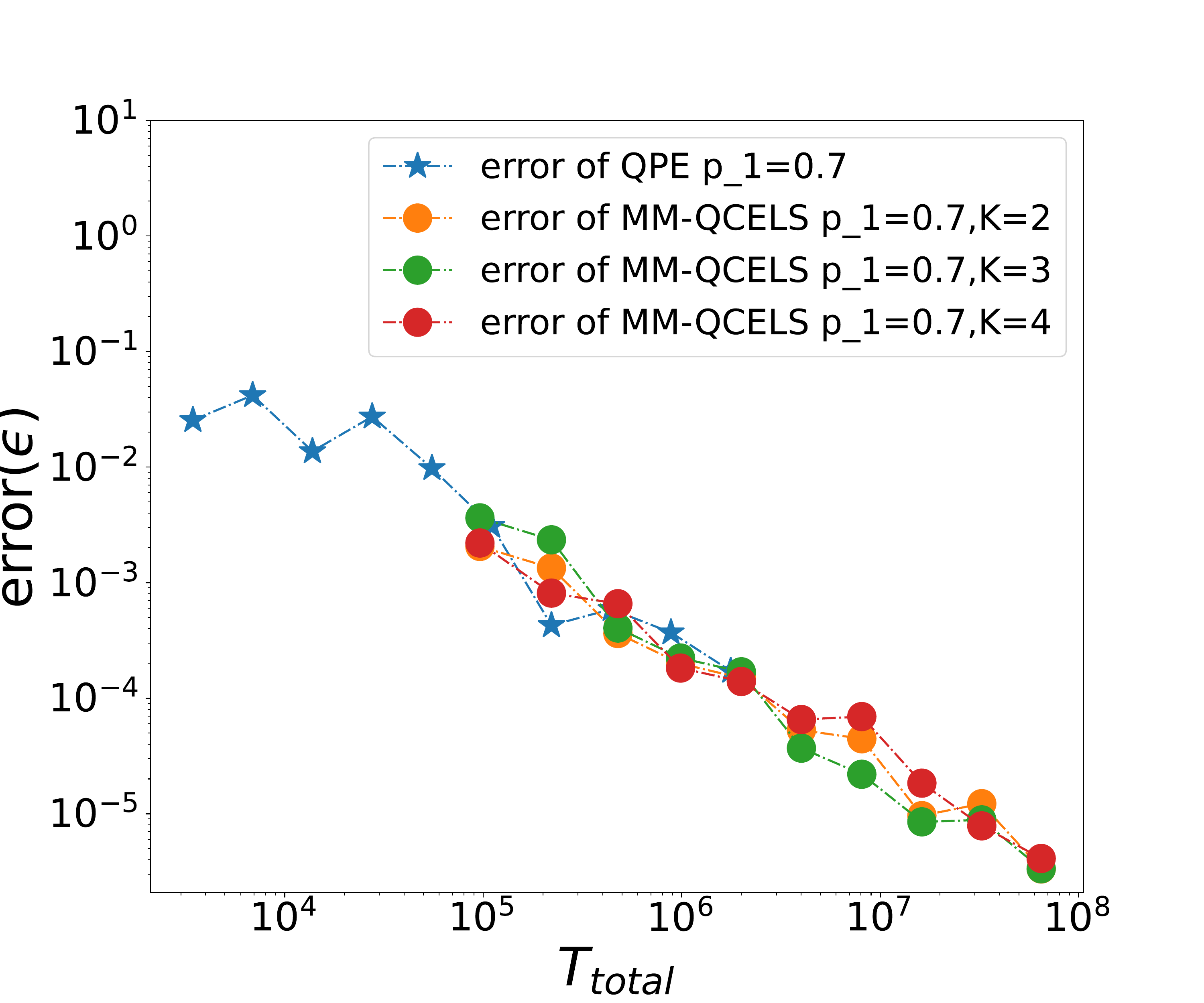}
     }
     \caption{
     \label{fig:Ising_2} QPE vs Algorithm \ref{alg:main}  in TFIM model with 8 sites. $p_1=0,7,p_2=0.2,p_k=1/2540$ for $k\geq 3$. Left: Depth ($T_{\max}$); Right: Cost ($T_{\mathrm{total}}$). For Algorithm \ref{alg:main}, we choose $K=2,3,4,T_0=10(\lambda_2-\lambda_1)^{-1},N_0=3\times 10^3,N_{j\geq 1}=2\times 10^3,\gamma=1$. $l,T_j$ are chosen according to \cref{thm:gaussian}. Both methods have the error scales linearly in $1/T_{\max}$.  The constant factor $\delta=T\epsilon$ of Algorithm \ref{alg:main} is much smaller than that of QPE.}
\end{figure}

\begin{figure}[htbp]
     \subfloat{
         \centering
         \includegraphics[width=0.48\textwidth]{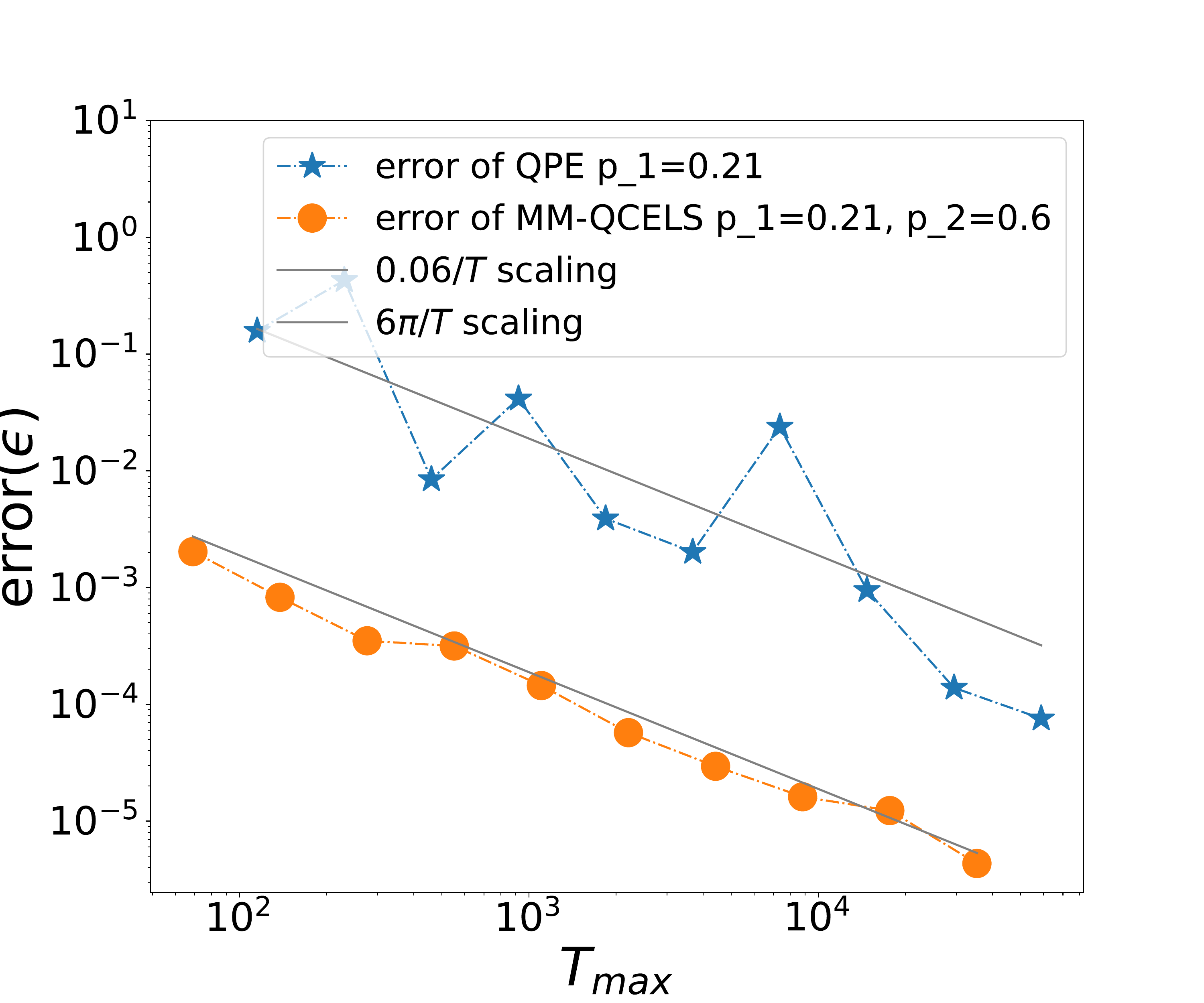}
     }
     \subfloat{
         \centering
         \includegraphics[width=0.48\textwidth]{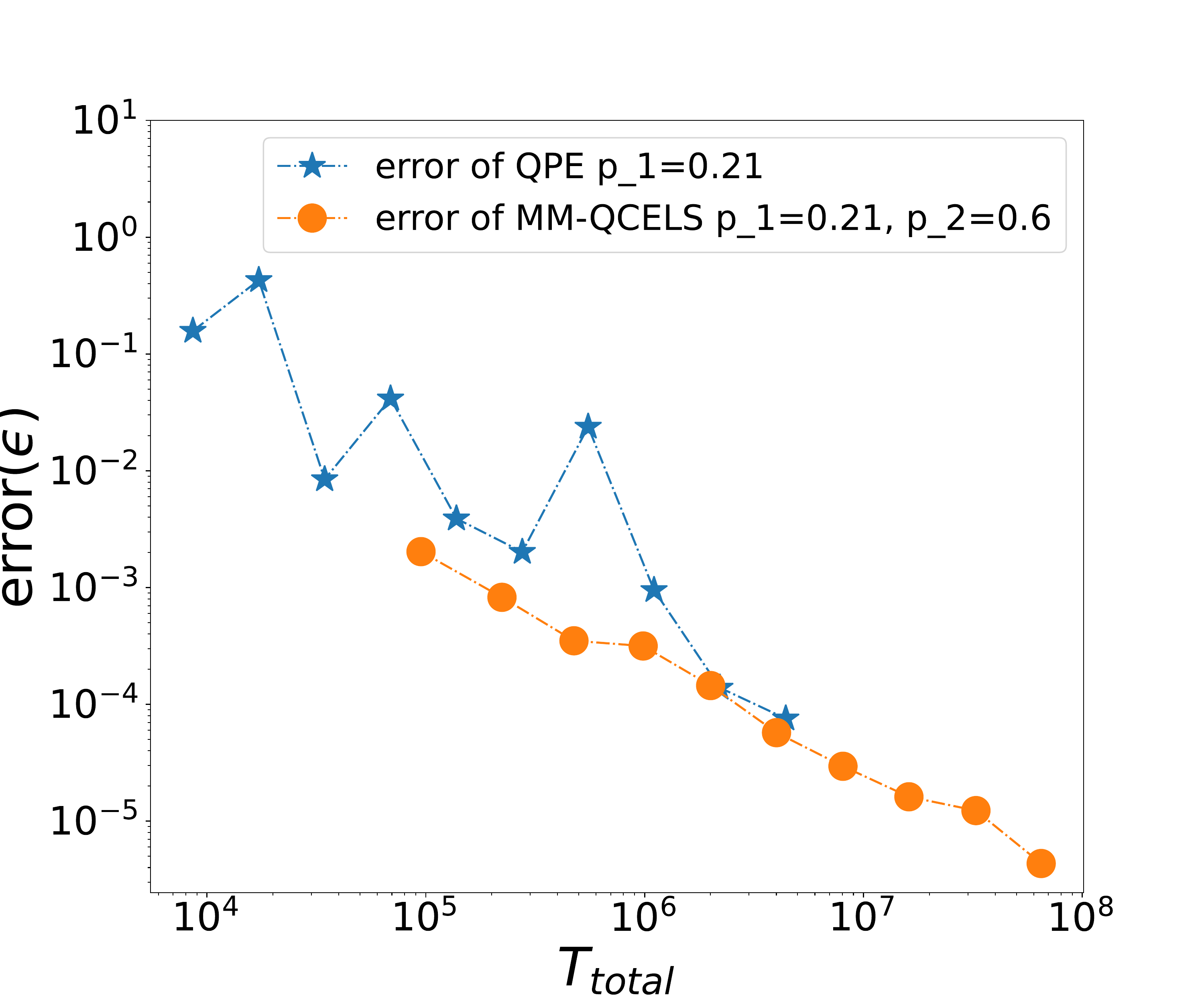}
     }
     \caption{
     \label{fig:Ising_3} QPE vs Algorithm \ref{alg:main}  in TFIM model with 8 sites. $p_1=0,21,p_2=0.6$. Left: Depth ($T_{\max}$); Right: Cost ($T_{\mathrm{total}}$). For Algorithm \ref{alg:main}, we choose $K=2,T_0=10(\lambda_2-\lambda_1)^{-1},N_0=3\times 10^3,N_{j\geq 1}=2\times 10^3,\gamma=1$. $l,T_j$ are chosen according to \cref{thm:gaussian}. Both methods have the error scales linearly in $1/T_{\max}$.  The constant factor $\delta=T\epsilon$ of Algorithm \ref{alg:main} is much smaller than that of QPE.}
\end{figure}

\section{An intuitive proof of Theorem \ref{thm:gaussian}}\label{sec:pf_intuitive}

In this section, we first give an informal derivation to show that by solving the optimization problem, it is possible to find an accurate approximation to dominant eigenvalues with a short maximal running time. For simplicity, in the inituitive proof, we only consider the case when
\begin{enumerate}
\item The minimal dominant spectral gap $\Delta^{(K)}_\lambda$ is much larger than the precision $\epsilon$: $\Delta^{(K)}_\lambda\gg \epsilon$;
\item The modes in $\mc{D}$ are dominant: $p^{(K)}_{\min}>CR^{(K)}$ for some constant $C>0$;
\item The maximal runtime is sufficient for resolving the dominant eigenvalues: $T_{\max}\ge C'/\Delta^{(K)}_\lambda$ for some constant $C'$>0.
\end{enumerate}
Here, $p^{(K)}_{\min}$ is defined in \eqref{eqn:pK_min}, $\Delta^{(K)}_{\lambda}$ is defined in \eqref{eqn:DeltaK_lambda}, and $R^{(K)}$ is defined in \eqref{eqn:RK}.

For now we only focus on the ideal loss function, which can be rewritten as 
\begin{equation}\label{eqn:L_K_ideal}
\mathcal{L}_K\left(\vr,\vtheta\right)=\vr^\dagger U\left(\vtheta\right) \vr-\left(V^\dagger\left(\vtheta\right)\vr+\vr^\dagger V\left(\vtheta\right)\right)+W.
\end{equation}
Here $U\left(\vtheta\right)\in\mathbb{C}^{K\times K}$, $V\left(\vtheta\right)\in\mathbb{C}^{(K)}$, and $W\in\mathbb{R}$ are defined as
\[
U_{k_1,k_2}=F(\theta_{k_1}-\theta_{k_2}),\ V_{k}=\sum^M_{m=1}p_mF(\theta_k-\lambda_m),\  W=\int^\infty_{-\infty}a(t)\left|\sum^M_{m=1} p_{m}\exp\left(-i \lambda_{m} t\right)\right|^2\rd t 
\]
for $1\leq k,k_1,k_2\leq K$. For simplicity, in this intuitive analysis we also neglect the difference between the truncated Gaussian distribution and the Gaussian distribution, i.e., 
\begin{equation}
F(x)=\int^\infty_{-\infty}a(t)\exp(ixt)\rd t\approx \exp(-T^2x^2/2)\,.
\end{equation}

Denote $\mathcal{D}=\{m_1,m_2,\dots,m_K\}$, where $m_1<m_2<\dots<m_K$. Without loss of generality, assume the minimizer satisfies $\theta^*_1\leq \theta^*_2\leq \dots\leq \theta^*_K$.
We first claim (without giving the proof here) that when $T=\Omega\left(1/\Delta^{(K)}_\lambda\right)$,
\begin{equation}\label{eqn:intuitive_1}
\left|\lambda_{m_k}-\theta^*_{k}\right|\leq \frac{\Delta^{(K)}_{\lambda}}{4},\quad \forall 1\leq k\leq K\,.
\end{equation}
In other words, each $\theta_k^*$ approximates a unique dominant eigenvalue up to a constant proportional to the minimal dominant spectral gap.  The next step is to refine the eigenvalue estimate to the target precision $\epsilon$.

When $T=\Omega\left(1/\Delta^{(K)}_\lambda\right)$, the matrix $U$ is approximately the identity matrix, and $V_k\approx p_{m_k}F(\theta_k-\lambda_{m_k})$. This gives
\begin{equation}\label{eqn:approximation_1}
\mathcal{L}_K\left(\vr,\vtheta\right)\approx \sum^{K}_{k=1}\left(|r_k|^2-2\mathrm{Re}(p_{m_k}r_k)F(\theta_k-\lambda_{m_k})-2\sum_{m'\in\mathcal{D}^c}\mathrm{Re}(p_{m'}r_k)F(\theta_k-\lambda_{m'})\right)+W\,.
\end{equation}
Hence conceptually, we can solve for each pair $(r_k,\theta_k), 1\leq k\leq K$ \textit{independently} as 
\begin{equation}\label{eqn:indenpent_r_k_theta_k}
(r^*_k,\theta^*_k)=\argmin_{r,\theta}|r|^2-2\mathrm{Re}(p_{m_k}r)F(\theta-\lambda_{m_k})-2\sum_{m'\in\mathcal{D}^c}\mathrm{Re}(p_{m'}r)F(\theta-\lambda_{m'})\,.
\end{equation}
Consider the minimization problem on the right-hand side with fixed $k$. Noticing that this new loss function is quadratic in $r$, we obtain that
\begin{equation}\label{eqn:theta_k_intuitive}
    \theta^*_k=\argmax_{\theta}p_{m_k}F(\theta-\lambda_{m_k})+\sum_{m'\in\mathcal{D}^c}p_{m'}F(\theta-\lambda_{m'})
\end{equation}
Plugging $\theta=\lambda_{m_k}$ in the right-hand side, we obtain
\begin{equation}\label{eqn:theta_k_intuitive_compare}
p_{m_k}F(\theta^*_k-\lambda_{m_k})+\sum_{m'\in\mathcal{D}^c}p_{m'}F(\theta^*_k-\lambda_{m'})\geq p_{m_k}+\sum_{m'\in\mathcal{D}^c}p_{m'}F(\lambda_{m_k}-\lambda_{m'})\,.
\end{equation}
Using the Gaussian approximation $F(x)=\exp(-T^2x^2/2)$, we have \[\left|F'(x)\right|=T^2\left|x\right|\exp(-T^2x^2/2)\leq \left(\sup_{z\in\mathbb{R}} z\exp(-z^2/2)\right)T=\Theta(T)\,,\]
where we view $T|x|$ as $z$ in the inequality. This implies $F(x)$ is a $\mathcal{O}(T)$-Lipschitz function. Combining this with \eqref{eqn:theta_k_intuitive_compare}, we obtain
\begin{equation}\label{eqn:exponential_lowerbound}
\begin{aligned}
&\exp\left(-\frac{T^2(\theta^*_k-\lambda_{m_k})^2}{2}\right)-1\\
\geq &\sum_{m'\in\mathcal{D}^c}\frac{p_{m'}}{p_{m_k}}\left(F(\lambda_{m_k}-\lambda_{m'})-F(\theta^*_k-\lambda_{m'})\right)\\
\geq&-\frac{R^{(K)}}{p_{m_k}}\min\{\mathcal{O}(|T(\theta^*_k-\lambda_{m_k})|),1\}.
\end{aligned}
\end{equation}
where we use $0\leq F\leq 1$  and $F$ is a $\mathcal{O}(T)$-Lipschitz function in the last inequality. From \eqref{eqn:exponential_lowerbound}, we first have $\exp\left(-\frac{T^2(\theta^*_k-\lambda_{m_k})^2}{2}\right)\geq 1-\frac{R^{(K)}}{p_{m_k}}$, which implies
\begin{equation}
T|\theta^*_k-\lambda_{m_k}|=\mathcal{O}\left(\sqrt{\frac{R^{(K)}}{p_{m_k}}}\right)\,.
\label{eqn:qcels_estimate}
\end{equation}

When $R^{(K)}$ is sufficiently small, combining \cref{eqn:exponential_lowerbound,eqn:qcels_estimate}, we further obtain 
\begin{equation}\label{lower_bound_exp_intuitive}
1-\frac{T^2(\theta^*_k-\lambda_{m_k})}{4}\ge \exp\left(-\frac{T^2(\theta^*_k-\lambda_{m_k})^2}{2}\right)\geq 1-\frac{R^{(K)}}{p_{m_k}}\mathcal{O}(|T(\theta^*_k-\lambda_{m_k})|)\,.
\end{equation}
This implies 
\begin{equation}
T|\theta^*_k-\lambda_{m_k}|= \mathcal{O}\left(\frac{R^{(K)}}{p_{m_k}}\right).
\label{eqn:depth_improve}
\end{equation}

In summary, to obtain $\left|\theta^*_k-\lambda_{m_k}\right|\leq \epsilon$ for all $1\leq k\leq K$, we can set
\begin{equation}\label{eqn:rough_informal_short_depth}
    T=\Theta\left(\frac{R^{(K)}}{p^{(K)}_{\min}\epsilon}\right)\,.
\end{equation}
This implies the depth constant of maximal running time $T_{\max}=\Theta\left(\frac{ R^{(K)}}{p^{(K)}_{\min}\epsilon}\right)$ is much smaller than 
$\frac{\pi}{\epsilon}$ when $R^{(K)}/p^{(K)}_{\min}$ is close to $0$.

We remark that when $K=1$, there is only one dominant mode $m_k$, and $p_{m_k}=1-R^{(K)}$ by definition. In this case, the result in \cref{eqn:qcels_estimate} is comparable to the estimate in Ref.~\cite{DingLin2023} for the QCELS method. The analysis in this work provides a tighter bound of the maximal runtime (or the circuit depth). Specifically, \cref{eqn:depth_improve} provides a quadratic improvement with respect to the preconstant $R^{(K)}$ for estimating a single dominant eigenvalue, and the same conclusion holds for estimating multiple eigenvalues.

\begin{rem}\label{rem:after_intuition}
\begin{enumerate}

\item When a spectral gap exists between the dominant eigenvalues and the non-dominant ones (represented as $\Delta_{\lambda}$), we can further reduce the maximum runtime to $T_{\max}=\wt{\Theta}(1/\min\{\Delta_{\lambda},\Delta^{(K)}_{\lambda}\})$.
This reduction can be derived in a similar manner to the previous intuitive analysis. Specifically, referring to equation \eqref{eqn:exponential_lowerbound}, when $T$ is sufficiently large to ensure $\left|\theta^{(K)}-\lambda_{m_k}\right|\leq \frac{\Delta_{\lambda}}{2}$, we have \[\left|F(\lambda_{m_k}-\lambda_{m'})-F(\theta^{(K)}-\lambda{m'})\right|=\mathcal{O}(\exp(-T^2\Delta_{\lambda}^2/8))\,.\] Following a similar derivation to \eqref{eqn:qcels_estimate}-\eqref{eqn:depth_improve}, we obtain the following expression:
\[
T|\theta^*_k-\lambda_{m_k}|= \mathcal{O}\left(\frac{\exp(-T^2\Delta_{\lambda}^2/8)}{p_{m_k}}\right)\,,
\]
This implies that $T_{\max}=\wt{\Theta}(1/\min\{\Delta_{\lambda},\Delta^{(K)}_{\lambda}\})$ is sufficient to ensure $\epsilon$-accuracy. However, in this scenario, while $T_{\max}$ logarithmically depends on the desired precision $\epsilon$, the number of data points needs to increase to $N=\Omega(\epsilon^{-2})$ to adequately control the random noise. This is similar in flavor to the result in \cite{Wang_2022} as well as in QCELS \cite{DingLin2023} for estimating a single dominant eigenvalue.

\item The rigorous proof of Theorem \ref{thm:gaussian} follows a slightly different path from the previous intuitive derivation. The numerical loss function \eqref{eqn:loss_multi_modal} admits a similar quadratic expansion as in \cref{eqn:L_K_ideal} with noisy coefficients $U(\theta)$, $V(\theta)$, and $W$. Due to the presence of noise and off-diagonal entries in $U$, the perfect separation assumed in \eqref{eqn:indenpent_r_k_theta_k} no longer holds, and the analysis of the independent optimization problem cannot be directly applied to show that $\theta_k^*$ is close to $\lambda_{m_k}$. To overcome this difficulty, we adopt the idea of separation and decompose the numerical loss function after bounding the noise. By comparing the resulting loss function with $L_K\left(\{p_m\}_{m\in\mathcal{D}},\{\lambda_m\}_{m\in\mathcal{D}}\right)$,  we demonstrate that $\exp\left(-\frac{T^2(\theta^*_k-\lambda_{m_k})^2}{2}\right)-1= \mathcal{O}(q)$ and $|\theta_k^*-\lambda_{m_k}|=\mathcal{O}(q/T)$, where $q=\Theta(R^{(K)}/p_{\min}^{(K)})$. This implies, to obtain $\epsilon$-accuracy, the maximal running time $T_{\max}=\gamma T=\delta/\epsilon$, where $\delta=\widetilde{\Theta}\left(q\log(q^{-1})\right)$.

\end{enumerate}
\end{rem}

\section{Rigorous proof of Theorem \ref{thm:gaussian}}\label{sec:proof_of_thm}

To prove Theorem \ref{thm:gaussian}, we first rewrite the optimization problem \eqref{eqn:op} as 
\begin{equation}\label{eqn:separation_of_error}
\begin{aligned}
&\argmin_{\vr,\vtheta}L_{K}\left(\vr,\vtheta\right)\\
=&\argmin_{\vr,\vtheta}\frac{1}{N}\sum^N_{n=1}\left|\sum^{M}_{m=1}p_m\exp(-i \lambda_m t_n)-\sum^{K}_{k=1}r_k\exp(-i\theta_k t_n)\right|^2\\
&-\frac{2}{N}\sum^N_{n=1}\mathrm{Re}\left(\left\langle E_n,\sum^{K}_{k=1}r_k\exp(-i\theta_k t_n)\right\rangle\right)\\
= &\argmin_{\vr,\vtheta}\int^\infty_{-\infty}a(t)\left|\sum^{M}_{m=1}p_m\exp(-i \lambda_m t)-\sum^{K}_{k=1}r_k\exp(-i\theta_k t)\right|^2\rd t+E_{p,r}+E_{r,r}+E_{r,Z}\\
\end{aligned}
\end{equation}
where we omit term $\frac{1}{N}\sum^N_{n=1}|
E_n|^2$ in the first equality and $p_mp_{m'}\exp(i(\lambda_m-\lambda_{m'})t)$ terms in the second equality because they do not affect the solution of the optimization problem. Here
\begin{equation}\label{eqn:formula_E}
\begin{aligned}
&E_{p,r}(\vr,\vtheta)=2\mathrm{Re}\left(\sum^{M}_{m=1}\sum^{K}_{k=1}p_mr_k\left(\frac{1}{N}\sum^N_{n=1}\exp(i(\theta_k- \lambda_m) t_n)-F(\theta_k-\lambda_m)\right)\right)\,,\\
&E_{r,r}(\vr,\vtheta)=2\mathrm{Re}\left(\sum^{K}_{k_1\neq k_2}\overline{r}_{k_1}r_{k_2}\left(\frac{1}{N}\sum^N_{n=1}\exp(i(\theta_{k_2}-\theta_{k_1}) t_n)-F(\theta_{k_2}-\theta_{k_1})\right)\right)\,,\\
&E_{r,Z}(\vr,\vtheta)=-2\mathrm{Re}\left(\sum^{K}_{k=1}\overline{r}_k\left(\frac{1}{N}\sum^N_{n=1}E_n\exp(i\theta_k t_n)\right)\right)\,.
\end{aligned}
\end{equation}
Roughly speaking, when $N\gg 1$, we have the expectation error $|E_{p,r}|,|E_{r,r}|,|E_{r,Z}|= \mathcal{O}(1/\sqrt{N})$ (see Appendix \ref{sec:bound_expectation_error} for detail). 

The proof contains two steps: 1. Find initial estimation intervals for the dominant eigenvalues (using $T_0,N_0$); 2. In the correct estimation intervals, find more accurate approximations of the dominant eigenvalues (from $T_{j}$ to $T_{j+1}$). Now, we introduce a lemma and a proposition to control the complexity of these two steps respectively.

Let $E(\vr,\vtheta)=E_{p,r}(\vr,\vtheta)+E_{r,r}(\vr,\vtheta)+E_{r,Z}(\vr,\vtheta)$. We use the following lemma to control the complexity of the first step:
\begin{lem}\label{lem:single_op_problem}  
Given $0<q<1$ such that $q=\Omega(R^{(K)}/p^{(K)}_{\min})$, where $p^{(K)}_{\min}$ and $R^{(K)}$ are defined in \eqref{eqn:pK_min} and \eqref{eqn:RK}, we assume $p^{(K)}_{\min}>3R^{(K)}$. Define
  \[
  \left(\vr^*,\vtheta^*\right)=\argmin_{\|\vr\|_1\leq 1,\theta_k\in [-\pi,\pi]}L_{K}\left(\vr,\vtheta\right)\,,
  \]
where $L_{K}$ is defined in \eqref{eqn:loss_multi_modal}. If
\begin{equation}\label{eqn:condition_gaussian_lemma}
\begin{aligned}
  &\gamma =\Theta\left(\log\left(1/\min\left\{p^{(K)}_{\min}q,\left(p^{(K)}_{\min}\right)\left(p^{(K)}_{\min}-3R^{(K)}\right)\right\}\right)\right),\\
&T=\Omega\left(\left(\Delta^{(K)}_{\lambda}\right)^{-1} \log\left(\left(p^{(K)}_{\min}\right)^{-1}\max\left\{\left(p^{(K)}_{\min}-3R^{(K)}\right)^{-1},\left(p^{(K)}_{\min}\right)^{-1}q^{-2}\right\}\right)\right),\\
 &|E|=\mathcal{O}\left(\min\left\{\left(p^{(K)}_{\min}\right)^2q^2,p^{(K)}_{\min}\left(p^{(K)}_{\min}-3R^{(K)}\right)\right\}\right)\,,\\
 \end{aligned}
\end{equation}
then, for each $m\in\mathcal{D}$, there must exist a unique $1\leq k_m\leq K$ such that 
\begin{equation}\label{eqn:close_gaussian_lemma}
     \left|\lambda_{m}-\theta^*_{k_m}\right|\leq \frac{q}{T},\quad \left|p_m-r^*_{k_m}\right|\leq q p_m\,.
\end{equation}
\end{lem}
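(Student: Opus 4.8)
\textbf{Proof proposal for Lemma~\ref{lem:single_op_problem}.}
The plan is to analyze the constrained minimization of $L_K$ by comparing the optimal value with the value attained at the ``truth'' $(\{p_m\}_{m\in\mathcal D},\{\lambda_m\}_{m\in\mathcal D})$ (padded arbitrarily to $K$ components when $|\mathcal D|<K$, though here $|\mathcal D|=K$). Write the numerical loss in the quadratic form $L_K(\vr,\vtheta)=\vr^\dagger U(\vtheta)\vr-2\Re(V^\dagger(\vtheta)\vr)+W+E(\vr,\vtheta)$ as in \eqref{eqn:separation_of_error}--\eqref{eqn:formula_E}, where $U,V,W$ are the ideal (Gaussian-filter) coefficients and $E$ is the aggregated statistical error, uniformly bounded by the hypothesis $|E|=\mathcal O(\min\{(p^{(K)}_{\min})^2q^2,\;p^{(K)}_{\min}(p^{(K)}_{\min}-3R^{(K)})\})$ on the feasible set (using $\|\vr\|_1\le 1$). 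First I would record the a~priori bound $L_K(\vr^*,\vtheta^*)\le L_K(\{p_m\},\{\lambda_m\})$, and estimate the right side: at the truth, the ideal loss equals $\int a(t)|\sum_{m'\in\mathcal D^c}p_{m'}e^{-i\lambda_{m'}t}|^2\rd t\le (R^{(K)})^2$ plus cross terms that are $\mathcal O((R^{(K)})^2)$, so the optimal loss is $\mathcal O((R^{(K)})^2+|E|)$, hence small.

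Next I would show the minimizer must ``cover'' every dominant eigenvalue. Since $F$ is even, nonnegative, maximized at $0$, and (for $\gamma$ large enough as prescribed) a good approximation of the Gaussian $\exp(-T^2x^2/2)$, a configuration whose $\theta_k$'s all sit far (on the scale $1/T$) from some $\lambda_{m}$ forces the ideal loss to be at least $\Omega((p^{(K)}_{\min})^2)$ — essentially because the term $p_m e^{-i\lambda_m t}$ cannot be cancelled by $\sum_k r_k e^{-i\theta_k t}$ when the $F(\theta_k-\lambda_m)$ are all tiny; this is where the $\gamma$ choice enters, to push the truncation tail of $F$ below the relevant thresholds. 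Combined with the smallness of $L_K(\vr^*,\vtheta^*)$ and the separation $\Delta^{(K)}_\lambda$ together with $T=\Omega((\Delta^{(K)}_\lambda)^{-1}\log(\cdots))$, a counting/pigeonhole argument yields that there is exactly one $k_m$ with $|\theta^*_{k_m}-\lambda_m|$ small — first at the coarse scale $\Delta^{(K)}_\lambda/4$, establishing the bijection $m\mapsto k_m$, and then one must bootstrap to the target scale $q/T$.

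For the quantitative refinement, with the bijection fixed, $U(\vtheta^*)$ is within $\exp(-\Omega(T^2(\Delta^{(K)}_\lambda)^2))$ of the identity, so the problem approximately decouples: for each $k=k_m$, solving the quadratic in $r_k$ gives $r^*_{k_m}\approx V_{k_m}\approx p_m F(\theta^*_{k_m}-\lambda_m)$, and $\theta^*_{k_m}$ approximately maximizes $p_m F(\theta-\lambda_m)+\sum_{m'\in\mathcal D^c}p_{m'}F(\theta-\lambda_{m'})$. Plugging $\theta=\lambda_m$ as a competitor and using that $F$ is $\mathcal O(T)$-Lipschitz and $0\le F\le 1$, one gets $\exp(-T^2(\theta^*_{k_m}-\lambda_m)^2/2)\ge 1-\mathcal O(R^{(K)}/p_m)$, hence $T|\theta^*_{k_m}-\lambda_m|=\mathcal O(\sqrt{R^{(K)}/p_m})$, and then feeding this back (the quadratic-vs-linear comparison in \eqref{lower_bound_exp_intuitive}) upgrades to $T|\theta^*_{k_m}-\lambda_m|=\mathcal O(R^{(K)}/p_m)=\mathcal O(q)$, which is \eqref{eqn:close_gaussian_lemma}; the $r$-bound $|p_m-r^*_{k_m}|\le qp_m$ then follows from $r^*_{k_m}\approx p_mF(\theta^*_{k_m}-\lambda_m)$ and $1-F(\theta^*_{k_m}-\lambda_m)=\mathcal O(q^2)$ together with the controlled $U$-off-diagonal and $E$ contributions.

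The main obstacle, and the place where most of the work sits, is making the ``approximate decoupling'' rigorous in the presence of both the off-diagonal entries of $U$ and the nonsmooth/nonconvex error term $E(\vr,\vtheta)$: unlike the intuitive derivation, one cannot literally solve $K$ independent scalar problems, so I would instead argue by a perturbation/comparison argument — bound $\|U(\vtheta^*)-I\|$, invert $U$ on the feasible set, substitute the near-optimal $\vr$ explicitly, and carefully track how the $\mathcal O(|E|)$ and $\mathcal O(\text{off-diag})$ slack propagate into the scalar inequalities for $\theta^*_{k_m}$ — ensuring at each stage that the chosen magnitudes of $\gamma$, $T$, and $|E|$ in \eqref{eqn:condition_gaussian_lemma} are exactly what is needed to keep every error term below the $(p^{(K)}_{\min})^2q^2$-level threshold. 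A secondary technical point is handling uniqueness of $k_m$ (no two $\theta^*_k$ both land near the same $\lambda_m$, and none is ``wasted'') which again uses that $L_K(\vr^*,\vtheta^*)$ is too small to afford leaving any dominant mode uncancelled.
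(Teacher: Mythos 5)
Your first two stages match the paper's proof: the comparison of $L_K(\vr^*,\vtheta^*)$ with the value at the truth, the bound $L_K(\{p_m\},\{\lambda_m\})\le (R^{(K)})^2+\hat E+\hat E_F$, and the contradiction/pigeonhole argument forcing a unique $\theta^*_{k_m}$ within $\Delta^{(K)}_\lambda/4$ of each dominant $\lambda_m$ are exactly Step~1 of the paper (including the role of $\gamma$ in controlling the truncation error $E_F$). Where you diverge is the refinement. You propose to rigorize the decoupled picture of \eqref{eqn:indenpent_r_k_theta_k}: invert $U(\vtheta^*)\approx I$, eliminate $\vr$, treat each $\theta_k$ as an approximate maximizer of $p_mF(\theta-\lambda_m)+\sum_{m'\in\mathcal D^c}p_{m'}F(\theta-\lambda_{m'})$, and bootstrap through the Lipschitz inequality to $T|\theta^*_{k_m}-\lambda_m|=\mathcal O(q)$. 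The paper deliberately does not do this (see the second item of \cref{rem:after_intuition}): instead it never decouples the optimization at all. It compares the optimal loss with the truth once, deduces $\mathcal F^*-2(\mathcal F^*)^{1/2}R^{(K)}=\mathcal O((p^{(K)}_{\min}q)^2)$ hence $\mathcal F^*=\mathcal O((p^{(K)}_{\min}q)^2)$ for the Gaussian-weighted mismatch restricted to the matched dominant modes, kills the cross terms with the interaction bound \eqref{eqn:interaction_bound}, and then reads both estimates of \eqref{eqn:close_gaussian_lemma} directly off the per-mode identity in \eqref{eqn:key_F_gaussian}. Note in particular that the paper's final bound is obtained from $1-\exp(-T^2x^2)=\mathcal O(q^2)$, i.e.\ it does not need the linear-in-$R^{(K)}$ Lipschitz bootstrap at all; that refinement only appears in the intuitive appendix.

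The genuine gap in your proposal is therefore concentrated exactly where you flag it: the ``approximate decoupling'' step. With only a uniform bound on the aggregate, non-separable error $E(\vr,\vtheta)$ (and the joint constraint $\|\vr\|_1\le 1$), you cannot literally pass to $K$ independent scalar problems, and the plan ``bound $\|U(\vtheta^*)-I\|$, invert $U$, substitute the near-optimal $\vr$'' is stated but not executed; this is precisely the obstruction that led the authors to the global-comparison route. Your route is likely salvageable — e.g.\ by single-coordinate competitor swaps (change only $(r_k,\theta_k)$ to $(g(\lambda_{m}),\lambda_{m})$, keep the rest fixed, absorb $2\sup|E|+\mathcal O(\exp(-cT^2(\Delta^{(K)}_\lambda)^2))$ as additive slack of size $\mathcal O((p^{(K)}_{\min}q)^2)$), after which your quadratic-vs-linear comparison does close to $T|\theta^*_{k_m}-\lambda_m|=\mathcal O(q)$ and $|r^*_{k_m}-p_m|=\mathcal O(qp_m)$, provided you also handle the $\ell_1$-feasibility of the competitor and the truncated-vs-untruncated Gaussian discrepancy $E_F$ explicitly. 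But as written, the proposal defers the decisive argument, whereas the paper's $\mathcal F^*$-based comparison avoids inverting $U$ and needs no per-coordinate control of the noise; that structural simplification is the main thing your sketch is missing.
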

Lemma \ref{lem:single_op_problem} constitutes a key element in the remaining part of the proof. According to Lemma \ref{lem:single_op_problem}, when the expectation error is small enough, the error of refined dominant eigenvalue estimation is bounded by $q/T$, where $q$ is a fixed (maybe small) constant and $T$ is the maximal runtime. Combining this lemma with Lemma \ref{lem:expectation_error} by setting $q=1$, we can  demonstrate that when $T_0=\Omega\left(\left(\Delta^{(K)}_{\lambda}\right)^{-1}\right)$ and $N_0=\Omega(T^2_0)$, solving the optimization problem gives us a reasonable approximation to the refined dominant eigenvalues, meaning: for each $m\in\mathcal{D}$, there must exist a unique $1\leq k_m\leq K$ such that $\left|\lambda_{m}-\theta^*_{k_m}\right|=\Or\left(\Delta^{(K)}_{\lambda}\right)$.

\begin{proof}[Proof of Lemma \ref{lem:single_op_problem}] First, we define \[F^*(x)=\exp\left(-\frac{T^2x^2}{2}\right)=\int^\infty_{-\infty}\frac{1}{\sqrt{2\pi}T}\exp\left(-\frac{x^2}{2T^2}\right)\exp(ixt)\rd t\,.\] Notice
\[
\begin{aligned}
L_K\left(\vr,\vtheta\right)=&\int^\infty_{-\infty}\frac{1}{\sqrt{2\pi}T}\exp\left(-\frac{t^2}{2T^2}\right)\left|\sum^{M}_{m=1}p_m\exp(-i \lambda_m t)-\sum^{K}_{k=1}r_k\exp(-i\theta_k t)\right|^2\rd t\\
&+E_{p,r}+E_{r,r}+E_{r,Z}+E_F\,,
\end{aligned}
\]
where $E_F=\int^\infty_{-\infty}\left(a(t)-\frac{1}{\sqrt{2\pi}T}\exp\left(-\frac{t^2}{2T^2}\right)\right)\left|\sum^{M}_{m=1}p_m\exp(-i \lambda_m t)-\sum^{K}_{k=1}r_k\exp(-i\theta_k t)\right|^2\rd t$.
Using the tail bound of a Gaussian and the choice of $\gamma$ \eqref{eqn:condition_gaussian_lemma}, we have
\begin{equation}\label{eqn:bound_E_F}
|E_F|=\mathcal{O}\left(\exp(-\gamma^2/2)\right)=\mathcal{O}\left(\min\left\{\left(p^{(K)}_{\min}q\right)^2,\left(p^{(K)}_{\min}-3R^{(K)}\right) p^{(K)}_{\min}\right\}\right)\,.
\end{equation}
Notice 
\begin{equation}\label{eqn:upper_bound_L_gaussian}
\begin{aligned}
&L_K\left(\left\{p_m\right\}_{m\in\mathcal{D}},\left\{\lambda_m\right\}_{m\in\mathcal{D}}\right)\\
\leq &\int^\infty_{-\infty}a(t)\left|\sum_{m\in\mathcal{D}^c}p_m\exp(-i\lambda_mt)\right|^2\rd t+\hat{E}+\hat{E}_F\\
\leq &\int^\infty_{-\infty}a(t)\left|\sum^{M}_{m=1}p_m\exp(-i\lambda_mt)-\sum_{m\in\mathcal{D}}p_m\exp(-i\lambda_mt)\right|^2\rd t+\hat{E}+\hat{E}_F\\
\leq &(R^{(K)})^2+\hat{E}+\hat{E}_F\,,
\end{aligned}
\end{equation}
where we $\hat{E}=E\left(\left\{p_m\right\}_{m\in\mathcal{D}},\left\{\lambda_m\right\}_{m\in\mathcal{D}}\right)$, and $\hat{E}_F=E_F\left(\left\{p_m\right\}_{m\in\mathcal{D}},\left\{\lambda_m\right\}_{m\in\mathcal{D}}\right)$.

We separate the proof into two steps. In the first step, we show that for each $m\in\mathcal{D}$, there must exist a unique $1\leq k_m\leq K$ such that 
\begin{equation}\label{eqn:prior_close_gaussian}
    \left|\lambda_{m}-\theta^*_{k_m}\right|\leq \frac{\Delta^{(K)}_{\lambda}}{4}.
\end{equation}
In the second step, we further improve the bound in \eqref{eqn:prior_close_gaussian}.

\textbf{Step 1: Show each $\theta^*_k$ should be close to one $\lambda_m$ for $m\in\mathcal{D}$.}

Suppose there exists $m^*\in\mathcal{D}$ such that for any $1\leq k\leq K$,
\[
\left|\lambda_{m^*}-\theta^*_k\right|> \frac{\Delta^{(K)}_{\lambda}}{4}\,.
\]
Then, let $E^*=E\left(\vr^*,\vtheta^*\right)$ and $E^*_F=E_F\left(\vr^*,\vtheta^*\right)$, 
\[
\begin{aligned}
&L_K\left(\vr^*,\vtheta^*\right)\\
= &\int^\infty_{-\infty}a(t)\left|\left(\left(\sum_{m\in\mathcal{D},m\neq m^*}p_{m}\exp(-i \lambda_{m} t)-\sum^{K}_{k=1}r^*_k\exp(-i\theta^*_k t)\right)+\sum_{m\in\mathcal{D}^c}p_m\exp(-i \lambda_m t)\right)\right.\\
&\left.+p_{m^*}\exp(-i \lambda_{m^*} t)\right|^2\rd t+E^*+E^*_F\\
\geq &p^2_{m^*}-2p_{m^*}R^{K
}-16\exp\left(-\frac{\left(T\Delta^{(K)}_{\lambda}\right)^2}{16}\right)+E^*+E^*_F\\
>& \left(R^{(K)}\right)^2+\hat{E}+\hat{E}_F\\
\geq &L_K\left(\left\{p_m\right\}_{m\in\mathcal{D}},\left\{\lambda_m\right\}_{m\in\mathcal{D}}\right).
\end{aligned}
\]
In the second inequality, we use $p_{m^*}>3R^{(K)}$,  \eqref{eqn:condition_gaussian_lemma}, and  \eqref{eqn:bound_E_F}. In the first inequality, we use the decaying property of $F(x)$ to obtain
\[
\begin{aligned}
&\left|\int^\infty_{-\infty}a(t)\left(\sum_{m\in\mathcal{D},m\neq m^*}p_{m}\exp(-i \lambda_{m} t)-\sum^{K}_{k=1}r^*_k\exp(-i\theta^*_k t)\right)p_{m^*}\exp(-i \lambda_{m^*} t)\rd t\right|\leq 8\exp\left(-\frac{\left(T\Delta^{(K)}_{\lambda}\right)^2}{16}\right)\,,\\
&\left|\int^\infty_{-\infty}a(t)\left(\sum_{m\in\mathcal{D}^c}p_m\exp(-i \lambda_m t)\right)p_{m^*}\exp(-i \lambda_{m^*} t)\rd t\right|\leq p_{m^*}R^{(K)}
\end{aligned}
\]
where we use $|\lambda_{m}-\lambda_{m^*}|>\Delta^{(K)}_{\lambda}/2$, $|\theta^*_{k}-\lambda_{m^*}|>\Delta^{(K)}_{\lambda}/4$.
This contradicts the assumption that $\left(\{r^*_k\}^{K}_{k=1},\{\theta^*_k\}^{K}_{k=1}\right)$ is the minimization point. Thus, \eqref{eqn:prior_close_gaussian} is true for all $1\leq k\leq K$.

\textbf{Step 2: Improve the upper bound.}

Define 
\[
\mathcal{F}^*=\int^\infty_{-\infty}\frac{1}{\sqrt{2\pi}T}\exp\left(-\frac{x^2}{2T^2}\right)\left|\sum_{m\in\mathcal{D}} p_{m}\exp\left(-i \lambda_{m} t\right)-r^*_{k_m}\exp\left(-i \theta^*_{k_m} t\right)\right|^2\rd t\,.
\]
We have
\[
\begin{aligned}
&L_K\left(\vr^*,\vtheta^*\right)\\
=&\int^\infty_{-\infty}a(t)\left|\left(\sum_{m\in\mathcal{D}}p_{m}\exp(-i \lambda_{m} t)-r^*_{k_m}\exp(-i\theta^*_{k_m} t)\right)+\sum_{m\in\mathcal{D}^c}p_m\exp(-i \lambda_m t)\right|^2+E^*+E^*_F\\
\geq &\mathcal{F}^*-2(\mathcal{F}^*)^{1/2}R^{(K)}+\int^\infty_{-\infty}a(t)\left|\sum_{m\in\mathcal{D}^c}p_m\exp(-i\lambda_mt)\right|^2\rd t+E^*+E^*_F\\
\end{aligned}
\]
Noticing that $\left(\vr^*,\vtheta^*\right)$ is the minimum point and comparing the above inequality with the second inequality of \eqref{eqn:upper_bound_L_gaussian}, we have 
\begin{equation}\label{eqn:upperbound_F_star}
\mathcal{F}^*-2(\mathcal{F}^*)^{1/2}R^{(K)}\leq \left|\hat{E}-E^*\right|+\left|\hat{E}_F-E^*_F\right|
\end{equation}
Using \eqref{eqn:condition_gaussian_lemma} and conditions of Theorem \ref{thm:gaussian}, we have $\left|\hat{E}\right|,\left|E^*\right|,\left|\hat{E}_F\right|,\left|E^*_F\right|=\Or\left(\left(p^{(K)}_{\min}q\right)^2\right)$. This implies
\[
\mathcal{F}^*-2(\mathcal{F}^*)^{1/2}R^{(K)}=\mathcal{O}\left(\left(p^{(K)}_{\min}q\right)^2\right)\,,
\]
Since $q=\Omega(R^{(K)}/p^{(K)}_{\min})$, we further have $\mathcal{F}^*=\mathcal{O}\left(\left(p^{(K)}_{\min}q\right)^2\right)$. 

Using \eqref{eqn:prior_close_gaussian}, $\sum_{k} |r^*_k|\leq 1$, and $\exp\left(-(\Delta^{(K)}_{\lambda} T)^2/8\right)=\mathcal{O}\left(\left(p^{(K)}_{\min}q\right)^2\right)$, we can show
\begin{equation}\label{eqn:interaction_bound}
\begin{aligned}
&\begin{aligned}
\sum_{m,m'\in\mathcal{D},m\neq m'}\left|\int^\infty_{-\infty}\frac{1}{\sqrt{2\pi}T}\exp\left(-\frac{x^2}{2T^2}\right)\right.&\left(p_{m}\exp\left(-i \lambda_{m} t\right)-r^*_{k_m}\exp\left(-i \theta^*_{k_m} t\right)\right)\\
\cdot&\left.\left(p_{m'}\exp\left(-i \lambda_{m'} t\right)-r^*_{k_{m'}}\exp\left(-i \theta^*_{k_{m'}} t\right)\right)\rd t\right|\\
\end{aligned}\\
\leq &\sum_{m,m'\in\mathcal{D},m\neq m'} p_mp_{m'}\exp\left(-\frac{T^2(\lambda_m-\lambda_{m'})^2}{2}\right)+p_m r^*_{k_{m'}}\exp\left(-\frac{T^2(\lambda_m-\theta^*_{k_{m'}})^2}{2}\right)\\
&+r^*_{k_m} p_{m'}\exp\left(-\frac{T^2(\theta^*_{k_{m}}-\lambda_{m'})^2}{2}\right)+r^*_{k_m} r^*_{k_{m'}}\exp\left(-\frac{T^2(\lambda_m-\theta^*_{k_{m'}})^2}{2}\right)\\
\leq &4\exp\left(-\frac{(\Delta^{(K)}_{\lambda} T)^2}{8}\right)=\mathcal{O}\left(\left(p^{(K)}_{\min}q\right)^2\right)\,.
\end{aligned}
\end{equation}
Expanding $\mathcal{F}^*$, we find that
\begin{equation}\label{eqn:F_star_expansion}
\begin{aligned}
    \mathcal{F}^*\geq &\sum_{m\in\mathcal{D}}\int^\infty_{-\infty}\frac{1}{\sqrt{2\pi}T}\exp\left(-\frac{t^2}{2T^2}\right)\left|p_{m}\exp\left(-i \lambda_{m} t\right)-r^*_{k_m}\exp\left(-i \theta^*_{k_m} t\right)\right|^2\rd t\\
    &-2\sum_{m,m'\in\mathcal{D},m\neq m'}\left|\int^\infty_{-\infty}\frac{1}{\sqrt{2\pi}T}\exp\left(-\frac{x^2}{2T^2}\right)\left(p_{m}\exp\left(-i \lambda_{m} t\right)-r^*_{k_m}\exp\left(-i \theta^*_{k_m} t\right)\right)\right.\\
\cdot&\left.\left(p_{m'}\exp\left(-i \lambda_{m'} t\right)-r^*_{k_{m'}}\exp\left(-i \theta^*_{k_{m'}} t\right)\right)\rd t\right|\,.
\end{aligned}
\end{equation}
By utilizing the bound $\mathcal{F}^*=\mathcal{O}\left(\left(p^{(K)}_{\min}q\right)^2\right)$ and \eqref{eqn:interaction_bound}, we can infer from \eqref{eqn:F_star_expansion} that
\begin{equation}\label{eqn:key_F_gaussian}
\begin{aligned}
\sum_{m\in\mathcal{D}}\int^\infty_{-\infty}\frac{1}{\sqrt{2\pi}T}\exp\left(-\frac{t^2}{2T^2}\right)\left|p_{m}\exp\left(-i \lambda_{m} t\right)-r^*_{k_m}\exp\left(-i \theta^*_{k_m} t\right)\right|^2\rd t=\mathcal{O}\left(\left(p^{(K)}_{\min}q\right)^2\right)\,.
\end{aligned}
\end{equation}
This implies
\[
\begin{aligned}
&\sum_{m\in\mathcal{D}}p^2_{m}\left(1-\exp\left(-T^2\left(\theta^*_{k_m}-\lambda_{m}\right)^2\right)\right)\\
\leq &\sum_{m\in\mathcal{D}}\int^\infty_{-\infty}\frac{1}{\sqrt{2\pi}T}\exp\left(-\frac{t^2}{2T^2}\right)\left|p_{m}\exp\left(i \left(\theta^*_{k_m}-\lambda_{m}\right) t\right)-r^*_{k_m}\right|^2\rd t\\
= &\sum_{m\in\mathcal{D}}\int^\infty_{-\infty}\frac{1}{\sqrt{2\pi}T}\exp\left(-\frac{t^2}{2T^2}\right)\left|p_{m}\exp\left(-i \lambda_{m} t\right)-r^*_{k_m}\exp\left(-i \theta^*_{k_m} t\right)\right|^2\rd t\\
=  &\mathcal{O}\left(\left(p^{(K)}_{\min}q\right)^2\right)\,.
\end{aligned}
\]
Because $p_m\geq p^{(K)}_{\min}$ for $m\in\mathcal{D}$, we obtain that, for each $k$, $1-\exp\left(-T^2\left(\lambda_{m}-\theta^*_{k_m}\right)^2\right)=\mathcal{O}\left(q^2\right)$, which implies the first inequality of \eqref{eqn:close_gaussian_lemma}. For the second inequality of \eqref{eqn:close_gaussian_lemma}, \eqref{eqn:key_F_gaussian} also implies
\[
\left|p_{m}\exp\left(-T^2\left(\theta^*_{k_m}-\lambda_{m}\right)^2/2\right)-r^*_{k_m}\right|^2=\mathcal{O}\left(\left(p^{(K)}_{\min}q\right)^2\right)
\]
Because $\left|\exp\left(-T^2\left(\theta^*_{k_m}-\lambda_{m}\right)^2/2\right)-1\right|=\mathcal{O}\left(q^2\right)$, we have $|p_{m}-r^*_{k_m}|\leq q p_{m}$. This concludes the proof of the second inequality of \eqref{eqn:close_gaussian_lemma}.
\end{proof}

The following proposition controls the complexity of the second step of the algorithm:
\begin{prop}\label{prop:single_op_problem}
Given  failure probability $0<\eta<1/2$, any small constant $\zeta>0$,  $0<q<1$ such that $q=\Theta(R^{(K)}/p^{(K)}_{\min})$, and given a sequence of rough intervals $\{I_k\}^{K}_{k=1}\subset \mathbb{R}$, we assume 1. $|I_k|\leq 4\pi/T$; 2. for any $m\in\mathcal{D}$, there exists a unique $1\leq k_m\leq K$ such that $\lambda_m\in I_{k_m}$; 3. $p^{(K)}_{\min}>3R^{(K)}$. Define
  \[
  \left(\vr^*,\vtheta^*\right)=\argmin_{\|\vr\|_1\leq 1,\theta_k\in I_k}L_{K}\left(\vr,\vtheta\right)\,,
  \]
where $L_{K}$ is defined in \eqref{eqn:loss_multi_modal}. If
\begin{equation}\label{eqn:condition_gaussian_prop}
\begin{aligned}
&\gamma =\Theta\left(\log\left(\left(p^{(K)}_{\min}\right)^{-1}q^{-1}\right)\right)\,,\\
&T=\Omega\left(\left(\Delta^{(K)}_{\lambda}\right)^{-1}\log\left(\left(p^{(K)}_{\min}\right)^{-1}q^{-1}\right)\right),\\
&N=\Theta\left(\gamma^2\left(p^{(K)}_{\min}\right)^{-4}q^{-(2+\zeta)}\mathrm{polylog}\left(K\log(\zeta^{-1})\left(p^{(K)}_{\min}\right)^{-1}\eta^{-1}\right)\right)\,,
\end{aligned}
\end{equation}
then with probability $1-\eta$, 
\begin{equation}\label{eqn:final_close_gaussian_prop}
    |\lambda_m-\theta^*_{k_m}|\leq \frac{q}{T},\quad \quad |p_{m}-r^*_{k_m}|\leq q p_m\,.
\end{equation}
\end{prop}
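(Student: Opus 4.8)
The plan is to follow the two-step architecture of the proof of Lemma~\ref{lem:single_op_problem}, replacing its deterministic hypothesis $|E|=\mathcal{O}(\cdot)$ with a high-probability statement established for the choice of $N$ in \eqref{eqn:condition_gaussian_prop}. The restriction $\theta_k\in I_k$ takes over the role of Step~1 of that lemma: since $T=\Omega((\Delta^{(K)}_{\lambda})^{-1}\log(\cdot))$ forces $|I_k|\le 4\pi/T\le\Delta^{(K)}_{\lambda}/4$, the assumption $\lambda_m\in I_{k_m}$ already gives $|\theta^*_{k_m}-\lambda_m|\le\Delta^{(K)}_{\lambda}/4$, so only the refinement of Step~2 remains. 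That refinement needs exactly three inputs, of which two are deterministic and controlled by the parameter choices: the truncation error $|E_F|=\mathcal{O}(\exp(-\gamma^2/2))$ is negligible by the choice of $\gamma$, and the matrix $U(\vtheta)$ is within $\mathcal{O}(\exp(-\Omega((T\Delta^{(K)}_{\lambda})^2)))$ of the identity by the choice of $T$, so that cross terms between distinct dominant modes are negligible. The third input is a bound of order $(p^{(K)}_{\min}q)^2$ on the noise $E$ evaluated at the two configurations $(\vr^*,\vtheta^*)$ and $(\{p_m\}_{m\in\mathcal{D}},\{\lambda_m\}_{m\in\mathcal{D}})$; supplying it is where $N$ enters, and this is the content of Lemma~\ref{lem:expectation_error}.

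To establish that noise bound I would first record the pointwise estimates: for a fixed argument, each of $E_{p,r},E_{r,r},E_{r,Z}$ is an average of independent, mean-zero, $\mathcal{O}(1)$-bounded random variables --- after collapsing the $m$-sum these read $\frac1N\sum_n e^{i\theta_k t_n}\langle\psi|e^{-it_nH}|\psi\rangle-\mathbb{E}[\cdot]$, $\frac1N\sum_n(e^{i(\theta_{k_2}-\theta_{k_1})t_n}-F(\theta_{k_2}-\theta_{k_1}))$, and $\frac1N\sum_n E_n e^{i\theta_k t_n}$ with $|E_n|\le 3$ and $\mathbb{E}[E_n\mid t_n]=0$ --- so Hoeffding's inequality gives deviations of order $\|\vr\|_1\sqrt{\log(1/\delta)/N}\le\sqrt{\log(1/\delta)/N}$. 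Covering the feasible set $\{\|\vr\|_1\le 1\}\times\prod_k I_k$ by a net whose resolution is set by the Lipschitz constants in $(\vr,\vtheta)$ --- the $\theta$-derivatives carry a factor $|t_n|\le\gamma T$, but since $|I_k|\le 4\pi/T$ the resulting net has size only $\mathrm{poly}(K,\gamma,\delta^{-1})$ --- and union-bounding leaves a $\mathrm{polylog}$ overhead. This already yields a crude uniform bound on $|E|$ and hence, through the quadratic inequality in Step~2 of Lemma~\ref{lem:single_op_problem}, the crude localization $|\theta^*_{k_m}-\lambda_m|=\mathcal{O}(\sqrt q/T)$ and $|r^*_{k_m}-p_m|=\mathcal{O}(\sqrt q\,p_m)$.

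The main obstacle is upgrading this to the stated $q/T$ and $qp_m$: bounding $|E(\vr^*,\vtheta^*)|$ and $|E(\{p_m\},\{\lambda_m\})|$ separately at the level $(p^{(K)}_{\min}q)^2$ would force $N\sim q^{-4}$, which is too expensive. I expect to close the gap by bounding the \emph{difference} $E(\vr^*,\vtheta^*)-E(\{p_m\}_{m\in\mathcal{D}},\{\lambda_m\}_{m\in\mathcal{D}})$ instead: since $E_{p,r},E_{r,Z}$ are affine in $\vr$ with random coefficients of size $\mathcal{O}(\sqrt{\mathrm{polylog}/N})$ and are $\mathcal{O}(\gamma T\sqrt{\mathrm{polylog}/N})$-Lipschitz in $\vtheta$ on $\prod_k I_k$ (and $E_{r,r}$ is quadratic in $\vr$), a crude closeness $\rho$ of $(\vr^*,\vtheta^*)$ to the reference point propagates to $|E(\vr^*,\vtheta^*)-E(\{p_m\},\{\lambda_m\})|=\mathcal{O}(\rho\gamma\sqrt{\mathrm{polylog}/N})$; re-running the Step~2 comparison $L_K(\vr^*,\vtheta^*)\le L_K(\{p_m\},\{\lambda_m\})$ with this sharper bound shrinks $\rho$. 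Iterating this self-improvement a bounded number $\mathcal{O}(\log(\zeta^{-1}))$ of times drives $\rho$ down to $\mathcal{O}(q)$ once $N=\widetilde{\Theta}((p^{(K)}_{\min})^{-4}q^{-(2+\zeta)})$, which is precisely the sample size in \eqref{eqn:condition_gaussian_prop} and explains both the $q^{-\zeta}$ factor and the $\log(\zeta^{-1})$ appearing inside the polylog. Feeding the final refined noise bound into the Step~2 chain then gives $\mathcal{F}^*=\mathcal{O}((p^{(K)}_{\min}q)^2)$, hence $1-\exp(-T^2(\theta^*_{k_m}-\lambda_m)^2)=\mathcal{O}(q^2)$ and $|p_m-r^*_{k_m}|\le qp_m$, that is \eqref{eqn:final_close_gaussian_prop}, with probability at least $1-\eta$.
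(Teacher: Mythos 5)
Your proposal is correct and follows essentially the same route as the paper: the interval constraint $|I_k|\le 4\pi/T\le \Delta^{(K)}_{\lambda}/4$ supplies the crude localization, a Hoeffding-plus-Lipschitz-net bound (the paper's Lemma~\ref{lem:expectation_error}) controls the noise uniformly, and the $q^{-(2+\zeta)}$ sample complexity is obtained exactly as you describe, by bounding the \emph{difference} $E(\vr^*,\vtheta^*)-E(\{p_m\},\{\lambda_m\})$ and bootstrapping the closeness through $\mathcal{O}(\log(\zeta^{-1}))$ self-improvement steps (the paper's Lemma~\ref{lem:improve_dependence_2} with $q_s=q^{\frac{2-(1/2)^s}{2-(1/2)^S}}$ and $\zeta=(1/2)^{S-1}$). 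Even your intermediate accuracy $\mathcal{O}(\sqrt{q}/T)$ after the first pass matches the paper's choice of $q_0$.
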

We give the proof of Proposition \ref{prop:single_op_problem} in Appendix \ref{sec:pf_prop}. Here, we emphasize that Proposition \ref{prop:single_op_problem} can not be directly proved by combining Lemma \ref{lem:single_op_problem} and Lemma \ref{lem:expectation_error}. According to Lemma \ref{lem:single_op_problem}, to obtain the accuracy $q/T$, we need the expectation error $|E|=\Or(q^2)$. According to Lemma \ref{lem:expectation_error}, to guarantee $|E|=\Or(q^2)$, we need to choose $N=\Omega(q^{-4})$, which is worse than the scaling of $N$ with respect to $q^{-1}$ in the proposition. This means that to achieve the correct scaling of $N$ with respect to $q^{-1}$, we need to bound the expectation error in a different way to obtain the sharp estimation.

Now, we are ready to use Lemma \ref{lem:single_op_problem} and Proposition \ref{prop:single_op_problem} to prove Theorems \ref{thm:gaussian}.
\begin{proof}[Proof of Theorem \ref{thm:gaussian}] First, according to the definition of $T_0$, we have $T_0=\widetilde{\Theta}((\Delta^{\lambda}_K)^{-1}\log(q^{-1}))$. Combining \eqref{eqn:condition_gaussian} with Lemma \ref{lem:single_op_problem} and Lemma \ref{lem:expectation_error} (by setting $q=1$ in Lemma \ref{lem:single_op_problem} and $\rho=\pi T_0$ and $\xi=\Theta\left(\left(p^{(K)}_{\min}-3R^{(K)}\right) p^{(K)}_{\min}\right)$  in Lemma \ref{lem:expectation_error} \eqref{eqn:exp_expectation_error}), after step 1, with probability $1-\eta/(l+1)$, we obtain that, for every $m\in\mathcal{D}$, there exists an unique $1\leq k_m\leq K$ such that 
    \begin{equation}\label{eqn:first_close_sin_2_refine}
        |\lambda_{m}-\theta^*_{k_m}|\leq \frac{1}{T_0}<\min\left\{\frac{\Delta^{(K)}_\lambda}{4},\frac{\pi}{T_0}\right\},\quad |p_{m}-r^*_{k_m}|\leq p_{m}\,.
    \end{equation}
Thus, with probability $1-\eta/(l+1)$, after step 1, $I_{k_m}=[\lambda_{\min,k_m},\lambda_{\max,k_m}]$ satisfies the condition of Proposition \ref{prop:single_op_problem} with $T=T_1$.

Next, for $j=1$ in step 2, using Proposition \ref{prop:single_op_problem} by setting $T=T_1$, we obtain that, with probability $1-\eta/(l+1)$,
\begin{equation}\label{eqn:second_close_sin_2_refine}
        |\lambda_{m}-\theta^*_{k_m}|\leq \frac{q}{T_1}<  \frac{\pi}{T_1},\quad |p_{m}-r^*_{k_m}|\leq q p_{m}\,.
\end{equation}
This implies that, after step 2 with $j=1$, with probability $1-2\eta/(l+1)$, $I_{k_m}=[\lambda_{\min,k_m},\lambda_{\max,k_m}]$ satisfies the condition of Proposition \ref{prop:single_op_problem} with $T=T_2$. Using this recursively, we finally obtain that, when $j=l$, with probability $1-\eta$, we have 
\[
        |\lambda_{m}-\theta^*_{k_m}|\leq \frac{q}{T_l}<\epsilon,\quad |p_{m}-r^*_{k_m}|\leq q p_{m}\,.
\]
which implies \eqref{eqn:final_close_gaussian}.

Finally, using the choices of $\gamma,l,T_l$ and $N_l$, we obtain
\[
T_{\max}= T_l\gamma=\Theta\left(\frac{q}{\epsilon}\log\left(\frac{1}{\min\left\{p^{(K)}_{\min}q,\left(p^{(K)}_{\min}\right)\left(p^{(K)}_{\min}-3R^{(K)}\right)\right\}}\right)\right)=\frac{\delta}{\epsilon}
\]
and
\[T_{\mathrm{total}}=\widetilde{\Theta}\left(\frac{1}{\left(p^{(K)}_{\min}\right)^{4}q^{1+\zeta}\epsilon}\polylog(\log(\zeta^{-1})\eta^{-1})\right)=\wt{\Theta}\left(\frac{\polylog(\log(\zeta^{-1})\delta^{-1}\eta^{-1})}{\left(p^{(K)}_{\min}\right)^4\delta^{1+\zeta}\epsilon}\right)\,,
\]
where $\delta=\widetilde{\Theta}\left(q\log(q^{-1})\right)$. 
\end{proof}

\section{Proof of Proposition \ref{prop:single_op_problem}}\label{sec:pf_prop}

In this section, we prove Proposition \ref{prop:single_op_problem}. The proof idea is similar to \cite[Appendix B.2]{DingLin2023}. We first give a rough complexity result that has a $\Or(q^{-4})$ scaling in $N$. Then, we consider the difference of the expectation error more carefully and improve this scaling to $\Or(q^{-2-o(1)})$ using an iteration argument.

Applying \cref{eqn:exp_expectation_error} from  Lemma \ref{lem:expectation_error} in Appendix \ref{sec:bound_expectation_error} to bound the expectation error $E(\vr,\vtheta)$, we obtain the following lemma: 
\begin{lem}\label{lem:rough_bound} Given  failure probability $0<\eta<1/2$, any small constant $\zeta>0$,  $0<q<1$ such that $q=\Omega(R^{(K)}/p^{(K)}_{\min})$, and given a sequence of rough interval $\{I_k\}^{K}_{k=1}\subset \mathbb{R}$, we assume 1. $|I_k|\leq 4\pi/T$; 2. for any $m\in\mathcal{D}$, there exists a unique $1\leq k_m\leq K$ such that $\lambda_m\in I_{k_m}$; 3. $p^{(K)}_{\min}>3R^{(K)}$. Define
  \[
  \left(\vr^*,\vtheta^*\right)=\argmin_{\|\vr\|\leq 1,\theta_k\in I_k}L_{K}\left(\vr,\vtheta\right)\,,
  \]
where $L_{K}$ is defined in \eqref{eqn:loss_multi_modal}. If
    \begin{equation}\label{eqn:condition_lemma_2_gaussian}
    \begin{aligned}
    &\gamma =\Omega\left(\log\left(\left(p^{(K)}_{\min}\right)^{-1}q^{-1}\right)\right)\,,\\
    &T=\Omega\left(\left(\Delta^{(K)}_{\lambda}\right)^{-1}\log\left(\left(p^{(K)}_{\min}\right)^{-1}q^{-1}\right)\right)\,,\\
   &N=\Omega\left(\gamma^2\left(p^{(K)}_{\min}\right)^{-4}q^{-4}\mathrm{polylog}\left(K\left(p^{(K)}_{\min}\right)^{-1}q^{-1}\eta^{-1}\right)\right)\,,
    \end{aligned}
    \end{equation}
    then with probability $1-\eta$, 
    \begin{equation}\label{eqn:final_lemma_2_gaussian}
        \left|\lambda_m-\theta^*_{k_m}\right|\leq \frac{q}{T},\quad \quad |p_{m}-r^*_{k_m}|\leq q p_m\,.
    \end{equation}
\end{lem}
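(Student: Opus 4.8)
The plan is to reduce Lemma~\ref{lem:rough_bound} to the deterministic analysis already contained in Lemma~\ref{lem:single_op_problem}. That lemma shows: if the fluctuation $E(\vr,\vtheta)=E_{p,r}(\vr,\vtheta)+E_{r,r}(\vr,\vtheta)+E_{r,Z}(\vr,\vtheta)$ is deterministically bounded by $\mathcal{O}(\min\{(p^{(K)}_{\min}q)^2,\,p^{(K)}_{\min}(p^{(K)}_{\min}-3R^{(K)})\})$ over the feasible set, then the minimizer satisfies exactly the bounds \cref{eqn:final_lemma_2_gaussian}. So the only genuinely new things to establish are (i) that with probability $1-\eta$ the quantity $\sup|E|$ over the constrained feasible set $\{\|\vr\|_1\le 1,\ \theta_k\in I_k\}$ is this small, under the stated choice of $N$, and (ii) that the extra interval hypotheses let us re-run the deterministic argument even though the domain in Lemma~\ref{lem:single_op_problem} was $[-\pi,\pi]^K$ rather than $\prod_k I_k$.

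First I would fix the target noise level $\xi=\Theta(\min\{(p^{(K)}_{\min}q)^2,\,p^{(K)}_{\min}(p^{(K)}_{\min}-3R^{(K)})\})$ and invoke \cref{eqn:exp_expectation_error} of Lemma~\ref{lem:expectation_error} to get $\sup_{\|\vr\|_1\le 1,\ \theta_k\in I_k}|E(\vr,\vtheta)|\le\xi$ with probability $1-\eta$. This rests on the facts that each $E_n$ has mean zero, the $E_n$ are independent across $n$, and they are bounded ($|E_n|\le 3$, $|Z_n|\le 2$), together with the observation that on the feasible set each of $E_{p,r},E_{r,r},E_{r,Z}$ is Lipschitz in $(\vr,\vtheta)$ with constant $\mathcal{O}(T_{\max})=\mathcal{O}(\gamma T)$ in $\vtheta$ and $\mathcal{O}(1)$ in $\vr$: covering the feasible set by a net of spacing $\sim\xi/T_{\max}$, applying Hoeffding's inequality at each net point, and taking a union bound gives the claim once $N=\Omega(\gamma^2\xi^{-2}\,\mathrm{polylog}(K\,(p^{(K)}_{\min})^{-1}q^{-1}\eta^{-1}))$, which (using $\xi=\Omega((p^{(K)}_{\min}q)^2)$ in the regime $q=\Omega(R^{(K)}/p^{(K)}_{\min})$, the other term being handled identically) is the $N$ in \cref{eqn:condition_lemma_2_gaussian}. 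The $\gamma^2$, $K$ and $\mathrm{polylog}$ factors come from the net cardinality and the union bound; since this is already packaged in Lemma~\ref{lem:expectation_error}, at this point I would just cite it.

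On the event $\sup|E|\le\xi$ I would then run the deterministic part. The comparison point $(\{p_m\}_{m\in\mathcal{D}},\{\lambda_m\}_{m\in\mathcal{D}})$ — with the $k$-th slot assigned to the unique $m\in\mathcal{D}$ such that $k_m=k$ — is feasible for the constrained problem, since its weights sum to $1-R^{(K)}\le 1$ and $\lambda_m\in I_{k_m}$ by hypothesis. Moreover, because $\theta^*_{k_m}$ and $\lambda_m$ both lie in $I_{k_m}$ with $|I_{k_m}|\le 4\pi/T$, feasibility alone gives $|\theta^*_{k_m}-\lambda_m|\le 4\pi/T$, which together with $T=\Omega((\Delta^{(K)}_{\lambda})^{-1}\log((p^{(K)}_{\min})^{-1}q^{-1}))$ supplies the closeness estimate that is the conclusion of Step~1 in the proof of Lemma~\ref{lem:single_op_problem} — here it costs nothing, no contradiction argument is needed. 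Step~2 of that proof then transfers verbatim: bound the Gaussian-truncation error by $|E_F|=\mathcal{O}(e^{-\gamma^2/2})=\mathcal{O}(\xi)$ using $\gamma=\Omega(\log((p^{(K)}_{\min})^{-1}q^{-1}))$, compare $L_{K}(\vr^*,\vtheta^*)$ with $L_{K}(\{p_m\}_{m\in\mathcal{D}},\{\lambda_m\}_{m\in\mathcal{D}})$ to get $\mathcal{F}^*=\mathcal{O}((p^{(K)}_{\min}q)^2)$, annihilate the cross terms between distinct dominant modes via $e^{-(\Delta^{(K)}_{\lambda}T)^2/8}=\mathcal{O}((p^{(K)}_{\min}q)^2)$, and expand $\mathcal{F}^*$ to obtain $\sum_{m\in\mathcal{D}}p_m^2(1-e^{-T^2(\theta^*_{k_m}-\lambda_m)^2})=\mathcal{O}((p^{(K)}_{\min}q)^2)$, hence $|\lambda_m-\theta^*_{k_m}|\le q/T$ and, from the same display, $|p_m-r^*_{k_m}|\le qp_m$.

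The hard part is the uniform concentration step with the stated $N=\widetilde{\Theta}(q^{-4})$ scaling: a pointwise Hoeffding bound does not suffice because the optimizer $(\vr^*,\vtheta^*)$ is data-dependent, so one must discretize the continuous feasible set at scale $\sim\xi/T_{\max}$ using the $\mathcal{O}(\gamma T)$-Lipschitz continuity of $E$, concentrate at each net point, and union-bound, paying only a $\mathrm{polylog}$ overhead — and pushing the net error below $\xi$ is exactly what forces $N\gtrsim\xi^{-2}\sim(p^{(K)}_{\min}q)^{-4}$ up to logarithms. Everything else is bookkeeping or a direct re-use of Lemma~\ref{lem:single_op_problem}. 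The subsequent sharpening to $N=\widetilde{\Theta}(q^{-2-\zeta})$ in Proposition~\ref{prop:single_op_problem} needs the extra iteration argument of the next section and is not used here.
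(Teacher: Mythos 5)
Your proposal is correct and matches the paper's own proof: the constraint $\theta_k\in I_k$ with $|I_k|\le 4\pi/T\le \Delta^{(K)}_{\lambda}/4$ replaces Step~1 of Lemma~\ref{lem:single_op_problem}, Lemma~\ref{lem:expectation_error} \eqref{eqn:exp_expectation_error} with $\xi=\mathcal{O}\left(\left(p^{(K)}_{\min}q\right)^2\right)$ gives the high-probability noise bound under the stated $N$, and Step~2 of Lemma~\ref{lem:single_op_problem} is reused verbatim. The only cosmetic difference is your inclusion of the $p^{(K)}_{\min}\left(p^{(K)}_{\min}-3R^{(K)}\right)$ term in $\xi$, which is unnecessary here since the contradiction step it served is bypassed.
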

\begin{proof}[Proof of Lemma \ref{lem:rough_bound}] According to the second equality of \eqref{eqn:condition_lemma_2_gaussian}, we have
\[
|I_k|\leq \frac{4\pi}{T}\leq \frac{\Delta^{(K)}_{\lambda}}{4}\,,
\]
which implies that for all $m\in\mathcal{D}$
\begin{equation}\label{eqn:rough_bound}
    \left|\lambda_{m}-\theta^*_{k_m}\right|\leq \frac{\Delta^{(K)}_{\lambda}}{4}\,.
\end{equation}
Using \eqref{eqn:condition_lemma_2_gaussian} and Lemma \ref{lem:expectation_error} \eqref{eqn:exp_expectation_error} by setting $\xi=\mathcal{O}\left(\left( p^{(K)}_{\min}q\right)^2\right)$, we can conclude that with probability $1-\eta$, $|E|=\mathcal{O}\left(\left( p^{(K)}_{\min}q\right)^2\right)$. Finally, since we have obtained a rough bound \eqref{eqn:rough_bound}, we can use the same argument as the second step in the proof of Lemma \ref{lem:single_op_problem} to prove \eqref{eqn:final_lemma_2_gaussian}.
\end{proof}

Next, in order to improve the scaling $\mathcal{O}(q^{-4})$ in \eqref{eqn:condition_lemma_2_gaussian}. We propose a different approach to bound the error terms. Instead of bounding $E(\vr^*,\vtheta^*)$ and $E(\{p_m\},\{\lambda_m\})$ separately,  we aim to bound the difference between these two error terms. Intuitively, when $(r^*_{k_m},\theta^*_{k_m})$ and $(p_m,\lambda_m)$ are close to each other, the two error terms are likely to cancel each other out when we compare the difference between $L(\vr^*,\vtheta^*)$ and $L(\{p_m\},\{\lambda_m\})$. This intuition is supported by Lemma \ref{lem:expectation_error} \eqref{eqn:exp_expectation_difference}.  Assuming that we already have $\left|\theta^*_{k_m}-\lambda_m\right|<\frac{q}{T}$ and $|r_{k_m}-p_m|\leq q p_m$, then it is sufficient to choose $N=\widetilde{\Omega}\left(q^2\xi^{-2}\right)$  to ensure that  $\left|E(\vr^*,\vtheta^*)-E( \{p_m\},\{\lambda_m\})\right|\geq \xi$ with high probability. This requirement, compared with the second inequality of \eqref{eqn:condition_lemma_2_gaussian}, reduces the blow-up rate to $\Or(q^{-2})$ as $q\rightarrow0$, which matches the condition in \cref{prop:single_op_problem}. However, the above calculation is assuming $\left|\theta^*_{k_m}- \lambda_m\right|<\frac{q}{T}$ and $|r_{k_m}- p_m|\leq q p_m$, which is unknown to us in advance. To overcome this difficulty, we need to use an iteration argument to obtain the desired order. This is summarized in the following lemma:

\begin{lem}\label{lem:improve_dependence_2}

Given  failure probability $0<\eta<1/2$, an integer $S>1$, any small constant $\zeta>0$, a sequence of rough interval $\{I_k\}^{K}_{k=1}\subset \mathbb{R}$, and a decreasing sequence $\{q_s\}^S_{s=0}$ with $0<q_0\leq 1$ and $q_S=\Omega(R^{(K)}/p^{(K)}_{\min})$, we assume 1. $|I_k|\leq 4\pi/T$; 2. for any $m\in\mathcal{D}$, there exists a unique $1\leq k_m\leq K$ such that $\lambda_m\in I_{k_m}$; 3. $p^{(K)}_{\min}>3R^{(K)}$; 4. \eqref{eqn:condition_lemma_2_gaussian} holds with $q=q_0$. If
\begin{equation}\label{eqn:condition_lemma_4_gaussian}
\begin{aligned}
&\gamma =\Omega\left(\log\left(\left(p^{(K)}_{\min}\right)^{-1}q^{-1}_S\right)\right)\,,\\
&T=\Omega\left(\left(\Delta^{(K)}_{\lambda}\right)^{-1}\log\left(\left(p^{(K)}_{\min}\right)^{-1}q^{-1}_S\right)\right)\,,\\
&N=\Omega\left(\max_{0\leq s\leq S-1}\left\{\left(p^{(K)}_{\min}\right)^{-4}q^{-4}_{s+1}q^2_s\mathrm{polylog}\left(KQ\left(p^{(K)}_{\min}\right)^{-1}q^{-1}_{s+1}q_s\eta^{-1}\right)\right\}\right)\,,
\end{aligned}
\end{equation}
then with probability $1-\eta$, 
\begin{equation}\label{eqn:final_lemma_4_gaussian}
    \left|\lambda_m-\theta^*_{k_m}\right|\leq \frac{q_S}{T},\quad |p_{m}-r^*_{k_m}|\leq q_S p_m\,.
\end{equation}
\end{lem}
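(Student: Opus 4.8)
\textbf{Proof plan for Lemma \ref{lem:improve_dependence_2}.}
The plan is to run an induction on $s$, peeling off one factor of accuracy improvement per step. At each stage $s$ I maintain the induction hypothesis that, on the high-probability event built so far, the current optimizer satisfies $\left|\lambda_m-\theta^*_{k_m}\right|\leq q_s/T$ and $|p_m-r^*_{k_m}|\leq q_s p_m$ for every $m\in\mathcal{D}$. The base case $s=0$ is exactly Lemma \ref{lem:rough_bound}, whose hypotheses are guaranteed by assumption 4 of the statement (that \eqref{eqn:condition_lemma_2_gaussian} holds with $q=q_0$), applied to the \emph{same} single optimization problem $\argmin_{\|\vr\|_1\leq 1,\theta_k\in I_k}L_K(\vr,\vtheta)$. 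Note that the optimizer is fixed once and for all; only our \emph{knowledge} about it is refined across the induction, so no re-sampling occurs and we never pay an extra $N$ for more iterations beyond a union bound over the $S$ failure events.

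The inductive step is the heart of the argument. Assume the hypothesis at level $s$, i.e.\ $(r^*_{k_m},\theta^*_{k_m})$ is within $q_s/T$ and $q_s p_m$ of $(p_m,\lambda_m)$. I then redo the ``Step 2'' comparison from the proof of Lemma \ref{lem:single_op_problem}: write $L_K(\vr^*,\vtheta^*)$ and $L_K(\{p_m\}_{m\in\mathcal{D}},\{\lambda_m\}_{m\in\mathcal{D}})$ via the separation \eqref{eqn:separation_of_error}, subtract, and observe that the deterministic (ideal) parts give the lower bound $\mathcal{F}^*-2(\mathcal{F}^*)^{1/2}R^{(K)}$ plus controlled cross terms exactly as in \eqref{eqn:upperbound_F_star}. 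The difference is how the noise term is bounded: instead of $|E(\vr^*,\vtheta^*)|+|\hat E|\leq\mathcal{O}((p^{(K)}_{\min}q_s)^2)$ from Lemma \ref{lem:expectation_error}\,\eqref{eqn:exp_expectation_error} (which would force $N=\Omega(q_s^{-4})$), I invoke the \emph{difference} bound Lemma \ref{lem:expectation_error}\,\eqref{eqn:exp_expectation_difference}. Because the two parameter tuples are $q_s$-close, that bound gives $|E(\vr^*,\vtheta^*)-\hat E|=\mathcal{O}(q_s\cdot(\text{something}))$, and choosing $N=\widetilde\Omega\!\big((p^{(K)}_{\min})^{-4}q_{s+1}^{-4}q_s^{2}\big)$ makes this difference $\le \xi$ with $\xi=\mathcal{O}((p^{(K)}_{\min}q_{s+1})^2)$ with failure probability $\le\eta/S$ (the polylog factors in \eqref{eqn:condition_lemma_4_gaussian} absorb the union bound over $k$, the covering of $I_k$, and over $S$). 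Feeding $\mathcal{F}^*=\mathcal{O}((p^{(K)}_{\min}q_{s+1})^2)$ back through the expansion \eqref{eqn:F_star_expansion}--\eqref{eqn:key_F_gaussian} and using $p_m\ge p^{(K)}_{\min}$, I get $1-\exp(-T^2(\theta^*_{k_m}-\lambda_m)^2)=\mathcal{O}(q_{s+1}^2)$ and $|p_m-r^*_{k_m}|\le q_{s+1}p_m$, which is the hypothesis at level $s+1$. The choices of $\gamma$ and $T$ in \eqref{eqn:condition_lemma_4_gaussian} are made at the final scale $q_S$ so the tail bound \eqref{eqn:bound_E_F} on $E_F$ and the off-diagonal decay \eqref{eqn:interaction_bound} are uniformly small enough for every step.

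After $S$ steps, intersecting the $S$ events (each of probability $\ge 1-\eta/S$, or $1-\eta/(S+1)$ counting the base case) yields the conclusion \eqref{eqn:final_lemma_4_gaussian} with probability $\ge 1-\eta$. The main obstacle I anticipate is making the noise-difference estimate genuinely uniform over the optimizer: since $(\vr^*,\vtheta^*)$ is random and data-dependent, I cannot simply apply a pointwise concentration bound at that point, so I need a net/covering argument over $\{\vr:\|\vr\|_1\le1\}\times\prod_k I_k$ together with the Lipschitz continuity of $\vr,\vtheta\mapsto E(\vr,\vtheta)$ (the $\|\vr\|_1\le 1$ and $|I_k|\le 4\pi/T$ constraints are exactly what make this net small), and this is where the extra $\mathrm{polylog}(K,\ldots)$ and the harmless $q^{-\zeta}$ slack in $N$ (in Proposition \ref{prop:single_op_problem}, obtained by taking $q_s$ a geometric sequence and optimizing $S$) come from. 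The bookkeeping of how $q_s^2/q_{s+1}^4$ is maximized over the geometric ladder, and the verification that $\max_s$ of that quantity is $\widetilde\Theta(q_S^{-2-\zeta})$, is routine once the ladder ratio is fixed.
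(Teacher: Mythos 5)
Your proposal follows essentially the same route as the paper's proof: initialize with Lemma \ref{lem:rough_bound} at accuracy $q_0$, then iteratively upgrade the accuracy to $q_{s+1}$ by invoking the difference bound \eqref{eqn:exp_expectation_difference} with $\rho\sim q_s$ (so $|\hat E-E^*|=\mathcal{O}((p^{(K)}_{\min}q_{s+1})^2)$ at cost $N=\widetilde\Omega((p^{(K)}_{\min})^{-4}q_{s+1}^{-4}q_s^2)$), re-running Step 2 of Lemma \ref{lem:single_op_problem} via \eqref{eqn:upperbound_F_star}--\eqref{eqn:key_F_gaussian}, and finishing with a union bound over the $S$ stages. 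Your observations that the optimizer is fixed while only the knowledge about it is refined, and that uniformity over the data-dependent optimizer is supplied by the covering/Lipschitz argument inside Lemma \ref{lem:expectation_error}, match the paper's argument, so the proposal is correct.
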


\begin{proof}[Proof of Lemma \ref{lem:improve_dependence_2}] By utilizing Lemma \ref{lem:rough_bound} in conjunction with \eqref{eqn:condition_lemma_2_gaussian}, we can demonstrate that with probability $1-\eta/(2Q)$,
\[
 \left|\lambda_{m}-\theta^*_{k_m}\right|\leq \frac{q_0}{T},\quad \left|p_{m}-r^*_{k_m}\right|\leq q_0 p_m\,.
\]
Using these inequalities with Lemma \ref{lem:expectation_error} \eqref{eqn:exp_expectation_difference}, where we have $\rho=q_0$, $\xi=\left(p^{(K)}_{\min}q_1\right)^2 q^{-1}_0$, we obtain that with probability $1-\eta/Q$, $\left|\hat{E}-E^*\right|,\left|\hat{E}_F\right|,\left|E^*_F\right|=\Or\left(\left(p^{(K)}_{\min}q_1\right)^2\right)$. Then, similar to the second step in the proof of Lemma \ref{lem:single_op_problem} (plugging the bound of $\left|\hat{E}-E^*\right|,\left|\hat{E}_F\right|,\left|E^*_F\right|$ into \eqref{eqn:upperbound_F_star}), we can show that with probability $1-\eta/Q$,
\[
 \left|\lambda_{m}-\theta^*_{k_m}\right|\leq \frac{q_1}{T},\quad \left|p_{m}-r^*_{k_m}\right|\leq q_1 p_m\,.
\]
Next, similar to previous argument, using these inequalities with Lemma \ref{lem:expectation_error} \eqref{eqn:exp_expectation_difference}, where we have $\rho=q_1,\xi=\left(p^{(K)}_{\min}q_2\right)^2 q^{-1}_1$, we obtain that with probability $1-3\eta/(2Q)$, $\left|\hat{E}-E^*\right|,\left|\hat{E}_F\right|,\left|E^*_F\right|=\Or\left(\left(p^{(K)}_{\min}q_2\right)^2\right)$. This further implies, with probability $1-2\eta/Q$,
\[
 \left|\lambda_{m}-\theta^*_{k_m}\right|\leq \frac{q_2}{T},\quad \left|p_{m}-r^*_{k_m}\right|\leq q_2 p_m\,.
\]
Doing this repeatedly, we finally obtain \eqref{eqn:final_lemma_4_gaussian}.
\end{proof}

Finally, we are ready to prove the proposition \ref{prop:single_op_problem}:
\begin{proof}[Proof of Proposition \ref{prop:single_op_problem}]
Using the given $q$ from the conditions of the proposition, we construct a decreasing positive sequence $\{q_s\}^S_{s=0}$ with $q_s=q^{\frac{2-(1/2)^s}{2-(1/2)^S}}$. With this choice, we have $q^{-4}_{s+1}q^2_s=q^{-\frac{4}{2-(1/2)^S}}$. 

Setting $\zeta=(1/2)^{S-1}>\frac{4}{2-(1/2)^S}-2$ and using Lemma \ref{lem:improve_dependence_2} \eqref{eqn:final_lemma_4_gaussian}, we prove \eqref{eqn:final_close_gaussian_prop}.
\end{proof}

\section{Bound of the expectation error}\label{sec:bound_expectation_error}
In this section, we bound the expectation error. Recall the definition in \eqref{eqn:formula_E},
\[
E_{p,r}\left(\vr,\vtheta\right)=2\mathrm{Re}\left(\sum^{M}_{m=1}\sum^{K}_{k=1}p_mr_k\left(\frac{1}{N}\sum^N_{n=1}\exp(i(\theta_k- \lambda_m) t_n)-F(\theta_k-\lambda_m)\right)\right)\,,
\]
\[
E_{r,r}\left(\vr,\vtheta\right)=2\mathrm{Re}\left(\sum^{K}_{k_1\neq k_2}\overline{r}_{k_1}r_{k_2}\left(\frac{1}{N}\sum^N_{n=1}\exp(i(\theta_{k_2}-\theta_{k_1}) t_n)-F(\theta_{k_2}-\theta_{k_1})\right)\right)\,,
\]
\[
E_{r,Z}\left(\vr,\vtheta\right)=-2\mathrm{Re}\left(\sum^{K}_{k=1}\overline{r_k}\left(\frac{1}{N}\sum^N_{n=1}E_n\exp(i\theta_k t_n)\right)\right)\,.
\]
Intuitively, when $N\gg 1$, we have $|E_{p,r}|,|E_{r,r}|,|E_{r,Z}|=\mathcal{O}(1/\sqrt{N})$ for fixed $\vr,\vtheta$. However, this result does not directly apply to the optimization problem because the range of $(\vr,\vtheta)$ is always an infinite set. To overcome this difficulty, we notice that the Lipschitz constant of the function $
\exp(i\theta t)$ is bounded by $T$ if $t\leq T$. First, we use Hoeffding's inequality to obtain a uniform bound for these expectation errors with a finite number of $(\vr,\vtheta)$ points. Then, we extend this bound to all other points using the Lipschitz continuity property of $\exp(i\theta t)$. Specifically, we have the following result that gives a uniform bound for these expectation errors: 
\begin{lem}\label{lem:expectation_error}
Define $E_{p,r},E_{r,r},E_{r,Z}$ as above. Assume $a(t)$ is defined as \eqref{eqn:a_T}, then
    \begin{itemize}
        \item Given $0<\eta<1/2$ and a sequence of interval $\{I_k\}^{K}_{k=1}$ on $\mathbb{R}$, we define $\rho=\max_k\{T|I_k|/2\}$, if
        \[
        N=\Omega\left(\max\{\gamma^2\rho^2,1\}\xi^{-2}\mathrm{polylog}(K\xi^{-1}\eta^{-1})\right)\,,
        \]
        then
        \begin{equation}\label{eqn:exp_expectation_error}
        \begin{aligned}
        &\mathbb{P}\left(\sup_{\|\vr\|_1\leq 1,\theta_k\in I_k}|E_{p,r}\left(\vr,\vtheta\right)|\geq \xi\right)\leq \eta\\
        &\mathbb{P}\left(\sup_{\|\vr\|_1\leq 1,\theta_k\in I_k}|E_{r,r}\left(\vr,\vtheta\right)|\geq \xi\right)\leq \eta\\
        &\mathbb{P}\left(\sup_{\|\vr\|_1\leq 1,\theta_k\in I_k}|E_{Z}\left(\vr,\vtheta\right)|\geq \xi\right)\leq \eta
        \end{aligned}\,.
        \end{equation}
        \item Denote $\mathcal{D}=\{m_1,\dots,m_K\}$. Given $0<\eta<1/2$, a sequence of intervals $\{I_k\}^{K}_{k=1}$ on $\mathbb{R}$, and a sequence of discs $\{R_k\}^{K}_{k=1}$ on $\mathbb{C}$, we assume: 1. $ \lambda_{m_k}\in I_k$; 2. $p_{m_k}\in R_k$. Define $\rho=\max_k\{T|I_k|/2,\mathrm{radius}(R_k)/p_{m_k}\}$. If
        \[
        N=\Omega\left(\max\{\gamma^2\rho^2,1\}\xi^{-2}\mathrm{polylog}(K\xi^{-1}\eta^{-1})\right)\,,
        \]
        then
        \begin{equation}\label{eqn:exp_expectation_difference}
        \begin{aligned}
        &\mathbb{P}\left(\sup_{r_k\in R_k,\theta_k\in I_k}|E_{p,r}\left(\vr,\vtheta\right)-E_{p,r}(\{(p_{m_k},\lambda_{m_k})\}^{K}_{k=1})|\geq \rho\xi\right)\leq \eta\\
        &\mathbb{P}\left(\sup_{r_k\in R_k,\theta_k\in I_k}|E_{r,r}\left(\vr,\vtheta\right)-E_{r,r}(\{(p_{m_k},\lambda_{m_k})\}^{K}_{k=1})|\geq \rho\xi\right)\leq \eta\\
        &\mathbb{P}\left(\sup_{r_k\in R_k,\theta_k\in I_k}|E_{Z}\left(\vr,\vtheta\right)-E_{Z}(\{(p_{m_k},\lambda_{m_k})\}^{K}_{k=1})|\geq \rho\xi\right)\leq \eta
        \end{aligned}
        \end{equation}
    \end{itemize}
\end{lem}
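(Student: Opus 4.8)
The plan is the standard three–move template for a uniform concentration bound: (i) pointwise concentration for a fixed $(\vr,\vtheta)$; (ii) a covering/union‑bound argument over a finite net of the compact parameter domain; (iii) Lipschitz interpolation from the net to the whole domain. I would run it for each of $E_{p,r}$, $E_{r,r}$, and $E_{r,Z}$ (the functional written $E_Z$ in the statement) in turn, since all three have the same structure. For the pointwise step, note that for fixed $(\vr,\vtheta)$ each functional in \eqref{eqn:formula_E} is an average $\frac1N\sum_{n=1}^N\xi_n$ of $N$ independent mean‑zero random variables: for $E_{p,r}$ and $E_{r,r}$ the randomness is $t_n\sim a_T$ and each centered factor $\exp(ixt_n)-F(x)$ has mean zero by the definition of $F$; for $E_{r,Z}$ the randomness is the pair $(t_n,E_n)$ and $\mathbb{E}(E_n\mid t_n)=0$. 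Using $\|\vr\|_1\le 1$, $p_m\ge 0$, $\sum_m p_m\le 1$, and $|E_n|\le 3$, every $\xi_n$ is bounded by an absolute constant, so Hoeffding's inequality applied to real and imaginary parts gives $\mathbb{P}(|E_\bullet(\vr,\vtheta)|\ge\xi)\le 2\exp(-\Omega(N\xi^2))$.

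Next I would make this uniform over $\{\|\vr\|_1\le 1,\ \theta_k\in I_k\}$. The map $(\vr,\vtheta)\mapsto E_\bullet(\vr,\vtheta)$ is Lipschitz on this domain: with constant $O(\gamma T)$ in each $\theta_k$, because on the support of $a_T$ one has $|t_n|\le\gamma T$, $|\partial_\theta\exp(i\theta t)|=|t|$, and $F$ is itself $O(T)$‑Lipschitz; and with constant $O(1)$ in $\vr$, by the affine dependence and bounded coefficients. Taking a $\delta$‑net of the domain with spacing $\Theta(\xi/(\gamma T))$ in the $\theta$‑coordinates and $\Theta(\xi)$ in $\vr$, its cardinality is $\big(\prod_k|I_k|/\delta\big)\cdot(O(1)/\xi)^{O(K)}$; since $T|I_k|\le 2\rho$, the $\theta$‑factor is $(O(\gamma\rho/\xi))^{K}$, so $\log(\mathrm{card})=O(\mathrm{polylog}(K\gamma\rho\xi^{-1}))$. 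A union bound of the pointwise estimate over the net, together with the Lipschitz bound to pass from net points to arbitrary points, then yields \eqref{eqn:exp_expectation_error} once $N=\Omega(\max\{\gamma^2\rho^2,1\}\xi^{-2}\mathrm{polylog}(K\xi^{-1}\eta^{-1}))$; the factor $\max\{\gamma^2\rho^2,1\}$ is what is needed so that the interpolation error is $O(\xi)$ uniformly, and the polylog absorbs $\log(\mathrm{card})$ and $\log(1/\eta)$.

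The more delicate part is \eqref{eqn:exp_expectation_difference}, with its improved factor $\rho$ on the right‑hand side; this is exactly the estimate that lets Proposition \ref{prop:single_op_problem} achieve the $\Or(q^{-2-o(1)})$ sample complexity instead of the $\Or(q^{-4})$ one obtained by bounding the two error terms separately. Here I would analyze the \emph{difference} random variable directly. For $E_{p,r}$, write $r_k g_n(\theta_k-\lambda_m)-p_{m_k}g_n(\lambda_{m_k}-\lambda_m)=(r_k-p_{m_k})g_n(\theta_k-\lambda_m)+p_{m_k}\big(g_n(\theta_k-\lambda_m)-g_n(\lambda_{m_k}-\lambda_m)\big)$ with $g_n(x)=\exp(ixt_n)-F(x)$, and use $|r_k-p_{m_k}|\le\rho\,p_{m_k}$ together with $|\theta_k-\lambda_{m_k}|\le|I_k|\le 2\rho/T$ and the $O(\gamma T)$‑Lipschitz bound on $g_n$; then each single‑$n$ summand of the difference is $O(\gamma\rho)$ in magnitude and still mean zero, and an entirely analogous decomposition handles $E_{r,r}$ and $E_{r,Z}$ (for the latter one also uses $|E_n|\le 3$). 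Hoeffding on this scaled‑down summand gives deviation $O(\gamma\rho\sqrt{\log/N})$, and re‑running the net/interpolation argument — but now laying the net out in the relative coordinates $(\theta_k-\lambda_{m_k})/|I_k|$ and $(r_k-p_{m_k})/(\rho\,p_{m_k})$, so that the deterministic interpolation error also carries the factor $\rho$ — produces \eqref{eqn:exp_expectation_difference} under the same $N$.

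I expect this last move, for the difference regime, to be the main technical obstacle: one must set up the net in the correct relative coordinates so that \emph{both} the stochastic fluctuation of the difference variable and the deterministic interpolation error scale with $\rho$, while still keeping the logarithm of the net cardinality inside the polylog factor. Everything else is routine bookkeeping with Hoeffding's inequality and with the Lipschitz constants $O(\gamma T)$ and $O(1)$.
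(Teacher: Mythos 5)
Your skeleton (pointwise Hoeffding for fixed parameters, Lipschitz constant $O(\gamma T)$ in $\theta$, a net plus interpolation, and for the second bullet the decomposition of the difference into $(r_k-p_{m_k})\cdot(\text{centered term})+p_{m_k}\cdot(\text{Lipschitz increment})$ with summands of size $O(\gamma\rho)$) matches the paper's argument closely. But there is a genuine quantitative gap in your uniformity step: you lay a product net over the \emph{entire} $(\vr,\vtheta)$ domain, which is $\Theta(K)$-dimensional, and then claim $\log(\mathrm{card})=O(\mathrm{polylog}(K\gamma\rho\xi^{-1}))$. That is false: the cardinality of your net is of the form $\bigl(O(\gamma\rho/\xi)\bigr)^{K}\cdot\bigl(O(1/\xi)\bigr)^{O(K)}$, whose logarithm is $\Theta\bigl(K\log(\gamma\rho\xi^{-1})\bigr)$ — linear in $K$, not polylogarithmic — and no net of the $\ell_1$-ball in $\CC^K$ at scale $\xi$ can do better. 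Feeding this into the union bound forces $N=\Omega\bigl(K\max\{\gamma^2\rho^2,1\}\xi^{-2}\,\mathrm{polylog}(\cdot)\bigr)$, which is strictly weaker than the stated $N=\Omega\bigl(\max\{\gamma^2\rho^2,1\}\xi^{-2}\mathrm{polylog}(K\xi^{-1}\eta^{-1})\bigr)$, and this extra factor of $K$ would propagate into Proposition \ref{prop:single_op_problem} and Theorem \ref{thm:gaussian}, whose constants are claimed to depend on $K$ only through $\log K$.

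The missing idea is that no net over $\vr$ (and no product net over the $\theta_k$'s) is needed at all. Since $E_{p,r}(\vr,\vtheta)=2\Re\sum_k r_k G(\theta_k)$ with the scalar random function $G(\theta)=\frac1N\sum_n\sum_m p_m e^{i(\theta-\lambda_m)t_n}-\sum_m p_mF(\theta-\lambda_m)$, the constraint $\|\vr\|_1\le1$ gives $\sup_{\|\vr\|_1\le1,\theta_k\in I_k}|E_{p,r}|\le 2\max_k\sup_{\theta\in I_k}|G(\theta)|$ by $\ell_1$--$\ell_\infty$ duality. So it suffices to control $K$ \emph{one-dimensional} suprema of $G$, each by Hoeffding plus a one-dimensional net of $I_k$ (of size $\mathrm{poly}(\rho,\sqrt N)$, independent of $K$), and then union bound over $k$ — the $K$-dependence enters only as $\log K$ inside the polylog. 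The same decoupling handles $E_{r,r}$ (with $K^2$ difference intervals, still only $2\log K$) and $E_{r,Z}$, and in the second bullet it applies to the relative quantities $(r_k-p_{m_k})G(\theta_k)$ and $p_{m_k}\bigl(G(\theta_k)-G(\lambda_{m_k})\bigr)$ using $\sum_k p_{m_k}\le 1$ and $|r_k-p_{m_k}|\le 2\rho\,p_{m_k}$. With that replacement of your covering step, the rest of your proposal goes through and coincides with the paper's proof.
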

\begin{proof}[Proof of Lemma \ref{lem:expectation_error}] We start by proving \eqref{eqn:exp_expectation_error}. We will only prove the inequalities for $E_{p,r}$, but note that the other two can be shown similarly. Define
\[
G(\theta)=\frac{1}{N}\sum^N_{n=1}\left(\sum^M_{m=1}p_m\exp(i(\theta- \lambda_m) t_n)\right)-\left(\sum^M_{m=1}p_m F(\theta-\lambda_m)\right)\,.
\]
For the truncated filter, since $t_n\leq \gamma T$, we have
\[
|G(\theta_1)-G(\theta_2)|\leq  \gamma T|\theta_1-\theta_2|\,,
\]
which implies $G$ is a $\gamma T$-Lipschitz function. Because $|F(\theta-\lambda_m)|\leq 1$, using Hoeffding's inequality, for fixed $\theta$, we have
\[
\mathbb{P}\left(|G(\theta)|\geq \xi\right)\leq 4\exp\left(-\frac{N\xi^2}{32}\right)\,.
\]
Combining this with the fact that $G$ is a $\gamma T$-Lipschitz function, we have
\[
\mathbb{P}\left(\sup_{\theta\in I_k}|G(\theta)|\geq \xi+\gamma T\epsilon\right)\leq \frac{8\rho}{T\epsilon}\exp\left(-\frac{N\xi^2}{32}\right)\,.
\]
Choosing $\epsilon=\frac{\rho}{T\sqrt{N}}$, $\xi=\frac{4\sqrt{2}}{\sqrt{N}}\log^{1/2}(8\sqrt{N}K/\eta)$, 
\[
\mathbb{P}\left(\sup_{\theta\in I_k}|G(\theta)|\geq \left(4\sqrt{2}\log^{1/2}(8\sqrt{N}K/\eta)+\gamma\rho\right)\frac{1}{\sqrt{N}}\right)\leq \frac{\eta}{K}\,,
\]
which implies the first inequality of \eqref{eqn:exp_expectation_error} using $\sum |r_k|\leq 1$. 

We proceed to prove \eqref{eqn:exp_expectation_difference}. As before, we only show the proof of the first inequality in \eqref{eqn:exp_expectation_difference}. Fixed $1\leq k\leq K$, when $r_k\in R_k$ and $\theta_k\in I_k$, we have
\[
r_kG(\theta_k)-p_{m_k}G(\lambda_{m_k})=(r_k-p_{m_k})G(\theta_k)+p_{m_k}(G(\theta_k)-G( \lambda_{m_k}))\,.
\]
For the first term, we have $|r_k-p_{m_k}|\leq 2\rho p_{m_k}$ and
\[
\mathbb{P}\left(\sup_{\theta_k\in I_k}\left|G(\theta_k)\right|\geq \xi/2\right)\leq \frac{\eta}{2K}\,,
\]
according to the previous proof. This implies
\begin{equation}\label{eqn:different_1}
\mathbb{P}\left(\sup_{r_k\in R_k,\theta_k\in I_k}\sum^{K}_{k=1}|r_k-p_{m_k}|\left|G(\theta_k)\right|\geq \rho\xi/2\right)\leq \frac{\eta}{2}\,.
\end{equation}

For the second term, we first notice that
\[
\left|\sum^M_{m=1}p_m\exp(i(\theta_k- \lambda_m) t_n)-\sum^M_{m=1}p_m\exp(i(\lambda_{m_k}- \lambda_m) t_n)\right|\leq \gamma T|\theta_k-\lambda_{m_k}|\leq 2\rho
\]
and $G(\theta)$ is a $\gamma T$-Lipschitz function. Then, similar to before, we have
\[
\mathbb{P}\left(\sup_{\theta_k\in I_k}\sum^{K}_{k=1}\left|G(\theta_k)-G(\lambda_{m_k})\right|\geq \rho\xi/2\right)\leq \frac{\eta}{2K}\,.
\]
Since $\sum_k p_{m_k}\leq 1$, we have
\[
\mathbb{P}\left(\sup_{r_k\in R_k,\theta_k\in I_k}\sum^{K}_{k=1}\left|p_{k}(G(\theta_k)-G(\lambda_{m_k}))\right|\geq \rho\xi/2\right)\leq \frac{\eta}{2}\,.
\]
Combining this inequality with \eqref{eqn:different_1}, we prove the first inequality of \eqref{eqn:exp_expectation_difference}. 
\end{proof}

\end{document}